\newcounter{gap}
\newcommand{\n}{

\noindent \hspace{\thegap em}
}
\newcommand{\Procedure}{\noindent {\bf procedure }\addtocounter{gap}{2}}
\newcommand{\If}{{\bf if }}
\newcommand{\Begin}{{\bf begin}  \addtocounter{gap}{2}}
\newcommand{\Else}{{\bf else }}
\newcommand{\Do}{\addtocounter{gap}{2}{\bf do}}
\newcommand{\While}{{\bf while }}
\newcommand{\End}{{\bf end}}
\newcommand{\Comment}[1]{
{\it 
/*\\
\hspace{2em}#1\\
*/} \medskip
}
\newcommand{\EndLoop}{\addtocounter{gap}{-2}}
\newcommand{\algfigure}[4]{
\setcounter{gap}{0}
\renewcommand{\baselinestretch}{0.9}
\begin{figure}[#2]
\begin{center}
\rule{#1 \linewidth}{2.5pt}
\parbox{#1 \linewidth}{
\medskip
#3
\medskip
}
\rule{#1 \linewidth}{1.5pt}
\end{center}
\caption{#4}
\end{figure}
\renewcommand{\baselinestretch}{1.11111111111111111111111111111111}
}
\newtheorem{theorem}{Theorem}[section]
\newtheorem{lemma}{Lemma}[section]
\newtheorem{proposition}{Proposition}[section]
\newtheorem{corollary}{Corollary}[section]
\newtheorem{property}{Property}[section]
\newtheorem{definition}{Definition}[section]
\newtheorem{claim}{Claim}[section]
\newenvironment{proof}[1][Proof:]{\begin{trivlist}
\item[\hskip \labelsep {\bfseries #1}]}{\end{trivlist}}
\newcounter{contribution}
\newcommand{\qed}{\hfill \rule{2.5mm}{2.5mm}}
\begin{document}

\title{Practical and theoretical improvements for bipartite matching using the pseudoflow algorithm}
\author{Bala G. Chandran\\
\small Analytics Operations Engineering, Inc.\\
\small Boston, MA 02109. \\
\small {\tt bchandran@nltx.com}\medskip  \\
Dorit S. Hochbaum\\
\small Department of Industrial Engineering and Operations Research and\\
\small Walter A.\ Haas School of Business \\
\small University of California \\
\small Berkeley, CA 94720.\\
\small {\tt hochbaum@ieor.berkeley.edu}}

\date{}
\renewcommand{\baselinestretch}{1}

\maketitle

\renewcommand{\baselinestretch}{1.2}

\begin{abstract}
We show that the pseudoflow algorithm for maximum flow is particularly efficient for the bipartite matching problem both in theory and in practice. We develop several implementations of the pseudoflow algorithm for bipartite matching, and compare them over a wide set of benchmark instances to state-of-the-art implementations of push-relabel and augmenting path algorithms that are specifically designed to solve these problems.  The experiments show that the pseudoflow variants are in most cases faster than the other algorithms.

We also show that one particular implementation---the matching pseudoflow algorithm---is theoretically efficient.  For a graph with $n$ nodes, $m$ arcs, $n_1$ the size of the smaller set in the bipartition, and the maximum matching value $\kappa \leq n_1$, the algorithm's complexity given input in the form of adjacency lists is \mbox{$O\left (\min \{ n_1\kappa,m\} + \sqrt{\kappa}\min \{\kappa^2,m\}\right )$}.  Similar algorithmic ideas are shown to work for an adaptation of Hopcroft and Karp's bipartite matching algorithm with the same complexity.  Using boolean operations on words of size $\lambda$, the complexity of the pseudoflow algorithm is further improved to $O\left (\min \{n_1\kappa, \frac{n_1n_2}{\lambda}, m\} + \kappa^2 + \frac{\kappa^{2.5}}{\lambda}\right )$.  This run time is faster than for previous algorithms such as Cheriyan and Mehlhorn's algorithm of complexity $O\left(\frac{n^{2.5}}{\lambda}\right)$.
\end{abstract}

\section{Introduction}

The bipartite matching problem is to find, in a given bipartite graph $B=(V_1;V_2,E)$, a matching containing a maximum number of edges. That is, a collection of edges $M\subseteq E$ such that each node is adjacent to at most one of the edges in the matching $M$.  For a survey on early literature on this problem the reader is referred to the book by Lawler \cite{Law76}, Chapter 5.

The bipartite matching problem is equivalent to the maximum flow problem on an associated {\em simple} bipartite network. (A network is said to be simple if every node has a throughput capacity of 1 unit of flow.) Therefore, any maximum flow algorithm can be used to solve the bipartite matching problem. The network is constructed by adding source and sink nodes $s$ and $t$, linking the source to all nodes of $V_1$ with arcs of capacity $1$ and all nodes of $V_2$ to the sink with arcs of capacity $1$, and directing all edges in the bipartite graph from $V_1$ to $V_2$ with capacity $\geq 1$. Such a network is shown in Figure \ref{fig:simple}. The maximum $s,t$-flow on this associated network corresponds to a solution to the maximum matching problem: an edge $[i,j]: i\in V_1, j\in V_2$ is in the matching if and only if the corresponding arc $(i,j)$ has a flow of one unit on it.

\begin{figure}[ht]
\centerline{\includegraphics[width=0.4\linewidth]{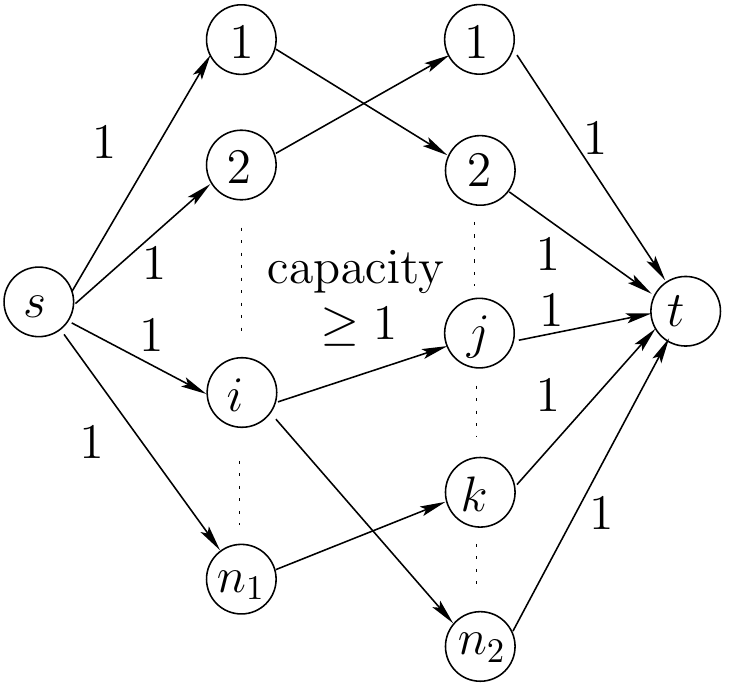}}
\caption{\label{fig:simple}Flow graph for bipartite matching.}
\end{figure}

Other than bipartite matching there are other well-known problems that are solved as maximum flow on simple bipartite networks.  These include the vertex cover problem on a bipartite graph and the independent set problem, also known as the stable set problem, on a bipartite graph. We refer to the maximum flow algorithm for simple bipartite graphs as the bipartite matching algorithm although it applies to these problems as well.

Dinic's \cite{Din70} maximum flow algorithm is particularly efficient for simple networks as demonstrated by Even and Tarjan \cite{EveT75}. For bipartite graphs, the running time is $O(\sqrt{n_1}m)$, where $n_1 = |V_1|$ (w.l.o.g. $n_1 \leq n_2=|V_2|$).  Hopcroft and Karp \cite{HopK73} proposed an algorithm for bipartite matching with complexity $O(\sqrt{\kappa}m)$, where $\kappa$ is the cardinality of the maximum matching which is bounded by $n_1$. Their algorithm is, in essence, the same as Dinic's algorithm adapted to bipartite matching. Feder and Motwani \cite{FedM91} obtained a bound of $O(\sqrt{n}m^{*})$ for the bipartite matching algorithm that relies on speeding up Dinic's algorithm using graph compression. In the complexity expression, $m^{*}$ is the number of edges in the compressed graph, which is less than $m$ by about a factor of $\log n$.  Using boolean word operations on $\lambda$-bit words, Cheriyan and Melhorn \cite{CheM96} obtained a bound of $O(\frac{n^{2.5}}{\lambda})$ while Alt et al. \cite{AltBMP91} obtained a bound of $O(n^{1.5}\sqrt{\frac{m}{\lambda}})$ (which is better than $O(\frac{n^{2.5}}{\lambda})$ for sparse graphs and better than $O(\sqrt{n}m)$ for dense graphs).  Mucha and Sankowski \cite{MucS04} described a randomized algorithm for matching in general (non-bipartite) graphs that runs in $O(n^\omega)$, where $\omega=2.38$ is the exponent of the best known matrix multiplication algorithm.

However, the theoretically efficient algorithms listed above tend to perform poorly in practice. Setubal \cite{Set93, Set96} showed that in practice, implementations of the push-relabel algorithm of Goldberg and Tarjan \cite{GolT88} were faster than those of Dinic's as well as the algorithm of Alt et al.  Cherkassky et al. \cite{CheGMSS98} developed several implementations of push-relabel and performed extensive experiments on several benchmark instances, showing push-relabel to be the fastest in practice.

In this paper, we apply the pseudoflow algorithm of Hochbaum \cite{Hoc97, Hoc07} to bipartite matching and examine its theoretical and practical performance.  The pseudoflow algorithm was recently shown by Chandran and Hochbaum \cite{ChaH07} to be the fastest algorithm in practice for the maximum flow problem, and by Hochbaum and Orlin to be as efficient as the push-relabel algorithm in theory; hence, it is reasonable to suspect that the pseudoflow algorithm is efficient for bipartite matching as well.  The major contributions of our work are as follows.
\begin{enumerate}
\item We develop several implementations of the pseudoflow algorithm specifically for bipartite matching and show that are faster than state-of-the-art implementations of push-relabel for bipartite matching.  We use the results of the experiments to gain insights into the differences between the pseudoflow and push-relabel algorithms. 
\item We show that a variant of the pseudoflow algorithm, called the {\sf matching-pseudoflow} algorithm, runs on a bipartite simple network in time $O(\min\{n_1\kappa, m\}+\sqrt{\kappa}\min\{\kappa^2, m\})$. We then show that the insights generated from this approach allow to modify either Hopcroft and Karp's algorithm or the push-relabel maximum flow algorithm and achieve the same complexity.  Using boolean operations on $\lambda$-bit words, we show that the complexity of the {\sf matching-pseudoflow} algorithm can be further improved to $O\left (\min \left \{ m,n_1\kappa, \frac{n_1n_2}{\lambda} \right \} + \kappa^2 +\frac{\kappa^{2.5}} {\lambda}\right )$.

Since the {\sf matching-pseudoflow} algorithm could be viewed as a superior implementation of Dinic's algorithm, we compare the performance of the {\sf matching-pseudoflow} to the best-known implementation of Dinic's algorithm to understand and quantify the key differences between the two algorithms. 
\end{enumerate}

\section{Description of the pseudoflow algorithm}
\label{sec:pseudoflow}

The pseudoflow algorithm and its properties are described in detail in
Hochbaum \cite{Hoc07}.  The description is repeated here for
completeness.

\subsection{Preliminaries}
\label{section:prelims}

Let $G_{st}$ be a graph $(V\cup\{s,t\}, A\cup A_s \cup A_t)$, where $A_s$ and $A_t$ are the source-adjacent and sink-adjacent arcs respectively.

A flow vector $f = \{f_{ij}\}_{(i,j) \in A\cup A_s \cup A_t}$ is said to be {\em feasible} if it satisfies
\begin{enumerate}
\item Flow balance constraints: for each $i \in V$, $\sum_{(k,i)\in A\cup A_s \cup A_t} f_{ki} = \sum_{(i,j)\in A\cup A_s \cup A_t} f_{ij}$ (i.e., inflow($i$) = outflow($i$)), and
\item Capacity constraints: the flow value is between the lower bound and upper bound capacity of the arc, i.e.,  $\ell_{ij} \leq f_{ij} \leq u_{ij}$.  Without loss of generality, we assume henceforth that $\ell_{ij} = 0$ (e.g., Ahuja at al. \cite{AhuMO93}, pages 191--196).
\end{enumerate}

A {\em maximum flow} is a feasible flow $f^*$ that maximizes the flow out of the source (or into the sink).  The value of the maximum flow is $\sum_{(s,i) \in A_s} f^*_{si}$.

Given a flow vector $f$ in $G_{st}$ that is feasible, the {\em residual graph} $G^f=(V \cup \{s,t\}, A^f)$ is constructed as follows: for each arc $(i,j) \in A \cup A_s \cup A_t$ with flow $f_{ij}$ and capacity $c_{ij}$, $A^f$ contains two arcs: $(i,j)$ with capacity $c_{ij} - f_{ij}$ and $(j,i)$ with capacity $f_{ij}$.  The capacities of arcs in $A^f$ are referred to as the residual capacities with respect to flow $f$, and are denoted by $c^f$.  An {\em $s,t$-cut} in the graph is a bi-partition of nodes into two disjoint sets -- one containing the source and the other containing the sink.  One property of the residual graph is that the bipartition of nodes of the minimum $s,t$-cut of $G^f$ is the same as that in $G$ (e.g., Ahuja at al. \cite{AhuMO93}, pages 44--46).

A {\em pseudoflow} $f$ is a flow vector that satisfies capacity constraints, but may violate flow balance at any node. The {\em excess} of a node $v\in V$ is the inflow into that node minus the outflow denoted by $e(v) = \sum_{(u,v)\in A\cup A_s\cup A_t} f_{uv} - \sum_{(v,w)\in A\cup A_s\cup A_t} f_{vw}$. A negative excess is called a {\em deficit}.

A tree $T = (V, E)$ is a connected, undirected, acyclic graph.  A rooted tree has a distinguished node $w$ called the root. For each edge $[u,v]$, $u$ is said to be the {\em parent} of $v$ if $u$ is closer to the root than $v$, and is denoted by ${\rm parent}(v)$. Node $v$ is then called the {\em child} of $u$, and is denoted by child($u$). The only node in the tree that does not have a parent is the root. A node $v$ is said to be an {\em ancestor} of a node $u$ if $v$ lies along the unique path from $v$ to the root; node $u$ is then said to be a {\em descendant} of node $v$. For convenience, we will assume that the tree points topologically ``downward'' with the root at the ``top'' of the tree, and each node ``below'' its ancestors. A {\em branch} rooted at some node $r$ is a sub-graph of the tree that contains $r$ and all its descendants in the tree.  A rooted {\em sub-tree} is a connected sub-graph of the given tree (unlike a branch, it need not contain all the descendants of its root). 

An arc that carries a flow equal to its upper bound is said to be {\em saturated}.  The pseudoflow algorithm maintains a flow that saturates source-adjacent and sink-adjacent arcs throughout the algorithm. Consequently, the source and sink have no further role in the algorithm and are contracted into a single node $r$ that ``keeps track'' of the excesses and deficits of the nodes in $V$ by adding excess and deficit arcs as follows: For each node $v \in V$ with positive excess, we add to the graph an arc $(v,r)$ called an {\em excess arc}, and for each node $u \in V$ with negative excess we add an arc $(r,u)$ called a {\em deficit arc}. The network thus obtained is referred to as the extended network $G^{ext} = (V \cup \{r\}, A \cup A_r)$, where $A_r$ is the set of excess and deficit arcs.

For a tree $T$, an arc $(u,v)$ is said to be {\em in-tree} if the edge $[u,v]\in T$. Arcs that are not in tree are said to be {\em out-of-tree}.  Given a pseudoflow $f$ that saturates $A_s$ and $A_t$, a {\em normalized tree} is a tree in $G^{ext}$ rooted at $r$ that satisfies the following three properties. 
\begin{property}
\label{property:normtreeroot}
The nodes that do not satisfy flow balance constraints are the children of $r$ and are the roots of their respective branches.
\end{property}
\begin{property}
\label{property:normtreeoutoftree}
The pseudoflow values of $f$ on out-of-tree arcs are at the lower or upper bound capacities of the respective arcs.
\end{property}
\begin{property}
\label{property:normtreedownward}
In every branch, all downward residual capacities are strictly positive.
\end{property}
A schematic description of a normalized tree is shown in Figure \ref{Figure:normtree}.

\begin{figure}
\centerline{\includegraphics[width=0.7\linewidth]{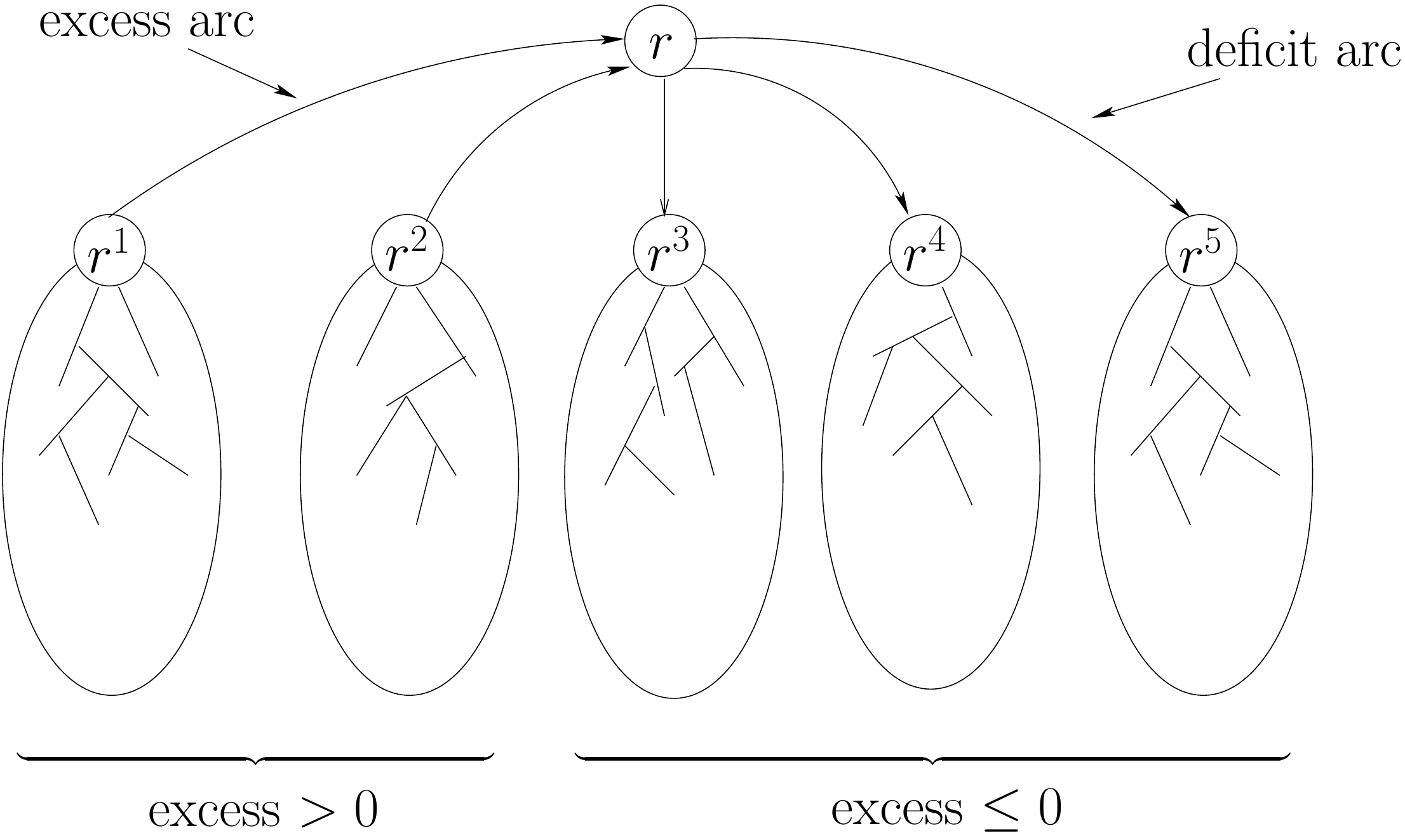}}
\caption{\label{Figure:normtree}A schematic description of a normalized tree. Each $r^i$ is the root of a branch.}
\end{figure}

The pseudoflow algorithm starts with any normalized tree and an associated pseudoflow.  The generic initialization is the {\em simple} initialization: source-adjacent and sink-adjacent arcs are saturated while all other arcs have zero flow.

If a node $v$ is both source-adjacent and sink-adjacent, then at least one of the arcs $(s,v)$ or $(v,t)$ can be pre-processed out of the graph by sending a flow of $\min\{c_{sv}, c_{vt}\}$ along the path $s\rightarrow v \rightarrow t$. This flow eliminates at least one of the arcs $(s,v)$ and $(v,t)$ in the residual graph. We henceforth assume w.l.o.g. that no node is both source-adjacent and sink-adjacent.

The simple initialization creates a set of source-adjacent nodes with excess, and a set of sink-adjacent nodes with deficit.  Since all other arcs have zero flow, they are all out-of-tree arcs. Thus, each node is a singleton branch for which it serves as the root, even if it is {\em balanced} (with $0$-deficit). The simple initialization results in a simple normalized tree shown in Figure \ref{Figure:simpletree}.

\begin{figure}
\centerline{\includegraphics[width=0.6\linewidth]{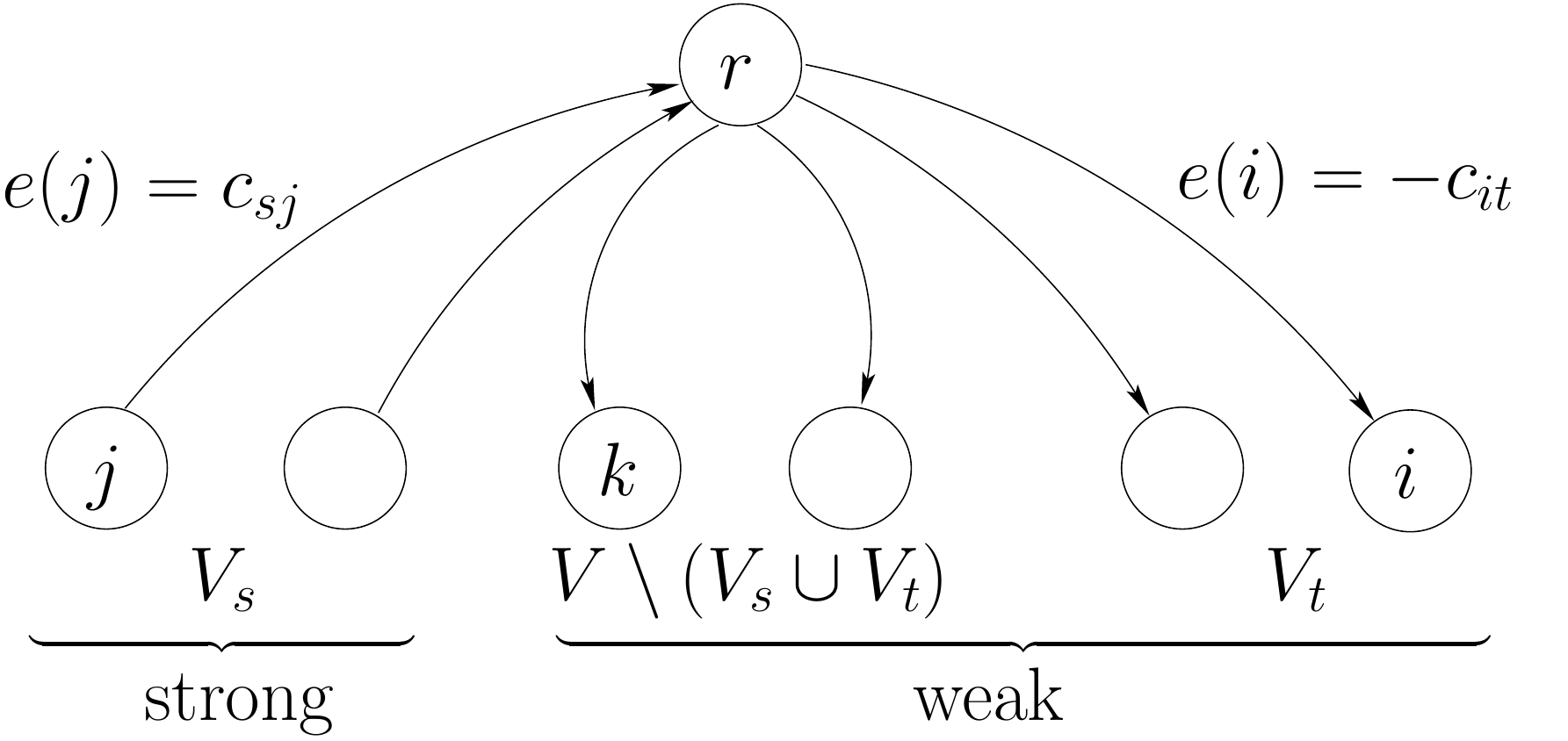}}
\caption{\label{Figure:simpletree}A simple normalized tree.}
\end{figure}

\subsection{A labeling pseudoflow algorithm}
\label{Section:labelingalgo}

In the labeling pseudoflow algorithm, all nodes carry a label $\ell _v$ for all $v\in V$. Initially, all labels are set to the value $1$. An iteration of the algorithm consists of identifying a branch with root carrying strictly positive excess, and attempting to push this excess towards the sink through the residual network. The process of pushing excesses towards the sink is performed via a {\em merger}. Given a branch with root of label $\ell$ and positive excess, a merger operation consists of identifying a {\em merger arc} with positive residual capacity from a node of label $\ell$ within the branch to some node of label $\ell-1$ in the graph.

A {\em relabeling} of a node is the increase of a node's label by one unit.  A node of label $\ell$ is relabeled to $\ell +1$ if there is no merger arc in the residual graph to a neighbor of label $\ell-1$, and if all its children in the branch have label at least $\ell +1$. With these rules, the labels satisfy the following properties.

\begin{lemma}[Hochbaum \cite{Hoc97, Hoc07}]
\label{lem:labels}
For the labeling pseudoflow algorithm, the labels satisfy:\\
(a) For every residual arc $(u,v)$, $\ell _u\leq \ell _v +1$.\\
(b) The labels of nodes are monotone nondecreasing in the downwards direction in each branch.
\end{lemma}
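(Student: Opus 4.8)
The plan is to prove both parts \emph{simultaneously} by induction on the sequence of operations the algorithm performs, since the two invariants are coupled: part~(b) is what controls which residual arcs a merger is allowed to create, while part~(a) is what bounds the label gap across the in-tree arcs on which flow gets pushed. The base case is the simple initialization, where every label equals $1$; there part~(a) holds because $\ell_u = 1 \le \ell_v + 1$ for every residual arc $(u,v)$, and part~(b) holds vacuously since every branch is a singleton. Only two operations can alter a label, the residual graph, or the tree structure, namely a relabeling and a merger, so I would verify that each preserves both (a) and (b).

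For a relabeling, suppose a node $w$ is raised from $\ell$ to $\ell+1$. By the relabeling rule there is no residual arc from $w$ to a neighbor of label $\ell-1$, so every residual arc $(w,v)$ leaving $w$ has $\ell_v \ge \ell$; hence after the increase $\ell_w' = \ell+1 \le \ell_v + 1$, and (a) is maintained for arcs out of $w$. For a residual arc $(x,w)$ entering $w$, increasing $\ell_w$ only relaxes $\ell_x \le \ell_w + 1$, so (a) still holds. For (b), the rule fires only when all children of $w$ already carry label $\ge \ell+1$, which preserves downward monotonicity below $w$, and the parent of $w$ had label $\le \ell \le \ell+1$ by the inductive hypothesis, so monotonicity above $w$ survives as well.

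For a merger, let the strong branch have root of label $\ell$ and merger arc $(u,v)$ with $\ell_u = \ell$ and $\ell_v = \ell-1$. By the inductive form of part~(b) the root is the minimum-label node of its branch, so the tree path from the root down to $u$ is label-monotone from $\ell$ to $\ell$ and therefore consists entirely of label-$\ell$ nodes. The merger pushes flow only along this path, across the merger arc, and (in the path to a deficit) up the tree from $v$ toward the root of its branch. Each arc so used is either the merger arc, with a label drop of exactly one, or an in-tree arc, which by Property~\ref{property:normtreedownward} together with the inductive hypothesis for part~(a) joins two nodes whose labels differ by at most one. Since pushing can only create the reverse of an arc already present (and only delete forward arcs), every newly created residual arc joins nodes with $|\ell_a - \ell_b| \le 1$, so (a) is preserved. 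For (b), the merger re-roots the strong branch at $u$ and hangs $u$ below $v$: the reversed path still carries the constant label $\ell$, the new junction satisfies $\ell_v = \ell-1 \le \ell = \ell_u$, and every hanging subtree keeps its internal labels and order, so downward monotonicity is restored throughout.

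The main obstacle is the bottleneck (split) case of the merger, where the pushed amount is capped by an arc reaching zero residual capacity: that arc leaves the tree, the leftover excess or deficit is stranded at the adjacent node, and a new branch rooted at $r$ is created. I would check that this breaks neither invariant --- the saturated arc's reverse obeys (a) by the same label-gap bound, each fragment of the split inherits downward monotonicity as a sub-path of the original, and the newly exposed root legitimately becomes a child of $r$ in accordance with Property~\ref{property:normtreeroot}. The remaining delicate point is purely bookkeeping: confirming that the re-rooting reverses exactly the root-to-$u$ path and leaves every other parent--child relation untouched, so that no unexamined arc can silently violate (a) or (b). Once these cases are dispatched, the simultaneous induction closes and both parts follow.
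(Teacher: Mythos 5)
The paper does not actually prove this lemma: it is imported verbatim from Hochbaum \cite{Hoc97, Hoc07}, so there is no in-paper argument to compare yours against. Judged on its own, your simultaneous induction is the right reconstruction of the standard argument and is essentially correct: the base case, the relabel case (where you correctly combine the ``no merger arc to label $\ell-1$'' condition with the inductive hypothesis for (a) to conclude all residual out-neighbors have label $\geq \ell$, and use the ``all children have label $\geq \ell+1$'' condition for (b)), and the merger case (where the key observations are that the root-to-$u$ path is constant-label because the root attains the branch minimum by (b), that the merger arc drops the label by exactly one, and that re-rooting only reverses that path) all check out, as does your treatment of splits. The one place where the wording is looser than the argument it rests on is the sentence ``pushing can only create the reverse of an arc already present \dots so every newly created residual arc joins nodes with $|\ell_a-\ell_b|\le 1$'': being the reverse of an existing residual arc gives only $\ell_a \le \ell_b+1$, not the symmetric bound, so this remark alone would not close part (a). It is rescued by your preceding case analysis --- on the from-branch side all labels equal $\ell$; across the merger arc the new reverse arc goes from $\ell-1$ to $\ell$; and on the to-branch side the newly strengthened residual arc is the downward arc $(p,c)$, for which $\ell_p\le\ell_c$ already holds by (b) (equivalently, it was residual before the push by Property \ref{property:normtreedownward}, so (a) held for it already). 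I would state that three-way split explicitly as the justification rather than the blanket ``reverse of an existing arc'' claim; with that tightening the proof is complete.
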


\begin{corollary}[Hochbaum \cite{Hoc97, Hoc07}]
\label{cor:res-path} The label assigned to a node throughout the labeling pseudoflow algorithm does not exceed the length of a shortest path to a sink-adjacent node in the residual graph plus the label of the sink-adjacent node.  More generally, the positive difference in labels of two nodes does not exceed the length of the residual path between them.
\end{corollary}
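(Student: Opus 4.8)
The plan is to derive both assertions from part~(a) of Lemma~\ref{lem:labels} by a telescoping argument along a residual path. The key observation is that part~(a) says that along any single residual arc $(u,v)$ the label can drop by at most one, i.e.\ $\ell_u \leq \ell_v + 1$, while it may increase arbitrarily. Chaining this inequality along a path should yield a bound on the label at the start of the path in terms of the label at its end plus the number of arcs traversed. Note that only part~(a) is needed; the monotonicity of part~(b) plays no role here.

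First I would prove the more general ``positive difference'' statement, since the first assertion is a special case of it. Let $u = v_0, v_1, \ldots, v_k = w$ be a residual path of length $k$, so that each $(v_{i-1}, v_i)$ is a residual arc. Applying Lemma~\ref{lem:labels}(a) to each arc gives $\ell_{v_{i-1}} \leq \ell_{v_i} + 1$, and composing these inequalities telescopes to
\[
\ell_u = \ell_{v_0} \leq \ell_{v_1} + 1 \leq \ell_{v_2} + 2 \leq \cdots \leq \ell_{v_k} + k = \ell_w + k.
\]
Hence $\ell_u - \ell_w \leq k$, which is exactly the claim that the positive difference in the labels of $u$ and $w$ does not exceed the length of the residual path between them. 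Formally this is an induction on $k$: the base case $k=0$ is trivial, and the inductive step is a single application of part~(a) composed with the inductive hypothesis.

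To obtain the first assertion, I would instantiate $w$ as a sink-adjacent node $t'$ and take the residual path from $v$ to $t'$ to be a shortest one, of length $d(v,t')$. The telescoped inequality then reads $\ell_v \leq \ell_{t'} + d(v,t')$; that is, the label of $v$ is bounded above by the length of the shortest residual path from $v$ to a sink-adjacent node plus that node's label, as claimed.

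The argument is entirely mechanical once the direction of Lemma~\ref{lem:labels}(a) is fixed, so I do not anticipate a genuine obstacle. The only point requiring care is the orientation: part~(a) controls how much a label can \emph{decrease} when moving forward along a residual arc, not how much it can increase. Consequently the path must be traversed starting from the node whose label we wish to bound and ending at the sink-adjacent node, and the resulting inequality is an upper bound on the starting label rather than on the ending one.
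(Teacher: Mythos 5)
Your proof is correct and is exactly the intended derivation: the paper states this corollary without proof (citing Hochbaum), and the standard argument is precisely the telescoping of Lemma~\ref{lem:labels}(a) along a residual path, with the first assertion obtained by specializing the endpoint to a sink-adjacent node on a shortest residual path. Your remark about the orientation --- that part~(a) bounds the decrease of labels along a forward residual arc, so the path must start at the node being bounded --- is the one point of care, and you handle it correctly.
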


For convenience, we henceforth refer to the branch containing the tail of the merger arc as the ``from-branch'' and the one containing the head of the merger arc as the``to-branch''. Once a merger arc is identified, a merger operation is performed on the normalized tree. This consists of adding the merger arc to the normalized tree, and removing the arc from the root of the from-branch to the root of the normalized tree. The merger operation is shown in Figures \ref{Figure:beforeafter}(a) and (b). At the end of a merger, the tree is not a normalized tree since it has a non-root node carrying positive excess. The merged branch is now {\em renormalized}, a process that may create any number of branches out of the merged branch. The process of renormalization of the merged branch consists of pushing the excess of the root of the from-branch towards the root of the to-branch and updating the pseudoflows and excesses. The path from the root of the from-branch to that of the to-branch is unique since they are nodes in a connected tree. For each edge on this path, the operation of pushing the excess from the child to its parent and updating the pseudoflow on the edge is called a {\em push}. If only a part of the child's excess can be pushed to its parent due to insufficient residual capacity on that arc, the child retains some positive excess. The edge to its parent is then removed from the normalized tree and an excess arc is added for the child node making it the root (with positive excess) of a branch consisting of all nodes below it. This operation, called a {\em split}, is shown in Figures \ref{Figure:beforeafter}(b) and \ref{Figure:beforeafter}(c).

\begin{figure}
\centerline{\includegraphics[width=\linewidth]{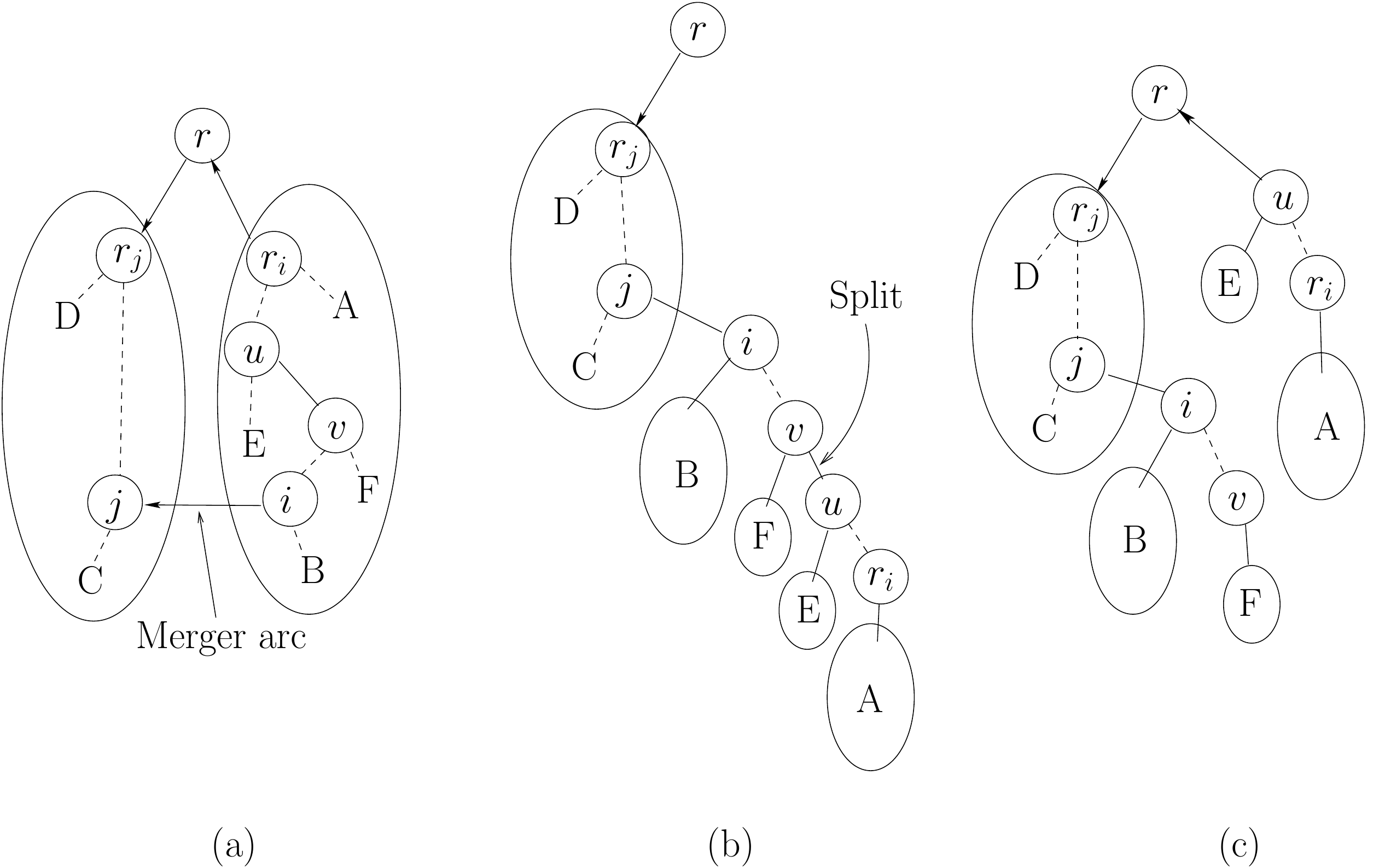}}
\vspace{0.1in}
\caption{\label{Figure:beforeafter} (a) Initial normalized tree, (b) Tree obtained after the merger, (c) Re-normalized tree after split due to insufficient residual capacity on edge $[v,u]$.}
\end{figure}

If the root of a branch is relabeled to label $n$ at some point in the algorithm, all nodes in this branch have label $n$. By Corollary \ref{cor:res-path}, this implies that all deficit nodes are unreachable from nodes of label $n$. Hence, all nodes in the branch must be in the source set of a minimum cut, and can be ignored for the remainder of the algorithm. Thus, the algorithm terminates when (i)~there are no branches with root carrying positive excess, or (ii)~all such roots have a label of $n$.

When the algorithm terminates, we obtain a normalized tree and a pseudoflow where all nodes belonging to branches with positive excess (if they exist) have label $n$. This is not a feasible flow since the normalized tree has excess and deficits. However, the normalized tree contains information regarding a minimum cut, which is stated in the following theorem.

\begin{theorem}[Hochbaum \cite{Hoc97, Hoc07}]
The source node along with all nodes of label $n$ in the normalized tree form the source set of a minimum cut while the remaining nodes form the sink set of a minimum cut.
\end{theorem}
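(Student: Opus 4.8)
The plan is to show that the set $S$ consisting of the source $s$ together with all label-$n$ nodes in the terminal normalized tree, and its complement $\bar S$ (the sink set), form a minimum $s,t$-cut. I would proceed by establishing two things: first, that the partition $(S,\bar S)$ is well-defined and that $t\in\bar S$; and second, that the capacity of this cut equals the value of a maximum flow, via a flow reconstruction and a zero-residual-capacity argument across the cut.

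\medskip

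\noindent\emph{Step 1: Characterizing the cut edges.} First I would argue that no residual arc crosses from $S$ to $\bar S$, i.e., there is no residual arc $(u,v)$ with $u$ of label $n$ (or $u=s$) and $v$ of label strictly less than $n$. This is exactly where Lemma~\ref{lem:labels}(a) does the work: for any residual arc $(u,v)$ we have $\ell_u \le \ell_v + 1$, so if $\ell_u = n$ then $\ell_v \ge n-1$. This is not yet quite strong enough, so I would refine it: by the termination condition, every label-$n$ node belongs to a branch whose root carries positive excess and has label $n$, and by Corollary~\ref{cor:res-path} no deficit node is reachable from a label-$n$ node in the residual graph. I would then use this reachability statement to conclude that any node reachable in the residual graph from a label-$n$ node must itself have label $n$ (otherwise it would give a short residual path to a node of smaller label, eventually reaching a sink-adjacent/deficit node, contradicting Corollary~\ref{cor:res-path}). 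Hence $S$ is closed under residual reachability, so every arc leaving $S$ in the residual graph has zero residual capacity.

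\medskip

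\noindent\emph{Step 2: From zero residual capacity to a minimum cut.} Next I would convert the pseudoflow into a genuine feasible flow of the same value restricted to the cut crossing. The key observation is that the source-adjacent and sink-adjacent arcs are saturated throughout, and that after termination the excesses and deficits are confined to label-$n$ branches (the excess nodes) and the sink side. I would show that every forward arc $(u,v)$ crossing from $S$ to $\bar S$ in the original network $G_{st}$ is saturated (since its residual capacity $c_{uv}-f_{uv}=0$ by Step~1), while every backward arc $(v,u)$ from $\bar S$ to $S$ carries zero flow (since its residual capacity $f_{vu}=0$ by Step~1 applied to the reverse arc). Standard flow theory (the max-flow min-cut relationship, Ahuja et al.\ \cite{AhuMO93}) then gives that the net flow across $(S,\bar S)$ equals the capacity of the cut, and that this common value is maximal. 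Since $s\in S$ and $t\in\bar S$ (the sink is not a label-$n$ node because deficit/sink-adjacent nodes are unreachable from label-$n$ nodes), $(S,\bar S)$ is an $s,t$-cut of minimum capacity.

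\medskip

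\noindent I expect the main obstacle to be Step~1: rigorously upgrading the local label inequality of Lemma~\ref{lem:labels}(a) into the global statement that $S$ is residual-closed, while correctly handling the fact that the terminal object is a pseudoflow (with residual excesses and deficits) rather than a feasible flow. In particular one must argue carefully that the excess arcs and deficit arcs in $G^{ext}$ do not create spurious residual paths that would violate the reachability conclusion of Corollary~\ref{cor:res-path}, and that contracting $s,t$ back in does not alter the cut structure (using the property noted earlier that the minimum cut bipartition of the residual graph coincides with that of $G$).
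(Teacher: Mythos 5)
You should first note that the paper does not prove this theorem at all: it is imported verbatim from Hochbaum \cite{Hoc97, Hoc07}, so your argument has to stand on its own, and it does not. The central gap is in Step 1. The claim that the set $S$ of label-$n$ nodes is closed under residual reachability does not follow from Lemma \ref{lem:labels}(a) or Corollary \ref{cor:res-path}, and is false in general for the terminal pseudoflow. Lemma \ref{lem:labels}(a) only forces the head of a residual arc leaving a label-$n$ node to have label at least $n-1$; residual arcs into label-$(n-1)$ nodes are exactly merger arcs, the relabeling rule only guarantees (at the moment a node acquires label $n$) the absence of residual arcs to nodes of label $n-2$, and termination condition (ii) stops processing strong roots of label $n$ rather than eliminating their remaining merger arcs. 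Your bootstrap --- that a reachable lower-labeled node would ``eventually reach a deficit node, contradicting Corollary \ref{cor:res-path}'' --- is a non sequitur: a label-$(n-1)$ node need not have any residual path to a deficit node, and even when it does, the inequality (label difference $\le$ path length) is satisfiable for paths of length up to $n-1$. The only fact the corollary delivers is that deficit nodes are unreachable from label-$n$ nodes; that is a reachability statement about one target set, not closure of $S$, and it does not give you zero residual capacity across $(S,\bar S)$ with respect to the terminal pseudoflow.

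Second, even granting Step 1, Step 2 misapplies max-flow/min-cut duality. For a pseudoflow, ``net flow across $(S,\bar S)$ equals the capacity of the cut'' certifies nothing: the bound $|f|\le C(S,\bar S)$ applies only to feasible flows, and the terminal object carries excesses trapped inside $S$ and deficits inside $\bar S$, so different cuts have different net crossings. What is actually needed --- and what constitutes the substance of the cited proof --- is an explicit flow-recovery step: decompose the pseudoflow, cancel each unit of excess at a strong node back to $s$ along a flow-carrying path (which stays inside $S$ precisely because arcs from $\bar S$ to $S$ carry zero flow), symmetrically resolve the deficits in $\bar S$, and verify that the resulting feasible flow has value equal to the total source capacity minus the total excess of $S$, which equals the capacity of $(S,\bar S)$. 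Your proposal names ``flow reconstruction'' in passing but never performs it, and without the closure property of Step 1 it cannot be performed as described.
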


\subsection{Implementation details}

\noindent{\bf Limiting the number of arc scans:~} During the labeling algorithm, the arcs adjacent to each node are examined at most once (see Hochbaum \cite{Hoc97, Hoc07}) for each value of the node's label.  To implement this, we maintain a pointer at each node to the arc that was last scanned to find a merger. If any node is visited more than once for a given label, the search for mergers resumes from the last scanned arc, thus ensuring that each arc is scanned at most once for each label. When a node is relabeled, the pointer is reset to the start of its list of adjacent arcs.

\noindent{\bf Root management:~} The labeling algorithm requires that all roots with positive excess and of a particular label be available when queried. To achieve this, the roots are maintained in an array of buckets, where a bucket contains all roots with positive excess and with a particular label. The order in which roots within a bucket are processed for mergers appears to make a difference to the pseudoflow algorithm. Anderson and Hochbaum \cite{AndH02} experimented with three branch management policies:
\begin{itemize}
\item {\bf FIFO:~} Each bucket is maintained as a queue; roots are added to the rear of the queue, and roots are retrieved from the front of the queue.
\item {\bf FIFO:~} Each bucket is maintained as a stack; roots are added to the top of the stack, and roots are retrieved from the top of the stack.
\item {\bf Wave:~} This is a variant of the LIFO policy.  Each bucket is still maintained as a stack, with roots being added to the top of the stack and being retrieved from the top.  However, when the excess of a root changes while it is in the bucket, it is moved up to the top of the stack.

Note that the wave management policy is the same as the LIFO policy for the lowest label variant since the excess of a root with positive excess does not change while it is in a bucket. (When a root is processed in the lowest label algorithm, all mergers are from a branch with positive excess to one with non-positive excess, leaving all other roots with positive excess unchanged.)
\end{itemize}

\noindent{\bf Gap Relabeling:~} We use the gap-relabeling heuristic of Derigs and Meier \cite{DerM89}, who introduced it in the context of push-relabel. When we process a branch whose root has label $\ell$ and there are no nodes in the graph with label $\ell-1$, we conclude that the entire branch has no residual paths to the sink and is hence a part of the source set of a min cut. The entire branch can thus be ignored for the rest of the algorithm. In practice, this is achieved by setting the labels of all nodes in that branch to $n$.

The {\em Min-cut Stage} refers to all the operations executed until a minimum cut is obtained.

\subsection{Lowest and highest label pseudoflow variants}

In the generic labeling algorithm, the branch with a root carrying positive excess that is selected for processing (finding mergers) is chosen arbitrarily. In the {\em lowest label variant}, the root carrying positive excess with the lowest label is identified and the branch is processed for mergers so long as its root remains the lowest labeled root with positive excess.  In the {\em highest label variant}, the branch that is chosen is the one with root of highest label, i.e., at each iteration the root carrying positive excess with highest label is identified and that branch is processed.  Note that in the lowest label variant, the root of the from-branch has positive excess while that of the to-branch has non-positive excess, while in the highest label variant, roots of both the from-branch and to-branch could have positive excess.

\section{Complexity of the pseudoflow algorithm for bipartite matching}
\label{section:bipcomplexity}

We now analyze the complexity of the highest and lowest label psuedoflow algorithms when applied to bipartite matching.

\begin{definition}
The algorithm is said to be in {\em phase $\ell$} when nodes of label $\ell$ are being examined for mergers.
\end{definition}

Let the cardinality of the maximum matching in $G$ be $\kappa$.  Since the graph is bipartite, every alternate node in any path in the network must be a $V_2$-node.  The shortest path from any node in the network to a node with strict deficit (i.e., an unmatched $V_2$ node) can contain at most $\kappa$ matched $V_2$-nodes, hence its length is at most $2 \kappa$.  By Corollary \ref{cor:res-path} this means that the label of each node (and hence the number of phases) for the lowest label algorithm is $O(\kappa)$, while that for the highest label algorithm is $O(n_1)$.

\begin{proposition}
The depth of the normalized tree is $O(\kappa)$.
\end{proposition}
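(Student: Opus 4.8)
The plan is to bound the length of the longest root-to-leaf path in the normalized tree. Since every node that violates flow balance is a child of $r$ (Property \ref{property:normtreeroot}), this amounts to bounding the depth of a single branch. The key observation that makes the whole argument go through is that a downward path inside a branch is a residual path: by Property \ref{property:normtreedownward} every downward in-tree arc has strictly positive residual capacity. It therefore suffices to show that any simple residual path confined to $V = V_1 \cup V_2$ has length $O(\kappa)$.

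First I would exploit the orientation of the network. Every original bipartite arc is directed from $V_1$ to $V_2$, so in the residual graph the only arcs pointing from a $V_2$-node to a $V_1$-node are reverse arcs of bipartite arcs that currently carry flow. Because the network is simple (each node has throughput capacity $1$), the bipartite arcs carrying a unit of flow form a matching, so there are at most $\kappa$ of them, $\kappa$ being the maximum matching cardinality.

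Next I would use the bipartite alternation along the path. Since edges run only between $V_1$ and $V_2$, the nodes of a downward path alternate between the two sides, so its arcs alternate between forward arcs ($V_1 \to V_2$) and backward arcs ($V_2 \to V_1$). Every backward arc is the reverse of a flow-carrying bipartite arc, and by the previous step there are at most $\kappa$ of those; hence the path contains at most $\kappa$ backward arcs. By the alternation the number of forward arcs exceeds the number of backward arcs by at most one, so the path has at most $2\kappa + 1$ arcs. Adding the single arc joining $r$ to the branch root gives a depth of $O(\kappa)$.

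The main thing to get right is the characterization of $V_2 \to V_1$ residual arcs and the claim that the flow-carrying bipartite arcs form a matching of size at most $\kappa$; this is exactly where the hypotheses that the network is simple and that the path lies within a branch (so that it avoids the excess and deficit arcs $A_r$ and is genuinely residual by Property \ref{property:normtreedownward}) are used. I would also remark that this bound is stronger than what Corollary \ref{cor:res-path} yields directly, since it applies to the highest-label variant as well, where individual labels may be as large as $n_1$ rather than $O(\kappa)$.
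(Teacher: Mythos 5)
There is a genuine gap in the step where you claim that, because each node has throughput capacity $1$, the bipartite arcs currently carrying flow form a matching of size at most $\kappa$. That would be true for a \emph{feasible} flow in the simple network, but the algorithm maintains a \emph{pseudoflow}: flow balance is violated at branch roots, and a $V_2$-node $w$ that is a root with excess $e(w)>0$ has bipartite inflow $e(w)+1\geq 2$, so two (or, in the highest-label variant, where excesses can accumulate, even more) flow-carrying arcs can share that $V_2$-endpoint. This really happens: a deficit-$1$ node $w$ absorbs one merger and becomes a balanced root; a later merger into $w$ leaves it with excess $1$ and inflow $2$. The only global bound one gets on the number of flow-carrying bipartite arcs is $n_1$ (each $V_1$-node has inflow exactly $1$ from the source and never acquires a deficit, hence outflow at most $1$), so your count of backward residual arcs on the path degrades to $n_1$ and the argument as written only yields depth $O(n_1)$, not $O(\kappa)$.

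The repair is to localize the matching argument to the path itself, which is what the paper does: the path from a node up to the root of the normalized tree is simple, its internal edges are all edges of the bipartite graph, and its nodes alternate between $V_1$ and $V_2$; hence every other edge along it is a set of pairwise node-disjoint edges of $B$, i.e.\ a matching, of size at most $\kappa$, giving at most $2\kappa$ edges on the path. This version needs no flow invariant and no characterization of $V_2\to V_1$ residual arcs at all---Property \ref{property:normtreedownward}, which you invoke to make the downward path residual, is likewise unnecessary, since the bound is purely graph-theoretic. Your closing remark that the bound covers the highest-label variant (where individual labels can reach $n_1$) is correct, but it holds for the paper's reason, not for yours.
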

\begin{proof}
Consider a path from a node up to the root of the normalized tree.  Since the graph is bipartite, the path is made up of alternating nodes from $V_1$ and $V_2$.  Thus, every alternate edge in the path is a valid matching, which bounds the length of the path (and thus the depth of the tree) by $2 \kappa$.\qed
\end{proof}

The implication of the above proposition is that the work done per merger is $O(\kappa)$.

\begin{proposition}
The number of arc scans in the lowest pseudoflow algorithm for bipartite matching is $O(\min\{\kappa m, n_1^2 \kappa\})$.
\end{proposition}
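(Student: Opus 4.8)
The plan is to prove the two bounds inside the minimum separately and then take the smaller. Throughout I rely on the implementation invariant described above: with the per-node pointer that is reset only upon a relabel, the adjacency list of any node $v$ is scanned at most once for each distinct value its label $\ell_v$ attains. For the $O(\kappa m)$ estimate I would simply charge scans to labels. Since the lowest label variant keeps every label $O(\kappa)$ (as noted just before this proposition, via Corollary~\ref{cor:res-path} together with the observation that the shortest residual path to an unmatched $V_2$ node has length at most $2\kappa$), each node's list is swept $O(\kappa)$ times, giving a total of $\sum_{v} d(v)\cdot O(\kappa) = O(\kappa m)$.

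The $O(n_1^2\kappa)$ bound is the substantive part, and is where bipartiteness and the simple (unit-throughput) structure must be used to replace the factor $m$ by $n_1^2$. I would argue phase by phase, showing that each phase $\ell$ contributes only $O(n_1^2)$ scans; since there are $O(\kappa)$ phases, the claim follows. Two structural facts drive this. First, every $V_2$ node has all its neighbours in $V_1$, so its degree is at most $n_1$. Second, because each node has throughput one, the total flow into $V_2$ never exceeds $n_1$, so at any instant at most $n_1$ of the $V_2$ nodes carry incoming flow (are matched), and each such node has exactly one reverse residual arc. From the first fact, any scan performed out of a $V_2$ node costs $O(n_1)$, and only the $O(n_1)$ matched $V_2$ nodes ever acquire excess and search for a merger within a phase, since an unmatched $V_2$ node reaches the deficit root in a single step and is absorbed immediately. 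From the second fact I would apply a pigeonhole argument to the scans out of $V_1$ nodes: a $V_1$ node searching for a merger skips only neighbours that are already matched (saturated forward arcs) or are not at label $\ell-1$; since at most $n_1$ neighbours can be matched, each search run costs $O(n_1)$ before it either succeeds or exhausts the effective candidate set, independently of the true degree $d(i)$. As at most $n_1$ nodes on each side are active in a phase, the per-phase cost is $O(n_1^2)$.

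The main obstacle is making the per-phase accounting rigorous in the presence of the persistent pointer and of flow that can be pushed back along reverse matching arcs: one must verify that re-excitation of a $V_1$ node within a phase does not force it to re-skip the same matched neighbours, and that the excess created on $V_2$ nodes by mergers stays confined to the $O(n_1)$ matched nodes. I would handle this by charging each skipped arc either to one of the at most $n_1$ currently matched $V_2$ nodes or to a successful merger, and by invoking the lowest-label property that a merger in this variant always goes from a positive-excess branch to a non-positive-excess branch, so that processing a root of label $\ell$ leaves the remaining label-$\ell$ roots undisturbed and keeps the phases cleanly separated. Taking the minimum of the two estimates then yields $O(\min\{\kappa m, n_1^2\kappa\})$.
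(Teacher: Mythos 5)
Your proposal follows essentially the same route as the paper's proof: the $O(\kappa m)$ bound comes from each arc being examined $O(1)$ times per phase over the $O(\kappa)$ phases, and the $O(n_1^2\kappa)$ bound comes from counting, per phase, $O(n_1)$ processed nodes each of which examines only $O(n_1)$ candidate neighbors before finding a merger or giving up. Your version supplies somewhat more detail on why the candidate set per search is $O(n_1)$ (via matched $V_2$-nodes and degrees into $V_1$), but the decomposition and the accounting are the same as in the paper.
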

\begin{proof}
Hochbaum \cite{Hoc97, Hoc07} showed that each arc is examined $O(1)$ times per phase.  Since there are $O(\kappa)$ phases and $m$ arcs, the total number of arc scans is $O(\kappa m)$.

Each time a node is processed, its neighbors are examined in order to find a merger.  Since there are $O(n_1)$ nodes in the normalized tree, at most $n_1$ neighbors need to be examined in order to find a merger or determine that no merger exists.  Thus, the total number of arc scans is the number of nodes in the normalized tree times the number of arc scans per phase times the number of phases, which is $O(n_1^2 \kappa)$. \qed
\end{proof}

Similarly, for the highest label algorithm, the number of arc scans $O(\min\{n_1 m, n_1^3\})$.

Following the pseudopolynomial complexity analysis of the generic lowest label pseudoflow algorithm from Hochbaum \cite{Hoc97, Hoc07}, we get a bound of $O(n_1 \kappa)$ on the number of mergers for the lowest label variant.  The number of mergers in the highest label pseudoflow algorithm is $O(n_1 m)$ (as shown by Hochbaum \cite{Hoc97, Hoc07}, the number of mergers is bounded by $m$ times the number of phases).

The total work done in the pseudoflow algorithm is the number of arc scans plus the number of mergers times work per merger (which is $O(\kappa)$ as shown above).  Thus, the complexity of the lowest label pseudoflow algorithm for bipartite matching is \mbox{$O(\min\{\kappa m, n_1^2 \kappa\} + n_1 \kappa^2)$}, while that of the highest label algorithm is $O(\kappa n_1 m)$.

\section{The free-arcs pseudoflow algorithm for bipartite matching}

In the {\em free-arcs} version of the pseudoflow algorithm, the normalized tree satisfies the following property in addition to Properties \ref{property:normtreeroot} through \ref{property:normtreedownward}.

\begin{property}
\label{property:normtreeupward}
In every branch, all upward residual capacities are strictly positive.
\end{property}

The only difference from the perviously described pseudoflow algorithm is in the {\sf split} operation, which is now initiated if the upward residual capacity of an in-tree arc becomes zero after a push.

The implication of the above property is that the normalized tree contains only ``free'' arcs, i.e., arcs that have flow strictly between their lower and upped bounds.

Given a bipartite graph $G = (V_1; V_2, E)$, the flow network is constructed by adding source and sink nodes $s$ and $t$, linking the source to all nodes of $V_1$ with arcs of capacity $1$ and all nodes of $V_2$ to the sink with arcs of capacity $1$, and directing all edges in the bipartite graph from $V_1$ to $V_2$ with {\em infinite} capacity.  For the free-arcs algorithm, the infinite capacity on arcs from $V_1$ to $V_2$ implies that all in-tree arcs have unit flow while all out-of-tree arcs have flow equal to the lower bound of zero (the flow on an arc can never be at its upper bound).

\begin{lemma}
\label{lem:branchtypes} The pseudoflow algorithm for bipartite matching can create only four types of branches -- two types of strong branches $ST_1$ and $ST_2$, and two types of weak branches $WT_1$ and $WT_2$ (as described in Figure \ref{fig:branchtypes}(a)).
\end{lemma}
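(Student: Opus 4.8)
The plan is to show that the normalized-tree invariant, together with the uniform orientation of the bipartite arcs, forces every branch to be a \emph{star}, and then to read off the four types by classifying the centre of that star. I would begin by recording the two facts that drive everything: (i) in the free-arcs version every in-tree arc carries exactly one unit of flow while every out-of-tree arc carries zero, so the only arcs contributing to any node's flow balance are its incident in-tree (matching) arcs together with its saturated source or sink arc; and (ii) every original arc is oriented from $V_1$ to $V_2$. Consequently, for a node $v\in V_1$ of degree $d$ in its branch, all $d$ incident in-tree arcs leave $v$, so with the unit inflow from the saturated source arc we get $e(v)=1-d$; symmetrically, for $w\in V_2$ of degree $d$ every incident in-tree arc enters $w$, so with the unit outflow on the saturated sink arc we get $e(w)=d-1$. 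The point to check carefully here is that the parent/child orientation inside the tree is irrelevant: because the underlying arcs all point $V_1\to V_2$, every tree arc at a $V_1$-node is outgoing and every tree arc at a $V_2$-node is incoming, independent of which endpoint is the parent.

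Next I would invoke Property \ref{property:normtreeroot}: in a normalized tree only the root of a branch may violate flow balance, so every non-root node satisfies $e=0$. By the degree formula above, a balanced node on either side has degree exactly $1$ and is therefore a leaf. Hence no non-root node can have a child, which means each branch is a star: its root is joined directly to some number of leaves on the opposite side of the bipartition and has no deeper structure. This is the structural heart of the lemma, and because it is a consequence of the normalized-tree invariant alone it holds for every normalized tree the algorithm ever produces---in particular for each branch created by a merger followed by renormalization, so I need not trace those operations individually.

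Finally I would classify the star by the side of its centre and the sign of the root's excess, which is determined by the number of leaves through the same degree formula. A root in $V_1$ with $\ell$ leaves has excess $1-\ell$, strictly positive exactly when $\ell=0$ (a lone $V_1$-node) and non-positive otherwise; a root in $V_2$ with $\ell$ leaves has excess $\ell-1$, strictly positive exactly when $\ell\ge 2$ and non-positive when $\ell\le 1$. Separating strong branches (root with positive excess) from weak branches (root with non-positive excess) thus produces exactly the two strong types $ST_1,ST_2$ and the two weak types $WT_1,WT_2$ of Figure \ref{fig:branchtypes}(a), with the balanced matched pairs absorbed into the weak cases; the four possibilities are visibly exhaustive.

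The only genuine obstacle is the bookkeeping in the degree--excess computation---making sure the saturated source and sink arcs are counted once at each node and that the tree orientation never changes whether an arc is counted as inflow or as outflow. Once that is pinned down, the ``balanced $\Rightarrow$ leaf'' step and the concluding case analysis are immediate.
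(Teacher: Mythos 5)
Your structural observation---that Properties \ref{property:normtreeroot} and \ref{property:normtreeupward} force every non-root node to be balanced, hence (by your degree--excess formulas, which are correct) to have tree-degree one, so that every branch is a star centered at its root---is sound and is a nice invariant. The gap is in the final classification step. Reducing to stars does not yield four types: your own formulas admit a $V_1$-rooted star with $k\ge 1$ leaves (excess $1-k\le 0$, weak, but not a $WT_1$ or a $WT_2$; in particular for $k=1$ it is a matched pair rooted at its $V_1$-node, whereas a $WT_2$ branch is rooted at its $V_2$-node---a real distinction, since the root is the node hanging off $r$ whose label drives the processing, and e.g.\ the proof of Lemma \ref{lem:monotone} explicitly uses that the $WT_2$ root is the $V_2$-node) and a $V_2$-rooted star with $k\ge 3$ leaves (excess $k-1\ge 2$, strong, but not an $ST_2$). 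None of these is excluded by the normalized-tree invariants alone: a $V_2$-node with three unit-flow in-tree arcs and excess $2$ satisfies Properties \ref{property:normtreeroot}--\ref{property:normtreeupward} perfectly well. So the claim that ``the four possibilities are visibly exhaustive'' is exactly where the proof breaks.

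What rules out the extra star shapes is the dynamics---precisely the merger-by-merger trace you declared unnecessary. The paper's proof is an induction on the algorithm's operations: the simple initialization produces only $ST_1$ and $WT_1$ singletons, and one then checks that each of the four mergers that can actually occur from such a configuration (Figure \ref{fig:branchtypes}(b)), together with the renormalizing pushes and splits, produces only branches of the four types. For instance, a $V_2$-root never accumulates a third child because the only way it gains children is as the target of a merger while it is weak (so it has at most one child beforehand), and when an $ST_2$ root later disposes of its excess the push zeroes out and splits off one of its matching arcs, returning it to a $WT_2$; similarly, deficits and hence multi-leaf roots never arise on the $V_1$ side because deficits exist only at $V_2$-nodes from initialization onward. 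Your star lemma could shorten that case analysis, but it cannot replace it; to repair the proof you need to add the inductive argument over the admissible mergers and their renormalizations.
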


\begin{figure}[ht]
\centerline{\includegraphics[width=0.9\linewidth]{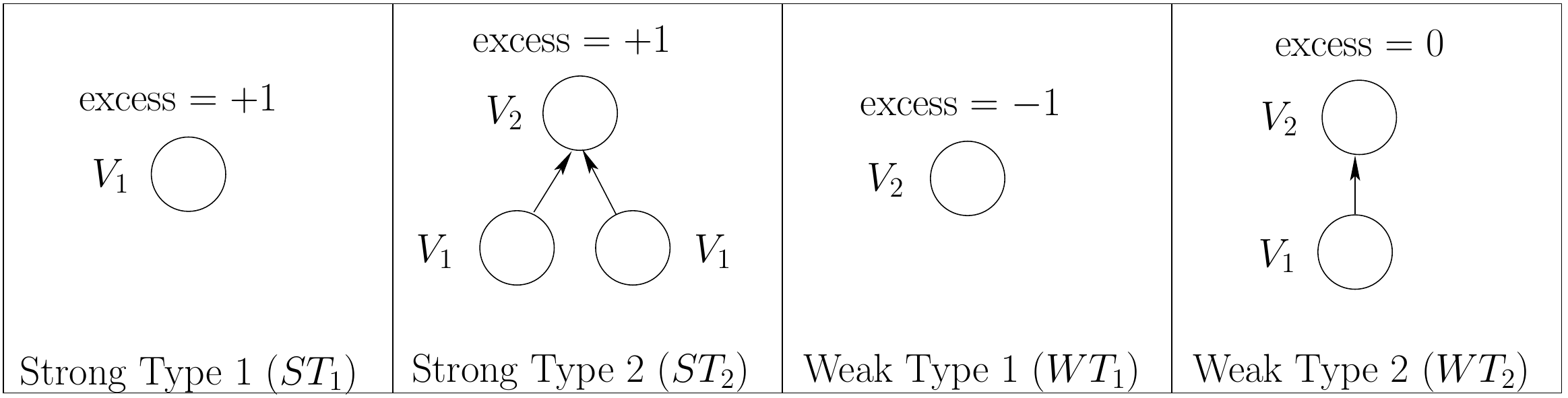}}
\caption{\label{fig:branchtypes}Types of branches that can exist in a normalized tree during execution of the free-arcs pseudoflow algorithm for bipartite matching.}
\end{figure}

\begin{figure}[ht]
\centerline{\includegraphics[width=\linewidth]{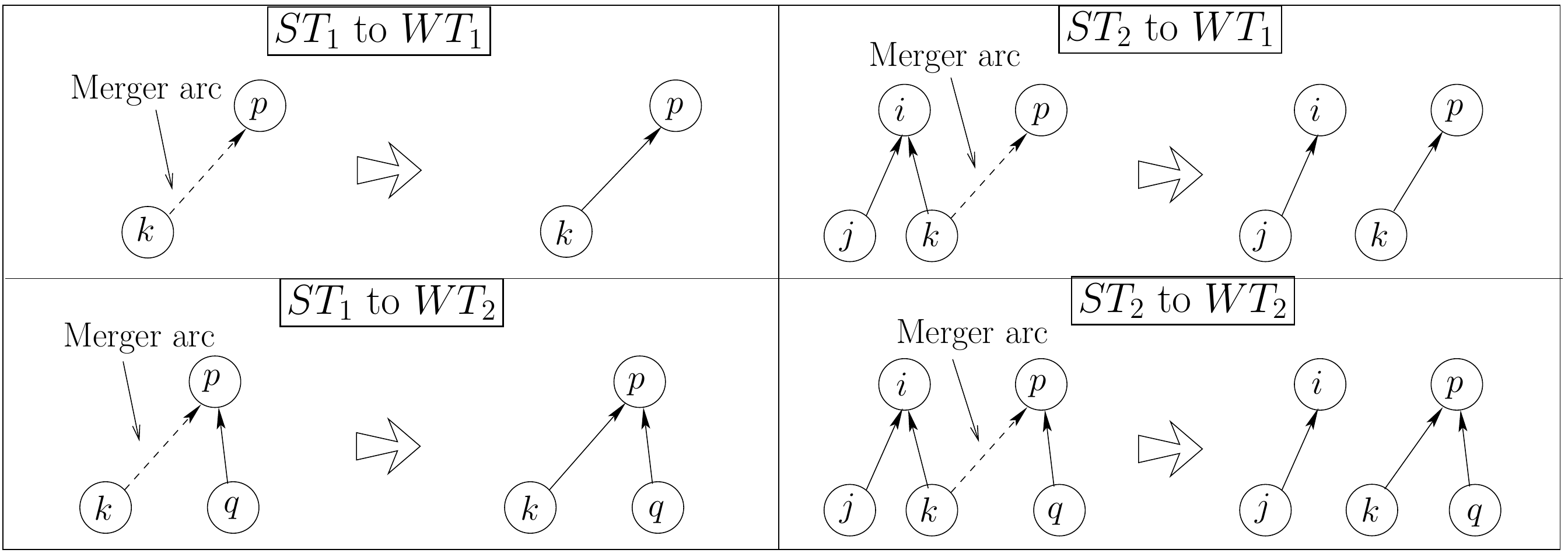}}
\caption{\label{fig:mergertypes}Types of mergers that can occur in the free-arcs pseudoflow algorithm for bipartite matching.}
\end{figure}

\begin{proof}
The proof is by induction.  The inductive assumption applies initially as in the simple normalized tree all nodes of $V_1$ are $ST_1$ branches and all nodes of $V_2$ are $WT_1$ branches.  Given that an iteration starts with these two types of strong branches and two types of weak branches, only four types of mergers are possible as shown in Figure \ref{fig:branchtypes}(b).  All these mergers result in one or two of these types of branches, and thus the proof is complete. \qed
\end{proof}

Each $WT_2$ and $ST_2$ branch contains an edge between a $V_1$ node and a $V_2$ node, and all branches are node-disjoint. Thus, the set of $WT_2$ and $ST_2$ branches represent a valid matching, which leads to the following property.

\begin{property}
\label{property:branchbound}
The number of $ST_2$ and $WT_2$ branches is bounded by $\kappa$, the cardinality of the maximum matching.
\end{property}

\subsection{Complexity of the free-arcs pseudoflow algorithm for bipartite matching}

All the results in Section \ref{section:bipcomplexity} are still valid, except that the work done per merger is now $O(1)$.  The complexity of the free-arcs version of the pseudoflow algorithm is thus the number of arc scans plus the number of mergers, which is \mbox{$O(\min\{\kappa m, n_1^2 \kappa\})$} for the lowest label variant and $O(n_1 m)$ for the highest label variant.

\section{The matching-pseudoflow algorithm}
\label{section:matching-pseudoflow}

The {\sf matching-pseudoflow} algorithm is a pseudoflow algorithm with global relabeling and delayed relabeling.  We first introduce the concept of two-edge distance labels in the graph.

\begin{definition}
The {\em two-edge distance label} of a node is the number of $V_1$-nodes in the shortest path from that node to a $WT_1$ branch in the residual network.
\end{definition}
The notion of two-edge distances in bipartite graphs has been used previously e.g.\ by Ahuja et al. \cite{AhuOST94}.  Initially, all labels of nodes in $V_1$, which are $ST_1$ branches, are set to $1$ and the labels of nodes in $V_2$, which are $WT_1$ branches, are set to $0$. Throughout the algorithm the labels of nodes that form $WT_1$ branches remain $0$ as their two-edge distance (to themselves) is $0$.

{\em Delayed relabeling} means that all possible mergers from lowest labeled strong nodes of label $\ell$ are performed
{\em without relabeling the nodes} when no merger is found.  Once all the nodes of label $\ell$ have been examined for mergers, all the node labels are set to be the shortest two-edge distance to a $WT_1$ branch in the residual graph and the set of all mergers starting from the lowest labeled strong root are again performed.  The process of computing all
the node distance labels is referred to as {\em global relabeling} \cite{GolC97}.

We now demonstrate that the two-edge distance labels satisfy properties analogous to (a) and (b) of Lemma \ref{lem:labels}. Property (a) is satisfied by the distance labels of nodes which are the lengths of the shortest residual path from each node to the sink.  The second property of monotonicity (b) is shown to be satisfied next.

\begin{lemma}
\label{lem:monotone}
The two-edge distance labels satisfy property (b) in Lemma \ref{lem:labels}.
\end{lemma}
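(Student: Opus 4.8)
The plan is to reduce property~(b) to a single inequality per tree edge and then settle that inequality using the bipartite branch structure. After a global relabeling, every node's label equals its exact two-edge distance $d(v)$ to the nearest $WT_1$ branch in the residual graph, so property~(b) is precisely the assertion that $d(p)\le d(c)$ whenever $p={\rm parent}(c)$ in a branch. The basic tool is the two-edge analogue of the distance-validity condition already recorded for property~(a): for every residual arc $(u,v)$ one has $d(u)\le d(v)+\delta(u)$, where $\delta(u)=1$ if $u\in V_1$ and $\delta(u)=0$ otherwise. This is immediate, since prepending $(u,v)$ to a shortest residual path out of $v$ gives a residual path out of $u$ whose number of $V_1$-nodes exceeds that of $v$'s path by exactly $\delta(u)$ (the tail $u$ is counted iff it lies in $V_1$).

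First I would fix a tree edge $[p,c]$ with $p={\rm parent}(c)$. By Property~\ref{property:normtreedownward} the downward residual arc $(p,c)$ has strictly positive residual capacity, so the validity inequality applies and gives $d(p)\le d(c)+\delta(p)$. If $p\in V_2$ then $\delta(p)=0$ and we obtain $d(p)\le d(c)$ directly, which is exactly what property~(b) requires. The whole proof therefore rests on showing that the parent of every tree edge is a $V_2$-node, i.e.\ that no $V_1$-node ever has a child.

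The hard part is precisely this structural claim, and it is where the matching setting is essential. I would read it off Lemma~\ref{lem:branchtypes}: among the four admissible branch types $ST_1,ST_2,WT_1,WT_2$, every $V_1$-node appears only as a singleton root or as a leaf, never as an interior node with descendants. To justify this I would spell out a flow-balance/degree argument. A $V_1$-node receives exactly one unit of flow, along its saturated source arc, and has no other inflow; hence its total outflow is at most one unit, so at most one of its arcs into $V_2$ carries flow and is in-tree, giving every $V_1$-node tree-degree at most one. A non-root $V_1$-node must spend that single in-tree arc on the edge to its parent and is therefore a leaf, while a $V_1$-node possessing a child would be forced to be balanced (inflow $=$ outflow $=1$) and so cannot be a strong root; these are exactly the configurations absent from the four branch types. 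Consequently every parent is a $V_2$-node.

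Combining the two pieces finishes the proof: for every tree edge the parent lies in $V_2$, so $\delta(p)=0$ and $d(p)\le d(c)$, yielding labels that are monotone nondecreasing down each branch, which is property~(b) of Lemma~\ref{lem:labels}. I expect the genuine obstacle to be the structural step rather than the inequality: the validity bound is routine but degrades to $d(p)\le d(c)+1$ exactly when $p\in V_1$, so ruling out $V_1$-parents via Lemma~\ref{lem:branchtypes} (equivalently, via the degree-one property of $V_1$-nodes) is what the argument truly needs. As a consistency check I would also verify that the free-arcs upper-residual condition, Property~\ref{property:normtreeupward}, fits this picture, since for a matched in-tree edge it forces both residual directions to be positive and is what makes such arcs free to begin with.
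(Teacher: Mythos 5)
Your proof is correct, but it takes a genuinely different route from the paper's. The paper argues branch type by branch type: for a $WT_2$ branch it shows the root's only residual exit toward a $WT_1$ node is through its child, so root and child have \emph{equal} labels; for an $ST_2$ branch it shows the root's label equals the smaller child label $\ell_R$ and separately derives $\ell_L\le \ell_R+1$ from the infinite-capacity residual arc from the left child to the root. You instead reduce everything to one per-edge inequality: the two-edge validity bound $d(u)\le d(v)+\delta(u)$ for any residual arc, combined with Property~\ref{property:normtreedownward} (downward residual capacities are strictly positive, so the arc from parent to child is residual) and the structural fact from Lemma~\ref{lem:branchtypes} that every parent inside a branch is a $V_2$-node, giving $\delta(p)=0$ and hence $d(p)\le d(c)$. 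Your route is more economical and proves exactly what property~(b) asks for; the paper's case analysis proves strictly more (the equality of the two labels in a $WT_2$ branch), and that extra equality is actually used later in the description of a stage, so the two arguments are not interchangeable for all downstream purposes. One small caveat on your supplementary flow-balance justification: for a non-root $V_1$-node the balance argument is airtight, but for a root it only shows that a $V_1$-root with a child would be balanced and hence not strong; it does not by itself exclude a balanced $V_1$-root with a $V_2$-child. That configuration is ruled out only by the inductive proof of Lemma~\ref{lem:branchtypes}, which you do cite, so the proof stands, but the degree argument should not be presented as an independent substitute for that lemma.
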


\begin{proof}
Two-edge distance labels satisfy that {\em both nodes in a $WT_2$ branch have the same label}: In a $WT_2$ branch, all arcs into its root (a $V_2$-node) other than that from its child (a $V_1$-node) carry zero flow; the arc from its child carries a flow of 1 unit.  The arc from its child is the only arc with positive residual capacity adjacent to the root. Thus, the root can reach a $WT_1$ node only through its child and the shortest path from the root to a $WT_1$ branch will contain the shortest path from its child to a $WT_1$ branch.  So the number of $V_1$-nodes in the shortest path from the root to a $WT_1$ branch will be the same as that in the shortest path from its child to a $WT_1$ branch, ensuring that the root and child have the same two-edge distance label.

A similar argument holds for the $ST_2$ branches. Let $\ell_R$ be the label of the right child and $\ell_L$ be the label of the left child (assume w.l.o.g. that $\ell_R \leq \ell_L$). The root of an $ST_2$ branch can reach a $WT_1$ branch only through on of its children; so the label of the root from a $WT_1$ branch will be equal to $\ell_R$, the smaller label of the two children. Since there is a residual arc (of infinite capacity) from the left child to the root, the two-edge distance from the left child to the right is 1. Hence, $\ell_L \leq \ell_R+1$, and the label of the left child is at least equal to and at most one greater than the label of the root. \qed
\end{proof}

\begin{definition}
{\em Stage $\ell$} of the algorithm is the maximal set of mergers that occur while the shortest two-edge distance from a strong node to a $WT_1$ branch is $\ell$.
\end{definition}

An initialization procedure, equivalent to a stage $1$, creates a maximal set of $WT_2$ branches by scanning the neighbors of each $V_1$-node to identify an unmatched $V_2$-node and then performing a merger. Since the cardinality of the maximum matching is at most $\kappa$, we need to scan at most $\kappa$ neighbors of each $V_1$-node to identify an unmatched $V_2$-node or determine that none exists. Also, each arc is scanned at most once, so the complexity of this procedure is $O(\min\{m, n_1 \kappa\})$. At the end of the initialization, the shortest two-edge distance from a $ST_1$ branch to a $WT_1$ branch is at least 2.

We now elaborate on the implementation of a stage.  To facilitate the description, we introduce the following notation for labels of nodes in a branch. The labels of an $ST_2$ branch are represented by the triplet $(left,root,right)$ which represent the labels of the left child, root, and right child respectively. We will assume w.l.o.g. that the left child has label greater than or equal to that of the right child.  Labels in a $WT_2$ branch are represented by the pair $(child,parent)$ which represent the labels of the child and parent respectively.

At the beginning of each stage, global relabeling is performed, and all nodes are ``unflagged'', which marks them as being {\em unvisited}. Mergers are allowed only between unvisited nodes.

The merger/split operations at each stage are such that they satisfy the property that the stage begins and ends with only $WT_1$, $ST_1$, and $WT_2$ branches; $ST_2$ branches are only formed temporarily during a stage. This inductive property holds initially for stage 2 since no $ST_2$ branches are formed in stage 1 (the greedy initialization).

Suppose that at the beginning of stage $\ell \geq 2$, the set of branches consists of $ST_1$ branches of label $\geq \ell$; $WT_2$ branches in which both the nodes have the same label $p$ ($1 \leq p < \ell$); and $WT_1$ branches which have label $0$. Consider a sequence of mergers starting from a $ST_1$ branch of lowest label $\ell$.  The first merger is from a $ST_1$ branch of label $\ell$ to an unvisited root of a $WT_2$ branch with label $(\ell-1,\ell-1)$. This creates a $ST_2$ branch $(\ell, \ell-1, \ell-1)$. This branch now has the lowest labeled strong root, and the search for mergers starts from the right child labeled $\ell-1$.

Suppose that at some point a merger results in a $ST_2$ branch $(p+1,p, p)$. The search for mergers now starts from the right child of this branch resulting in one of the following possible outcomes.
\begin{enumerate}
\item There is no merger to an unvisited weak node of label $p-1$: Here we {\em delay} the relabeling of that node to the end of the stage and mark the root of the branch as being {\em visited} implying that the branch cannot participate in any more mergers at the current stage.  In this case, a {\em backtrack} operation is performed to reverse the last merger and restore the structure of the branches to what it was prior to the last merger. This is shown in Figure \ref{fig:dfstree2}. For example, consider the case where the backtrack operation occurs from a branch of label $(\ell,\ell-1, \ell-1)$ in stage $\ell$.  Suppose there are no mergers from the right child of this branch, then the backtrack operation splits the branch, creating a $WT_2$  branch of label $(\ell-1, \ell-1)$ and one $ST_1$ branch of label $\ell$. The root of the $WT_2$ branch is marked as visited, and the search for mergers continues from the $ST_1$ branch. If no more mergers are possible from this node, it is marked as visited, and a new lowest labeled strong node is picked.  This procedure continues until there are no more unvisited $ST_1$ nodes of label $\ell$.
\item $p > 1$ and a merger is found to an unvisited root of a $WT_2$ branch of label $(p-1,p-1)$: This creates a $ST_2$ branch of label $(p, p-1, p-1)$ and the search for mergers continues from the right child of this branch.
\item $p=1$ and a merger is found to a $WT_1$ branch which has label $0$: This creates a new $WT_2$ branch $(1,0)$, incrementing the size of the current matching. The branches involved in this sequence of mergers are all marked as visited, and do not participate in any more mergers in stage $\ell$. The process of searching for mergers then starts with an unvisited $\ell$ labeled strong node if there is one, or else the stage terminates.
\end{enumerate}

\begin{figure}[ht]
\centerline{\includegraphics[width=\linewidth]{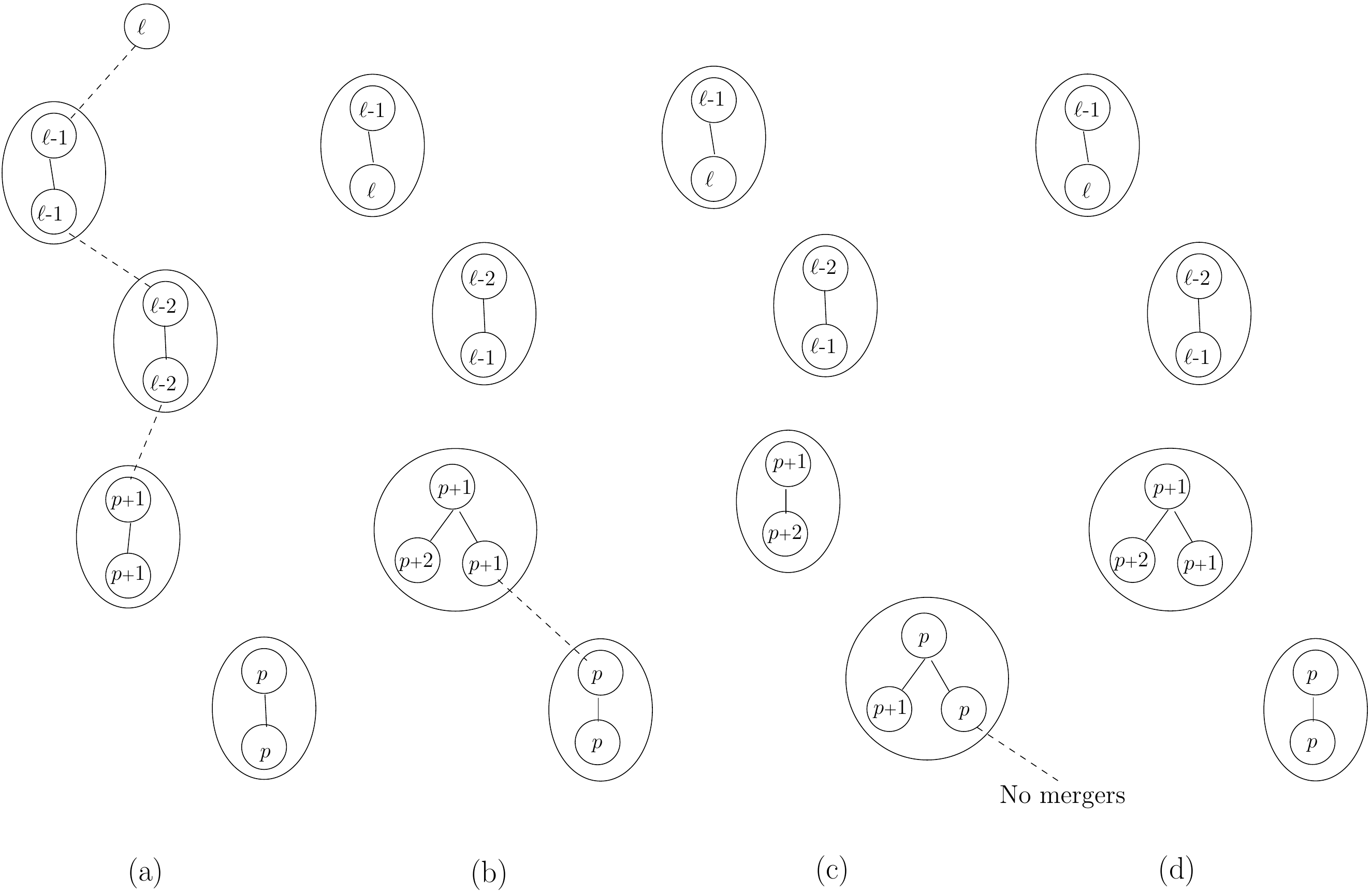}}
\caption{\label{fig:dfstree2}
(a) Branches before mergers (shown in dotted lines),
(b) Branches after mergers until label $p+1$,
(c) Branches after merger from $p+1$ to $p$,
(d) Branches after lack of merger causes a backtrack.
The branch with label $p$ is marked as having been visited.}
\end{figure}
\noindent We have thus proved the following lemma, which holds inductively given that stage 1 ends with $ST_1$, $WT_1$, and $WT_2$ branches.

\begin{lemma}
\label{property:threeTypes}
With the merger/split/backtrack operations described above, a stage that begins with only $WT_1$, $WT_2$, and $ST_1$ branches terminates with only these three branch types.
\end{lemma}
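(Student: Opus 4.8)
The plan is to prove this single-stage invariant by induction on the merger/split/backtrack operations performed \emph{within} the stage, leaning on the exhaustive case analysis (cases~1--3) that immediately precedes the statement. By Lemma~\ref{lem:branchtypes} the only branch types that can ever arise are $ST_1$, $ST_2$, $WT_1$, and $WT_2$, and by hypothesis the stage starts with only $ST_1$, $WT_2$, and $WT_1$ branches. It therefore suffices to show that $ST_2$ branches are created only transiently and that none survives to the end of the stage. I would formalize this through the invariant that, at every ``rest point'' between the resolution of two consecutive exploration chains, no $ST_2$ branch is present; the $ST_2$ branches exist only while the algorithm is actively extending a single alternating chain rooted at the current lowest-labeled $ST_1$ node.

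First I would describe the life-cycle of this active chain. The first merger from a lowest-labeled $ST_1$ branch of label $\ell$ to a $WT_2$ branch of label $(\ell-1,\ell-1)$ produces the single $ST_2$ branch $(\ell,\ell-1,\ell-1)$, and each subsequent case~2 merger appends exactly one more $ST_2$ branch as the search descends through strictly decreasing labels. Hence at any instant the $ST_2$ branches present are precisely those strung along the current chain, which is the transient structure the invariant permits. The chain corresponds, in matching terms, to an alternating path whose links alternate between the unmatched merger arcs and the matched arcs inherited from the consumed $WT_2$ branches.

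Next I would verify that each of the two terminating outcomes restores the invariant. For a backtrack (case~1, Figure~\ref{fig:dfstree2}), the operation reverses the most recent merger, splitting the top $ST_2$ branch back into exactly the $ST_1$ branch and the $WT_2$ branch from which it was formed and marking the corresponding root as visited; repeated backtracking peels the chain off one $ST_2$ branch at a time until it is empty, leaving only $ST_1$, $WT_2$, $WT_1$ branches. For the augmentation outcome (case~3), reaching a $WT_1$ branch with $p=1$ completes an augmenting path from the $ST_1$ root down through the chain to the unmatched $V_2$-node; pushing the unit excess along this path re-matches each link, converting every $ST_2$ branch $(p+1,p,p)$ on the chain into a $WT_2$ branch and producing one extra $WT_2$ branch, so the matching cardinality increases by one, consistent with Property~\ref{property:branchbound}. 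In both outcomes the chain is fully dissolved before a fresh lowest-labeled $ST_1$ node is selected, and the stage can terminate only once every chain has been resolved; therefore the stage ends with only $WT_1$, $WT_2$, and $ST_1$ branches, completing the induction.

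The step I expect to be the main obstacle is the verification of case~3: that the flow-push along the completed augmenting path re-normalizes \emph{every} $ST_2$ branch on the chain into a $WT_2$ branch simultaneously, without leaving a residual $ST_2$ branch or spawning a new strong branch through a split. I would handle this by tracking the push explicitly. Because all $V_1\to V_2$ arcs carry infinite capacity and only the (contracted) source- and sink-adjacent arcs have unit capacity, the excess travels unimpeded down the alternating matched/unmatched edges of the chain, so no split is triggered; at each $ST_2$ branch the root is re-matched to the appropriate child while the other child becomes the parent of a fresh $WT_2$ branch. Confirming that this local re-matching at every link yields exactly the $WT_2$ structure, using the free-arc Property~\ref{property:normtreeupward} together with the monotonicity of labels from Lemma~\ref{lem:monotone}, is the crux of the argument.
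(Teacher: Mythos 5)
Your overall strategy---an induction over the operations of the stage with the invariant that $ST_2$ branches are transient artifacts of the current exploration chain---is exactly the argument the paper intends (the paper simply declares the preceding three-case analysis to be the proof), and the conclusion you reach is correct. However, your account of the intermediate states does not match the merger/split/backtrack operations the lemma refers to, and this shifts the burden of the proof to the wrong place. You assert that ``at any instant the $ST_2$ branches present are precisely those strung along the current chain,'' i.e.\ that a case~2 merger \emph{appends} an $ST_2$ branch while leaving the earlier ones intact. In the free-arcs setting this cannot happen: when the right child $v$ of the current $ST_2$ branch $(p+1,p,p)$ merges into the root of a $WT_2$ branch of label $(p-1,p-1)$, the unit excess is pushed from the old root through $v$ to the new root, the flow on the old matching arc drops to its lower bound of zero, and Property~\ref{property:normtreeupward} forces a split. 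The old $ST_2$ branch thereby immediately becomes a $WT_2$ branch (its left child is re-matched to its root) and exactly one new $ST_2$ branch $(p,p-1,p-1)$ is created. Hence at every moment of a stage there is \emph{at most one} $ST_2$ branch, and the re-matching along the chain happens incrementally, one merger at a time---not all at once when the chain is resolved.

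Two consequences follow. First, the step you single out as the crux---showing that the augmentation in case~3 converts ``every $ST_2$ branch on the chain'' into a $WT_2$ branch simultaneously---is a non-issue: when the chain reaches a $WT_1$ branch only one $ST_2$ branch exists, and the final push converts it and the $WT_1$ branch into two $WT_2$ branches. Second, your description of case~1 as repeated backtracking that peels the chain off one $ST_2$ branch at a time until it is empty describes a cascade the algorithm does not perform: a backtrack reverses only the \emph{last} merger, restoring one $WT_2$ branch (which is marked visited) and leaving the exposed $V_1$-node as an $ST_1$ branch that either continues the search at the same layer or is marked visited and abandoned; the upstream re-matchings persist. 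Since $ST_1$ is one of the three permitted types, the lemma still holds, but the thing that actually needs to be checked is what a single backtrack leaves behind, not a cascade. With the correct mechanics the invariant is the simpler one: at most one $ST_2$ branch exists, and only while a chain is being actively extended; each of cases~1--3 either preserves this (case~2) or eliminates the lone $ST_2$ branch (cases~1 and~3), so the stage cannot terminate with an $ST_2$ branch present.
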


\begin{definition}
A {\em successful} path of length $\ell$ is a sequence of mergers, at stage $\ell$, that starts at a $ST_1$ branch of label $\ell$ and ends at a $WT_1$ branch of label $0$.
\end{definition}

A successful path  contributes to the increase of the number of $WT_2$ branches by $1$ which is equivalent to increasing the size of the matching. The mergers that form a successful path are called {\em successful mergers}. The next lemma proves that the procedure of flagging nodes as visited does not block off any successful paths, implying that {\em all} successful paths of length $\ell$ are found in stage $\ell$.

\begin{lemma}
\label{lem:visited} A node that is marked as visited in stage $\ell$ can no longer be part of a successful path of length $\ell$.
\end{lemma}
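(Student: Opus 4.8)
The plan is to read a \emph{successful path} as a shortest augmenting path in the layered structure induced by the frozen stage-$\ell$ labels, and then to run an induction on the order in which nodes are flagged as visited. First I would record the key structural fact: throughout stage $\ell$ the two-edge distance labels are held fixed (global relabeling happens only at the start of the stage), and by property~(a) of Lemma~\ref{lem:labels}, which the two-edge distance labels also satisfy, every residual arc $(u,v)$ obeys $\ell_u \le \ell_v+1$, so a residual arc can lower a label by at most one. Since a successful path starts at an $ST_1$ branch of label $\ell$ and descends to a $WT_1$ branch of label $0$, its $V_1$-nodes must carry labels $\ell,\ell-1,\dots,1$, decreasing by exactly one at each step. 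Hence a successful path of length $\ell$ uses only ``admissible'' residual arcs that drop the label by one, and any continuation of such a path through a node of label $p$ must proceed to a residual neighbor of label $p-1$.

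With this layered picture in hand, I would prove the lemma by induction on the time at which a node is flagged, distinguishing the two ways flagging occurs. In the \emph{dead-end case} (outcome 1), when a node $v$ of label $p$ is flagged because no merger to an \emph{unvisited} weak node of label $p-1$ exists, the scan-pointer bookkeeping guarantees that every residual neighbor of $v$ of label $p-1$ has already been flagged, hence strictly earlier. By the induction hypothesis none of these neighbors can lie on a successful path of length $\ell$; since every successful path through $v$ must descend to such a neighbor, $v$ itself is dead. In the \emph{successful-path case} (outcome 3), when a path $P$ completes and all its nodes are flagged, the accompanying augmentation matches the previously free endpoints (the $ST_1$ root and the $WT_1$ branch) and flips the unique matched edge at each interior $V_1$-node of $P$, so the admissible label-$(p-1)$ continuation required of a length-$\ell$ successful path is destroyed at every node of $P$ while its endpoints cease to be free. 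Thus no later successful path of length $\ell$ can reuse a node of $P$, closing the induction.

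I expect the crux to be the successful-path case, namely establishing that the successful paths produced within a single stage are vertex-disjoint; this is the same delicacy that underlies the ``unvisited'' qualifier in the dead-end case, since we flag $v$ even though it may still possess residual arcs into \emph{visited} label-$(p-1)$ neighbors, and I must exclude the possibility that such a neighbor hosts a continuation of a successful path. The induction on flag-time resolves this only once I have verified that flagging is irreversible for length-$\ell$ paths: concretely, I would show that after each augmentation the frozen labels remain valid lower bounds on the two-edge residual distances, so an already-flagged node cannot regain an admissible descending arc during the remainder of the stage. This invariant, together with Lemma~\ref{property:threeTypes} (which ensures that only $WT_1$, $WT_2$, and $ST_1$ branches persist between the atomic merger, split, and backtrack steps), reduces the remaining argument to the classical Hopcroft--Karp phase analysis and should make the induction routine.
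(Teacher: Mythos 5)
Your proposal is correct and follows essentially the same route as the paper's proof: a dead-end (backtrack) case in which a flagged node's only possible label-$(p-1)$ continuations are themselves already flagged, and a successful-path case in which the augmentation reverses the matched edge so that any later path through such a node would have to climb back up a layer and hence exceed length $\ell$ --- exactly the paper's ``$\ell \rightarrow \cdots \rightarrow p \rightarrow (p+1) \rightarrow p \rightarrow \cdots \rightarrow 0$'' contradiction. Your explicit induction on flagging time and the label-validity invariant are routine elaborations of what the paper leaves implicit.
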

\begin{proof}
If a $V_2$-node has been marked as visited after no merger have been found from its right child, then that child cannot lead to any merger in the current stage.  Hence, once a backtrack occurs, the $V_2$-node is marked as visited, and need not be visited again during that stage.

If an unvisited $V_2$-node, $v$ of label $p$ belongs to a successful path then it has a child of label $p+1$ after the successful merger. If $v$ were to participate in another successful merger at the same stage, the sequence of labels in this second successful path would be $\ell \rightarrow (\ell-1) \rightarrow .. \rightarrow (p+1) \rightarrow p \rightarrow (p+1) \rightarrow p \rightarrow .. \rightarrow 0$. The length of such a path is strictly greater than a two-edge distance $\ell$ since layer $p$ is visited twice in this path. Thus, each $V_2$-node participates in at most one successful path of length $\ell$ in stage $\ell$. Hence, a node that was part of a successful merger is marked as visited, and need not be visited again during that stage. \qed
\end{proof}

\begin{corollary}
\label{cor:nodethroughput}
Each $V_2$-node participates in at most one successful merger or one backtrack at each stage.
\end{corollary}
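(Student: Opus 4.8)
The plan is to treat this corollary as a bookkeeping consequence of Lemma \ref{lem:visited} together with the algorithmic rule that mergers are permitted only between unvisited nodes. I would argue it by pinning down the two—and only two—mechanisms by which a $V_2$-node can be flagged as visited during a stage, showing that each corresponds to exactly one of the two named events, and then invoking the merger rule to preclude any further participation once the flag is set.

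First I would record the structural fact that every merger in a stage has a $V_2$-node as its head: by the branch taxonomy of Lemma \ref{lem:branchtypes} and the stage description, a merger always runs from a strong $V_1$-node (the $ST_1$ node or the right child of the growing $ST_2$ branch) into the root of a $WT_2$ or $WT_1$ to-branch, which is a $V_2$-node. Thus a $V_2$-node ``participates'' precisely as such a head. When a $V_2$-node first becomes a merger head it becomes the root of an $ST_2$ branch $(p+1,p,p)$, and the search launched from its right child resolves this single merger in exactly one of two ways. Either the search can be extended to a successful path, in which case the node lies on a successful path of length $\ell$ and—by the label-counting half of the proof of Lemma \ref{lem:visited}—can lie on at most one such path and is flagged visited once the path completes; or the search exhausts its candidates without a merger, triggering the backtrack that reconstitutes the $WT_2$ branch $(p,p)$ and flags its $V_2$-root visited (the backtrack half of the proof of Lemma \ref{lem:visited}). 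These are the only two outcomes, and each flags the node.

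I would then close the argument by invoking the merger rule: since a merger requires both endpoints to be unvisited, once a $V_2$-node is flagged it can never again serve as a merger head in that stage. Hence its single head-participation is resolved exactly once, as either one successful merger or one backtrack, and never as both or as more than one of either.

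The one point to get exactly right—though it requires no new combinatorics—is the attribution of intermediate backtracks. When the search below a pending $V_2$-node descends into deeper branches and those branches backtrack, each such backtrack flags only its own deeper $V_2$-root, leaving the node in question pending yet still accounted for by its single head-participation; only when its own right child is finally exhausted is the node itself resolved as a backtrack. I expect this to be the subtle step, but it is exactly the content already secured by Lemma \ref{lem:visited}, so the corollary adds no genuine difficulty beyond reassembling that lemma's two cases and the visited/merger invariant.
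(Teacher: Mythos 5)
Your proposal is correct and follows essentially the same route as the paper, which states this corollary without separate proof as an immediate consequence of Lemma \ref{lem:visited} together with the visited-flag discipline (a $V_2$-node is flagged either when its subtree backtracks or when it lands on a successful path, and mergers are permitted only to unvisited nodes). Your elaboration of the two flagging mechanisms and the attribution of intermediate backtracks to the deeper $V_2$-roots is a faithful unpacking of that argument.
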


\begin{corollary}
\label{lem:increasinglabel}
The labels of all lowest labeled strong nodes of label $\ell$ at stage $\ell$ strictly increase after the termination of stage $\ell$.
\end{corollary}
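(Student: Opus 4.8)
The plan is to mirror the classical phase-length argument of Dinic and of Hopcroft--Karp, in which the length of a shortest augmenting path strictly increases from one phase to the next, transplanting it to the two-edge distance labels and the merger/backtrack machinery of this section. Concretely, I want to show that once stage $\ell$ terminates, every node that entered the stage as a lowest-labeled strong node---hence with two-edge distance exactly $\ell$ to a $WT_1$ branch---either has been absorbed into a $WT_2$ branch by a successful path, or, if it survives as a strong node, has its two-edge distance to the nearest $WT_1$ branch raised to at least $\ell+1$. Since the global relabeling performed at the close of the stage assigns each node a label equal to this two-edge distance, the surviving strong nodes receive a label strictly larger than $\ell$.

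First I would record the \emph{blocking} property of a stage: when stage $\ell$ ends, no successful path of length $\ell$ remains in the residual graph. By the description of the stage, termination occurs only when no unvisited $ST_1$ node of label $\ell$ is left, so every strong node that began the stage with label $\ell$ and was not converted by a successful path has been marked visited. Lemma \ref{lem:visited} then states that such a visited node cannot lie on a successful path of length $\ell$, and in particular cannot be the tail of one. Aggregating this over all the formerly-$\ell$ strong nodes shows that no residual successful path of length $\ell$ survives the stage.

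Second I would establish the \emph{monotonicity} of two-edge distances across the stage, namely that no node's two-edge distance to a $WT_1$ branch decreases. Every merger in stage $\ell$ travels along a shortest two-edge path: the search always proceeds from the right child of an $ST_2$ branch of label $(p+1,p,p)$ to an unvisited weak node of label $p-1$, i.e.\ strictly toward a $WT_1$ branch, and a split/backtrack only reverses such an arc. This is exactly the setting of Dinic's distance lemma: augmenting along shortest paths can introduce residual arcs only from a layer $p$ to a layer $p+1$ (same-layer arcs being impossible by the bipartite parity of two-edge distances), never a shortcut that would lower a distance. The branch taxonomy of Lemma \ref{lem:branchtypes} together with the label relations in Lemma \ref{lem:monotone} lets me check this arc-by-arc for the four admissible branch types. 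Combining the two facts, let $v$ enter stage $\ell$ as a lowest-labeled strong node and suppose it is still strong at the end; by monotonicity its distance did not fall below $\ell$, and by the blocking property there is no residual successful path of length exactly $\ell$ from $v$, so its shortest two-edge distance is now at least $\ell+1$ and global relabeling sets $\ell_v \geq \ell+1 > \ell$.

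I expect the monotonicity step to be the main obstacle, since it is the only place where one must reason about the effect of the merger/split/backtrack operations on \emph{all} residual two-edge distances rather than on a single tracked path. The delicate point is to verify that the reversed arcs produced by a push or a backtrack never create a two-edge shortcut between layers; the bipartite parity of two-edge distances---each residual step changes the count of $V_1$-nodes by exactly one, as already exploited in the proof of Lemma \ref{lem:visited}---is what excludes this, but making the layer bookkeeping precise for the temporarily-formed $ST_2$ branches is where the care is needed.
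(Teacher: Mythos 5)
Your proof is correct and follows essentially the route the paper intends: the paper states this as an unproved corollary of Lemma \ref{lem:visited}, whose surrounding discussion establishes that \emph{all} successful paths of length $\ell$ are found (and exhausted) during stage $\ell$. Your two-step argument---blocking (no residual successful path of length $\ell$ survives the stage) plus monotonicity of shortest two-edge residual distances under shortest-path augmentation---is exactly the standard Dinic/Hopcroft--Karp reasoning the paper is implicitly invoking, merely spelled out in full.
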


Performing global relabeling is equivalent to generating a so-called {\em layered network} (as in Dinic's maximum flow algorithm). In a layered network, each layer consists of all branches with a particular label. In stage $\ell$, layer $0$ consists of all nodes that have distance label $0$, i.e., only $WT_1$ branches.  Layers $1$ through $\ell-1$ consist of $WT_2$ branches, and layer $\ell$ consists of $ST_1$ branches.

We now describe the procedure for generating the layered network, which is the critical part of our algorithm.  Let the {\em $k$-layer} $(0 \leq k \leq \ell)$ be the set of nodes with label $k$.  The layered network can be generated by scanning all backward residual arcs from the sink using a Breadth-First-Search (BFS).

In a naive implementation of BFS one would start with all $WT_1$ branches (label 0) and look at all incoming arcs in the residual network to generate the 1-layer. This could take $O(m)$ work and is expensive.   An alternative approach is to check for each $WT_2$ branch whether it is in the $1$-layer by checking if there is an arc from its child node to a $WT_1$ branch. Using the fact that labels are non-decreasing, we only need to check this for $WT_2$ branches that were of label $1$ in the previous layered network. \medskip

\noindent{\bf Generating the 1-layer:}~The neighbors of each $V_1$ child node of label $1$ in a $WT_2$ branch are scanned to identify a residual arc to a $WT_1$ branch. Since there are at most $\kappa$ $WT_2$ branches, we need to scan at most $\kappa$ neighbors of each $V_1$-node of label $1$ to identify a $WT_1$ neighbor node or determine that none exists (in which case the branch does not belong to the $1$-layer). If a $WT_1$ node is not adjacent to a $V_1$-node of label $1$ in stage $\ell$ then it cannot be adjacent to that $V_1$-node in any later stage.  This is because no new $WT_1$ branches are created in any stage.  Hence, each arc needs to be scanned at most once throughout the algorithm. By maintaining a pointer for each $V_1$-node to the last arc scanned at each stage, and resuming the search from that arc in the next stage, we can ensure that each arc is scanned at most once throughout the algorithm.

\begin{claim}
The total work done to generate the 1-layer {\em throughout the algorithm} is $O(\min\{\kappa^2, m\})$.
\end{claim}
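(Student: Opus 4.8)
The plan is to prove the two bounds $O(m)$ and $O(\kappa^2)$ independently and then take the smaller one. The $O(m)$ half is essentially already assembled from the pointer mechanism: because no new $WT_1$ branch is ever created, a $V_2$-neighbor of a given $V_1$-child that is found to be non-$WT_1$ (i.e.\ matched) at some stage remains matched in every later stage. Hence, when a $V_1$-node resumes its scan from its stored pointer at the next stage, that pointer only ever moves forward over the adjacency list and never rewinds. Each arc is therefore scanned $O(1)$ times across the whole algorithm, and summing over all $m$ arcs gives $O(m)$ total.

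For the $O(\kappa^2)$ half, the first step is to pin down exactly which arcs are ever touched during $1$-layer generation: only an arc $(v,w)$ with $v$ the $V_1$-child of a $WT_2$ branch that had label $1$ in the previous layered network, scanned while searching for a $WT_1$ neighbor $w$. I would classify each such scan as a \emph{skip} (when $w$ is currently matched, so the pointer advances past it) or a \emph{stop} (when $w$ is an unmatched $WT_1$ node, terminating the search for $v$). The structural fact I would establish is the monotonicity of the matched vertex sets: since the matching size only increases and, as noted, no $V_2$-node is ever unmatched once matched, the set of matched $V_2$-nodes grows monotonically to its final size $\kappa$, and symmetrically at most $\kappa$ distinct $V_1$-nodes are ever children of $WT_2$ branches. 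These two facts drive the count.

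The core estimate is then on the skip scans. Using the forward-only pointer, a fixed $V_1$-child $v$ skips each of its neighbors at most once over the entire run, and every neighbor it skips is a matched $V_2$-node, hence one of the at most $\kappa$ ever-matched $V_2$-nodes; so the skip scans charged to a single $v$ total $O(\kappa)$. Since at most $\kappa$ distinct $V_1$-nodes ever serve as $WT_2$-children of label $1$, the skip scans sum to $O(\kappa^2)$. Combining with the $O(m)$ bound and taking the minimum yields the claim.

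The part I expect to require the most care—and the main obstacle—is the \emph{stop} scans and the re-verification, at each new stage, that a branch already known to be in the $1$-layer is still there; done naively this reintroduces a factor equal to the number of stages and could overwhelm $O(\kappa^2)$. The resolution I would pursue is to keep a branch flagged ``in the $1$-layer'' for free until its confirming $WT_1$-neighbor $w$ becomes matched, and to charge the one forced resumption of $v$'s scan to that event; each such event is a distinct successful merger, of which there are at most $\kappa$ in total, so these scans are also $O(\kappa)$-bounded and are absorbed into the $O(\kappa^2)$ term. Making this charging argument airtight, so that no scan is counted once per stage, is the delicate step of the proof.
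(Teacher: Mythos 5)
Your proposal is correct and follows essentially the same route as the paper: the $O(m)$ bound comes from the forward-only pointer (justified, as in the paper, by the fact that no new $WT_1$ branches are ever created), and the $O(\kappa^2)$ bound from charging each skipped scan to one of the at most $\kappa$ matched $V_2$-nodes for each of the at most $\kappa$ $V_1$-children of $WT_2$ branches. Your extra bookkeeping for the per-stage re-verification (``stop'') scans is a refinement the paper leaves implicit, and charging those scans to successful mergers is a sound way to close that small gap.
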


\noindent{\bf Generating layers $2$ through $\ell-1$:}~Given the set of $WT_2$ branches in layer $p$, the incoming residual arcs into the root of each $WT_2$ branch in layer $p$ are examined to obtain neighbors in layer $p+1$. Scanning the incoming residual arcs of a $WT_2$ branch stops if an $ST_1$ neighbor is found, since then $p=\ell -1$ and the $WT_2$ branch is in the $\ell-1$ layer its $ST_1$ neighbor is thus in the $\ell$-layer.

There are at most $\kappa$ incoming arc scans for each $WT_2$ root required to label all the $WT_2$ branches in the next layer or find a $ST_1$ branch.   Since there are at most $\kappa$ roots of $WT_2$ branches and each arc is scanned at most once in each stage, so total work done in generating layers 2 through $\ell-1$ at each stage is $O(\min\{\kappa^2, m\})$.

\begin{claim}
The work done per stage to generate the layers $2$ through $\ell-1$ is $O(\min\{\kappa^2, m\})$.
\end{claim}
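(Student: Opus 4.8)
The plan is to prove the stated $O(\min\{\kappa^2,m\})$ bound by establishing the two bounds $O(m)$ and $O(\kappa^2)$ separately and taking their minimum, exactly paralleling the treatment of the $1$-layer in the preceding claim, but with one important caveat: here the accounting must be done \emph{per stage} rather than once for the whole run, because $WT_2$ branches---unlike $WT_1$ branches---are created and destroyed from stage to stage, so the scanning pointers are reset at the start of every stage. Throughout, I would rely on Lemma~\ref{property:threeTypes}, so that during global relabeling only $WT_1$, $WT_2$, and $ST_1$ branches are present; on Property~\ref{property:branchbound}, which caps the number of $WT_2$ branches at $\kappa$; and on Lemma~\ref{lem:monotone}, which guarantees both that the two nodes of a $WT_2$ branch share a label and that labels are monotone down each branch.

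The $O(m)$ half is easy: generating layers $2$ through $\ell-1$ is a breadth-first search that, from each $WT_2$ root already placed in layer $p$, scans the incoming residual arcs to discover the $V_1$-children (hence the roots) of the $WT_2$ branches belonging in layer $p+1$, stopping at a root once an $ST_1$ neighbor reveals $p=\ell-1$. Maintaining a last-scanned-arc pointer at each $WT_2$ root and never rescanning, each residual arc is examined at most once in the stage, giving $O(m)$. For the $O(\kappa^2)$ half I would argue that the BFS effectively runs on the ``contracted'' graph whose nodes are the $WT_2$ branches: every useful scan traverses a residual arc from the $V_2$-root of a layer-$p$ branch to the $V_1$-child of a not-yet-labeled branch destined for layer $p+1$, and since each $WT_2$ branch has exactly one root and one child and there are at most $\kappa$ of them (Property~\ref{property:branchbound}), there are at most $\kappa$ roots to process and at most $\kappa$ distinct children to reach, hence at most $\kappa^2$ such arcs. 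Because every $V_1$-node is either an $ST_1$ root or a $WT_2$ child (Lemma~\ref{property:threeTypes}), every incoming arc of a processed root either lands on a $WT_2$ child (counted among the $\kappa^2$) or on an $ST_1$ node (handled by the stopping rule), and label monotonicity (Lemma~\ref{lem:monotone}) forbids reopening a node already given a smaller label. Summing over the at most $\kappa$ processed roots yields $O(\kappa^2)$ per stage, and combining with the $O(m)$ bound gives $O(\min\{\kappa^2,m\})$.

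The main obstacle will be making the informal ``at most $\kappa$ scans per $WT_2$ root'' count fully rigorous, i.e.\ cleanly separating the useful scans (those that discover a fresh next-layer branch) from the wasted scans on arcs leading into already-labeled branches, and confirming that the latter are still charged only $O(\kappa^2)$ in total. I would dispose of this by noting that a $WT_2$ root at distance $p<\ell-1$ can have no $ST_1$ neighbor (such a neighbor would force its label below $\ell$ via Lemma~\ref{lem:labels}(a)), so $ST_1$-detection scans occur only at layer $\ell-1$ and are capped at one per such root by the stopping rule; all remaining incoming arcs of a root are branch-to-branch arcs, of which there are at most $\kappa^2$ across the whole contracted graph. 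The per-stage pointer then ensures each such arc is inspected once and each non-advancing inspection is discarded in $O(1)$, so the two bounds are genuinely per stage and their minimum is as claimed.
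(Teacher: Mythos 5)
Your proof is correct and follows essentially the same route as the paper: bound the scans per $WT_2$ root by $O(\kappa)$ (each incoming residual arc lands on one of at most $\kappa$ $WT_2$ children or on an $ST_1$ node that triggers the stopping rule), multiply by the at most $\kappa$ roots for the $O(\kappa^2)$ bound, and use the once-per-stage arc pointer for the $O(m)$ bound. Your added justification that an $ST_1$ neighbor can only occur at a root of layer $\ell-1$ (since otherwise the $ST_1$ branch would have label below $\ell$) is a welcome elaboration of a step the paper asserts without argument, but it does not change the approach.
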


The layered network generated has all layers of weak branches up to the $\ell-1$-layer, and some $ST_1$ branches in the $\ell$-layer. This $\ell$-layer may not contain all the $ST_1$ branches of label $\ell$ since not all incoming arcs to the $WT_2$ branches were examined. However, by Corollary \ref{cor:nodethroughput}, it is sufficient to have at most one neighbor $ST_1$ branch for every $WT_2$ branch of label $\ell-1$. Therefore, instead of explicitly generating the entire $\ell$-layer, once an $ST_1$ branch of label $\ell$ is found, it is determined that the $WT_2$ branch is in the $\ell-1$-layer -- the last layer of weak branches.

For each unvisited $WT_2$ branch of label $\ell-1$, we scan its incoming arcs to check for an $ST_1$ neighbor. If such an $ST_1$ branch is found, a sequence of mergers is initiated from this strong branch. If a successful path is found, or if no more mergers are possible from this strong branch, another unvisited $WT_2$ branch in the $\ell-1$ layer is chosen and its incoming arcs are scanned to identify a new $ST_1$ branch from which mergers are initiated. This continues until all the $WT_2$ branches in the $\ell-1$ layer have been visited or have been scanned for a neighboring $ST_1$ branch.

There are at most $\kappa$ branches in the $\ell-1$ layer. For each such branch, at most $2 \kappa$ incoming arcs need to be scanned to identify a new neighboring $ST_1$ branch. Also, each arc in the network is examined at most once, so the work done per stage in identifying the necessary $ST_1$ branches in the $\ell$-layer is $O(\min\{m, \kappa^2\})$.

Each arc participates in at most one merger per stage, each of which requires $O(1)$ work; mergers thus require $O(\min\{\kappa^2, m\})$. A backtrack operation is performed at most once for each $WT_2$ branch, so work done in backtracking is $O(\kappa)$ per stage.

\begin{lemma}
The work done per stage including generating the layered network, mergers, and backtrack operations is $O(\min\{\kappa^2, m\})$.
\end{lemma}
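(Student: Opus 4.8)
The plan is to prove this lemma as a straightforward bookkeeping consolidation: I would sum the work contributed by each of the individual operations that constitute a stage, every one of which has already been bounded in the preceding paragraphs. First I would collect the five components: (i)~generating the $1$-layer, (ii)~generating layers $2$ through $\ell-1$, (iii)~identifying the necessary $ST_1$ branches in the $\ell$-layer, (iv)~performing the mergers, and (v)~performing the backtracks. The goal is to show that their sum is $O(\min\{\kappa^2,m\})$.

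For components (ii), (iii), and (iv), the earlier analysis already supplies a per-stage bound of $O(\min\{\kappa^2,m\})$ directly, so nothing further is needed there. For component (v), the bound established is $O(\kappa)$ per stage; since $\kappa \leq \kappa^2$ and $\kappa \leq m$ (the $\kappa$ matched edges are distinct arcs of the network, so $\kappa \le |E| \le m$), this term is dominated by $O(\min\{\kappa^2,m\})$ and contributes nothing to the asymptotic order.

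The one point requiring a moment of care is component (i). The claim for the $1$-layer was phrased as a bound of $O(\min\{\kappa^2,m\})$ on the work done \emph{throughout the entire algorithm}, resting on the amortization argument that each arc out of a $V_1$-node of label $1$ is scanned at most once across all stages via the persistent per-node pointer. To fit this into a per-stage statement I would observe that the work spent generating the $1$-layer within any single stage cannot exceed the total work spent on it across all stages, which is already $O(\min\{\kappa^2,m\})$; hence the per-stage contribution of (i) is also $O(\min\{\kappa^2,m\})$. In fact the amortization makes the summed-over-all-stages contribution of (i) stay within this single bound, which is strictly stronger than what the per-stage lemma requires.

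Finally, summing the five components yields a per-stage total of $O\bigl(\min\{\kappa^2,m\}\bigr) + O(\kappa) = O\bigl(\min\{\kappa^2,m\}\bigr)$, which is the claimed bound. I do not expect any genuine obstacle: the lemma merely consolidates the individually proved bounds, and the only subtlety is interpreting the amortized $1$-layer estimate correctly as an upper bound on the per-stage work rather than reading it as a per-stage cost in its own right.
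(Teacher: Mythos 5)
Your proof is correct and follows essentially the same route as the paper, which states this lemma as a direct consolidation of the per-stage bounds established in the immediately preceding paragraphs (layers $2$ through $\ell-1$, the $\ell$-layer scan, mergers, and the $O(\kappa)$ backtracks), with the $1$-layer cost covered by its amortized whole-algorithm bound. Your explicit remark that the amortized $1$-layer bound dominates any single stage's contribution is the right way to reconcile the per-stage phrasing with the paper's throughout-the-algorithm claim.
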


\begin{lemma}
\label{lem:stages}
The number of stages in the algorithm is $O(\sqrt{\kappa})$.
\end{lemma}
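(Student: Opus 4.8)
The plan is to adapt the classical phase-counting argument of Hopcroft and Karp to the two-edge distance setting established above. The argument combines two facts. First, by Corollary~\ref{lem:increasinglabel} the shortest two-edge distance from a strong ($ST_1$) node to a $WT_1$ branch strictly increases after every stage; equivalently, after $t$ stages there is no successful path of two-edge length $\le t$ remaining. Second, every stage that does not terminate the algorithm finds at least one successful path (by the maximality of the set of mergers at each stage and Lemma~\ref{lem:visited}, all successful paths of the current length are found), and each successful path increases the size of the current matching by exactly one. I would split the execution at the stage index $t = \lceil\sqrt{\kappa}\rceil$ and bound the two parts separately.

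The first part is immediate: by the strict increase of the shortest two-edge distance, there are at most $\lceil\sqrt{\kappa}\rceil = O(\sqrt{\kappa})$ stages before the shortest remaining successful path has two-edge length at least $\sqrt{\kappa}$. For the second part I would bound the remaining \emph{deficiency} $\kappa - |M|$, where $M$ is the matching present after these first $\sqrt{\kappa}$ stages. The key step is the following counting argument. Let $M^{*}$ be a maximum matching of cardinality $\kappa$. The symmetric difference $M \oplus M^{*}$ decomposes into vertex-disjoint alternating paths and cycles, and it contains at least $|M^{*}| - |M| = \kappa - |M|$ vertex-disjoint augmenting paths with respect to $M$. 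Each such augmenting path is itself a successful path in the residual network, so by the first part its two-edge length is at least $\sqrt{\kappa}$; a successful path of two-edge length $d$ alternates free and matched edges through $d$ distinct $V_1$-nodes and therefore traverses at least $d-1$ edges of $M$. Since these augmenting paths are vertex-disjoint, the edges of $M$ they use are disjoint, and $|M| \le \kappa$ forces their number to be at most $\kappa/(\sqrt{\kappa}-1) = O(\sqrt{\kappa})$. Hence $\kappa - |M| = O(\sqrt{\kappa})$ after the first $\sqrt{\kappa}$ stages.

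Finally, since each subsequent stage raises $|M|$ by at least one until the algorithm terminates, at most $O(\sqrt{\kappa})$ additional stages can occur. Adding the two contributions gives $O(\sqrt{\kappa})$ stages in total, as claimed. The step I expect to be the main obstacle is the counting argument of the previous paragraph, specifically translating the algorithm's notion of a ``successful path of two-edge length $\ell$'' into a statement about the number of matched edges consumed by each augmenting path in $M \oplus M^{*}$, and then invoking vertex-disjointness together with the bound $|M|\le\kappa$ (which is also consistent with Property~\ref{property:branchbound}). The remaining verifications---that the shortest-path length strictly increases, and that each non-terminal stage augments $M$ by at least one---follow directly from Corollary~\ref{lem:increasinglabel} and the maximality guaranteed by Lemma~\ref{lem:visited}, so they should require only brief justification.
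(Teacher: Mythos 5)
Your proof is correct, but the second half takes a genuinely different route from the paper. Both arguments agree on the first part (at most $\sqrt{\kappa}$ stages of label $\le\sqrt{\kappa}$, by the strict increase of the lowest strong label from Corollary~\ref{lem:increasinglabel}) and on the fact that each non-terminal stage augments the matching at least once. Where you diverge is in bounding the residual deficiency after the cutoff: the paper observes that the unit-capacity residual arcs inside the $WT_2$ branches of any single layer $p>0$ of the layered network form an $s,t$-cut, so with at most $\kappa$ branches spread over $L>\sqrt{\kappa}$ layers the smallest layer --- and hence the remaining maximum flow --- has size at most $\kappa/L<\sqrt{\kappa}$. You instead invoke the Berge-style decomposition of $M\oplus M^{*}$ into vertex-disjoint alternating paths, note that each of the $\ge\kappa-|M|$ augmenting paths among them has two-edge length at least $\sqrt{\kappa}$ and therefore consumes at least $\sqrt{\kappa}-1$ distinct edges of $M$, and conclude $\kappa-|M|\le\kappa/(\sqrt{\kappa}-1)=O(\sqrt{\kappa})$. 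Your accounting of matched edges along a successful path of two-edge length $d$ (namely $d-1$ of them, one per internal $WT_2$ branch) is right, and the vertex-disjointness makes the pigeonhole go through. The trade-off: the paper's cut argument lives entirely in the algorithm's own layered-network vocabulary and transfers directly to general unit-capacity networks in the Even--Tarjan style, whereas your symmetric-difference argument is the classical Hopcroft--Karp route, is self-contained for matchings, and avoids having to justify that the layer arcs form a cut. Either is acceptable; just note the degenerate case $\kappa=1$ in the quantity $\kappa/(\sqrt{\kappa}-1)$, which the $O(\cdot)$ absorbs but which you should phrase so as not to divide by zero.
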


The proof is along the lines of those of Even and Tarjan \cite{EveT75} and Hopcroft and Karp \cite{HopK73}.  Details are provided in Section \ref{proof:stages} of the appendix.

\begin{theorem}
For input given in the form of adjacency lists, the complexity of the {\sf matching-pseudoflow} algorithm is $O(\min\{n_1\kappa, m\} + \min\{\kappa^2,m\} \sqrt{\kappa})$.
\end{theorem}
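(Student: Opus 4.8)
The plan is to assemble the theorem from the per-stage and algorithm-wide cost bounds already established, treating it essentially as a bookkeeping/summation argument rather than proving anything substantively new. I would begin by isolating the one-time initialization cost. The greedy initialization (stage $1$) scans each arc at most once and examines at most $\kappa$ neighbors of each $V_1$-node before finding an unmatched $V_2$-node or concluding that none exists, giving a cost of $O(\min\{m, n_1\kappa\})$. This term is incurred exactly once and contributes the first summand of the claimed bound.

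Next I would account for the work inside the main loop by multiplying the per-stage cost by the number of stages. By Lemma \ref{lem:stages} there are $O(\sqrt{\kappa})$ stages, and by the preceding per-stage lemma the work in each stage---generating layers $2$ through $\ell-1$, identifying the necessary $ST_1$ branches in the $\ell$-layer, performing mergers, and performing backtracks---is $O(\min\{\kappa^2, m\})$. Multiplying these yields $O(\sqrt{\kappa}\,\min\{\kappa^2, m\})$, which is the second summand.

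The one point requiring care is the generation of the $1$-layer, whose cost is bounded \emph{across the whole algorithm} (not per stage) by $O(\min\{\kappa^2, m\})$, via the amortization argument that maintains a per-node pointer so that each arc is scanned at most once throughout the run. I would add this as a separate one-time term rather than charging it once per stage. Since $\min\{\kappa^2, m\} \le \sqrt{\kappa}\,\min\{\kappa^2, m\}$, this term is dominated by the loop cost and is absorbed into the second summand. Summing the three contributions then gives $O(\min\{n_1\kappa, m\} + \sqrt{\kappa}\,\min\{\kappa^2, m\})$, as claimed.

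I would note that the substantive content of the proof has already been discharged in the supporting lemmas: the $O(\sqrt{\kappa})$ stage bound (the Even--Tarjan / Hopcroft--Karp style argument deferred to the appendix) and the per-stage $O(\min\{\kappa^2, m\})$ bound, especially the amortized $1$-layer generation. The main obstacle is therefore not in this final assembly but in those lemmas; here the only subtlety is keeping the amortized, algorithm-wide costs (the $1$-layer) separate from the genuinely per-stage costs, so that the multiplication by the number of stages is applied only where it is warranted and no term is inadvertently overcounted.
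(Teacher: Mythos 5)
Your proposal is correct and follows essentially the same argument as the paper's proof: it sums the one-time initialization cost $O(\min\{n_1\kappa,m\})$, the per-stage cost $O(\min\{\kappa^2,m\})$ times the $O(\sqrt{\kappa})$ stage bound, and the amortized algorithm-wide $1$-layer cost $O(\min\{\kappa^2,m\})$, with the latter absorbed into the second term. The extra care you take to keep the amortized $1$-layer cost separate from the per-stage costs is exactly the right bookkeeping and matches the paper's accounting.
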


\begin{proof}
The complexity of initialization is $O(\min\{n_1\kappa,m\})$. There are $O(\sqrt{\kappa})$ stages in the algorithm, each of which takes $O(\min\{\kappa^2, m\})$. The work to generate layer 1 is $O(\min\{\kappa^2,m\})$ throughout the algorithm. The total complexity is therefore $O(\min\{n_1\kappa, m\} + \min\{\kappa^2,m\} \sqrt{\kappa})$. \qed
\end{proof}

\noindent A high-level description of the {\sf matching-pseudoflow} algorithm is given in Figure \ref{algFigure:matchingPseudoflow}.

\algfigure{0.9}{htb}{ \Comment{The procedure finds a maximum matching in a bipartite graph $G = (V_1; V_2, E)$. It terminates with a set of $ST_1$, $WT_1$, and $WT_2$ branches; the set of edges in the $WT_2$ branches form a maximum cardinality matching.}\n
\Procedure matching-pseudoflow: \n
    \Begin \n
    Generate a greedy maximal matching of $ST_1$, $WT_1$, and $WT_2$ branches; \n
    Generate a layered network; \n
    Mark all nodes in the layered network as unvisited; \n
    \While the lowest label of an $ST_1$ branch is less than $|V_1|$ \Do \n
        \While $\exists$ a lowest labeled unvisited $V_1$-node $v$ of label $\ell$ \Do \n
            \If $\exists$ a merger from $v$ to an unvisited node of label $(\ell$-$1)$ \Do \n
                Perform merger (as in Figures \ref{fig:dfstree2}(a)--(b)); \n
                \If merger leads to an augmentation \Do \n
                    Mark all nodes along the successful path as visited; \EndLoop \EndLoop \n
            \Else \Do \n
                Mark branch containing node $v$ as visited; \n
                Perform backtrack (as in Figures \ref{fig:dfstree2}(c)--(d)); \EndLoop \EndLoop \n
        Generate a new layered network; \n
        Mark all nodes in the layered network as unvisited; \EndLoop \EndLoop \n
    \End \EndLoop
}{\label{algFigure:matchingPseudoflow}High-level description of the {\sf matching-pseudoflow} algorithm.}

Note that the {\sf matching-pseudoflow} algorithm could be viewed as an efficient implementation of Dinic's algorithm with two-edge pushes: a successful path of mergers is essentially an augmenting path, while the procedure for generating the layered network is the same once greedy initialization has been performed.  Similarly, the {\sf matching-pseudoflow} algorithm could also be interpreted as an implementation of push-relabel with two-edge pushes that uses delayed relabeling and global relabeling.

\subsection{Matching-pseudoflow with word operations}

The complexity of the matching-pseudoflow algorithm can be further
improved to $O\left (\min \{n_1\kappa, \frac{n_1n_2}{\lambda}, m\} +
\kappa^2 + \frac{\kappa^{2.5}}{\lambda}\right )$ using boolean word
operations, where $\lambda$ is the length of a word, as done by Cheriyan and Mehlhorn \cite{CheM96}.  The key
idea is to represent the graph adjacency structure using words and
performing boolean operations on these words to find merger arcs.
Details are provided in Section \ref{sec:words} of the appendix.

\subsection{A combined algorithm}
\label{sec:combined}

We follow the approach of Alt et al. \cite{AltBMP91} to combine the
{\sf matching-pseudoflow} with and without words to describe new
complexity bounds.  The new bound is obtained by applying the {\sf
matching-pseudoflow} without words until a certain  stage $\ell$ and
then using word operations for the rest of the algorithm. The greedy
initialization procedure is performed with words, which has a
complexity of \mbox{$O(\min\{n_1\kappa, \frac{n_1n_2}{\lambda},m\})$}.
The words SUB-IN and SUB-OUT described in section \ref{sec:words} are
also constructed irrespective of the algorithm used. These two
operations have a complexity of
\mbox{$O(\min\{n_1\kappa,\frac{n_1n_2}{\lambda},m\} + \kappa^2)$}.
\smallskip

\noindent {\bf Case 1: $\kappa^2 \in O(m)$} \smallskip

The work done until stage $\ell$ without using words is $O(\ell \kappa^2)$. Following analysis similar to that in the proof of Lemma \ref{lem:stages}, the remaining number of stages is $\kappa/\ell$.  The work done beyond stage $\ell$ using word operations is $O(\kappa^{2.5}/\lambda)$.  The value of $\ell$ that minimizes the total work done is obtained by solving for $\ell$ in the equation $\ell \kappa^2 = \frac{\kappa}{\ell} \frac{\kappa^{2}}{\lambda}$.

This yields $\ell = \sqrt{\kappa/\lambda}$ and an overall complexity of \mbox{$O(\min\{n_1\kappa,\frac{n_1n_2}{\lambda},m\} + \kappa^2 + \frac{\kappa^{2.5}}{\sqrt{\lambda}})$}, which is dominated by the complexity of the algorithm with word operations. Thus, when $\kappa^2 \in O(m)$, combining the two algorithms does not provide any benefit. \smallskip

\noindent {\bf Case 2: $\kappa^2 \in \Omega(m)$} \smallskip

The work done until stage $\ell$ is $O(\ell m)$.  We again find the best value of $\ell$ by solving $\ell m = \frac{\kappa}{\ell} \frac{\kappa^{2}}{\lambda}$, which gives $\ell = \kappa \sqrt{\frac{\kappa}{m \lambda}}$.  This leads to an overall complexity of \mbox{$O(\min\{n_1\kappa,\frac{n_1n_2}{\lambda},m\} + \kappa^2 + \kappa^{1.5}\sqrt{\frac{m}{\lambda}}))$}.

This is better than the $\sqrt{\kappa} m$ complexity when $\kappa^2 \in O(m \lambda)$.
Table \ref{Table:complexitySummary} summarizes the complexity results.
\begin{table}[ht]
\begin{center}
\begin{tabular}{|l|l|l|} \hline
{\bf Algorithm} & {\bf Best when} & {\bf Complexity} \\ \hline \hline
With word operations & $\kappa^2 \in O(m)$ & $O(\min\{n_1\kappa,\frac{n_1n_2}{\lambda},m\} + \kappa^2 + \kappa^{2.5}/\lambda)$\\ \hline
Combined & $\kappa^2 \in \Omega(m) \cup O(\lambda m)$ & $O(\min\{n_1\kappa,\frac{n_1n_2}{\lambda},m\} + \kappa^2 + \kappa^{1.5}\sqrt{\frac{m}{\lambda}})$ \\ \hline
Without word operations & $\kappa^2 \in \Omega(\lambda m)$ & $O\left (\min \{ n_1\kappa,m\} + \sqrt{\kappa}\min \{\kappa^2,m\}\right )$ \\ \hline
\end{tabular}
\caption{\label{Table:complexitySummary}Summary of complexity results for the {\sf matching-pseudoflow} algorithm.}
\end{center}
\end{table}

Note that the complexity expressions for the matching-pseudoflow algorithm without words, the combined algorithm, and the matching-pseudoflow algorithm with words are correspondingly faster than the algorithms of Hopcroft and Karp \cite{HopK73} with complexity $O(\sqrt{\kappa}m)$, Alt et al. \cite{AltBMP91} with complexity $O(n^{1.5} \sqrt{\frac{m}{\lambda}})$, and Cheriyan and Mehlhorn \cite{CheM96} with complexity $O(\frac{n^2.5}{\lambda})$.

\section{An experimental study}
\label{Sec:Expts}

\subsection{Implementations}

We developed eight pseudoflow implementations for bipartite matching:
\begin{enumerate}
\item Five ``regular'' pseudoflow implementations---highest label with FIFO buckets (pseudo\_hi\_fifo), highest label with LIFO buckets (pseudo\_hi\_lifo), highest label with Wave buckets (pseudo\_hi\_wave), lowest label with FIFO buckets (pseudo\_lo\_fifo), and lowest label with FIFO buckets (pseudo\_lo\_lifo).
\item Two ``free-arcs'' variants---{\sf pseudo\_hi\_free} and {\sf pseudo\_lo\_free} that are the highest and lowest label implementations of the free-arcs pseudoflow algorithm.  Both these implementation use LIFO buckets, which were found to be fastest in initial testing.  We use a global relabeling heuristic that periodically re-computes distance labels to all $V_1$-nodes in the graph.
\item The {\sf matching-pseudoflow} algorithm.
\end{enumerate}
The latest version of the code (version 1.01) is available at \cite{WebPS}.

Cherkassky et al. \cite{CheGMSS98} developed the following algorithms for bipartite matching that implement ``two-edge'' pushes:
\begin{itemize}
\item {\sf bim\_dfs} and {\sf bim\_bfs}: These two variants apply a simple depth-first-search and breadth-first-search respectively to find augmenting $s$-$t$ paths.
\item {\sf pr\_bim\_hi}, {\sf pr\_bim\_lo}, and {\sf pr\_bim\_fifo}: These are implementations of the highest label, lowest label, and FIFO push-relabel variants respectively.
\item {\sf bim\_ar}: The ``augment-relabel'' algorithm could be thought of as a hybrid between an augmenting path algorithm and push-relabel. It is similar in spirit to the basic algorithm described by Alt et al. \cite{AltBMP91} for the bipartite matching problem.
\item {\sf bim\_lds}: The ``label-directed-search'' variant uses a depth-first-search along with ``approximate'' distance labels that are periodically updated using global relabeling.
\end{itemize}

In addition, we tested {\sf dinic}, an implementation of Dinic's algorithm by Setubal \cite{Set93}, and {\sf abmp}, a simplified implementation by Setubal \cite{Set96} of the algorithm of Alt et al.\cite{AltBMP91} that is available as part of the BIPM solvers for bipartite matching \cite{BIPM}.  While {\sf dinic} was shown to have poor performance in practice, we use it mainly to compare it to {\sf matching-pseudoflow}, which is its closest pseudoflow counterpart.

The pseudoflow codes, {\sf dinic}, and {\sf abmp} were written in C and compiled with the {\tt gcc} compiler while those of Cherkassky et al. \cite{CheGMSS98} were written in C++ and compiled used the {\tt g++} compiler.  The {\tt -O4} compiler optimization flag was used in all cases.

\subsection{Computing environment}

The experiments were run on a Sun UltraSPARC workstation with a 270 MHz CPU and 192 MB of RAM.  The results of the machine calibration experiment as suggested by the First Dimacs Implementation Challenge \cite{Dimacs90} are shown in Table \ref{Table:machineCalib}.

\begin{table}[ht]
\begin{center}
\begin{tabular}{||c|ccc|ccc||}
\hline \hline
{} & \multicolumn{3}{c|}{Test 1} & \multicolumn{3}{c||}{Test 2}\\
{} & {real} & {user} & {system} & {real} & {user} & {system}\\
\hline
No optimization & 0.4 & 0.4 & 0.0 & 3.3 & 3.3 & 0.0\\
-O4 flag & 0.2 & 0.1 & 0.0 & 2.0 & 1.9 & 0.0\\
\hline \hline
\end{tabular}
\caption{\label{Table:machineCalib}Average running times for Dimacs machine calibration tests.}
\end{center}
\end{table}

\subsection{Differences between {\sf matching-pseudoflow} and {\sf dinic} in practice}

As noted earlier, the {\sf matching-pseudoflow} algorithm has parallels
to Dinic's algorithm (global relabeling is equivalent to generating a
layered network in Dinic's algorithm, and a successful path is
equivalent to an $s$-$t$ augmenting path).  However, what sets the {\sf
matching-pseudoflow} algorithm apart from Dinic's algorithm is the
manner in which global relabeling is performed.  In this section, we
demonstrate that the global relabeling procedure which leads to a
better theoretical complexity also makes a significant difference in
practice.

In the {\sf matching-pseudoflow}, only the nodes in the current
matching and their adjacent edges are examined during global
relabeling, whereas in Dinic's algorithm the entire network (including
all unmatched $V_2$-nodes) and their adjacent arcs are examined to
construct the layered network. Therefore, in practice, we would expect
the run-time of the {\sf matching-pseudoflow} algorithm to be dependent
largely on $\kappa$, while that of Dinic's algorithm to be dependent on
$m$.

We implemented the {\sf matching-pseudoflow} algorithm and compared it to the best-known implementation of Dinic's algorithm for bipartite matching \cite{Set93}.  Instances were generated in the following manner: given $n_1$, $n_2$, and the expected number of edges $\overline{m}$,  a graph is generated where each of the possible $n_1 n_2$ edges exists independently with probability $\overline{m}/(n_1 n_2)$.  We generated problems with $n_1=$ 16384, and $n_2/n_1$ ranging from 1 to 1.2 in steps of 0.01.  Thus, the most unbalanced graph had 19661 nodes.  For each of the 20 classes, we generated graphs with expected number of edges 81920, 163840, 245760, and 327680 respectively, which resulted in $\kappa$ being exactly or very close to $n_1$.  For each of the combinations of $n_2$ and $m$, we generated 10 instances and the time for each instance was averaged over 5 runs. Thus, each data point is the average of 50 runs.  The run-times are shown in Figure \ref{fig:unbalanced}.

\begin{figure}[ht]
\begin{minipage}{0.49\linewidth}
\begin{center}
\centerline{\includegraphics[width=\linewidth]{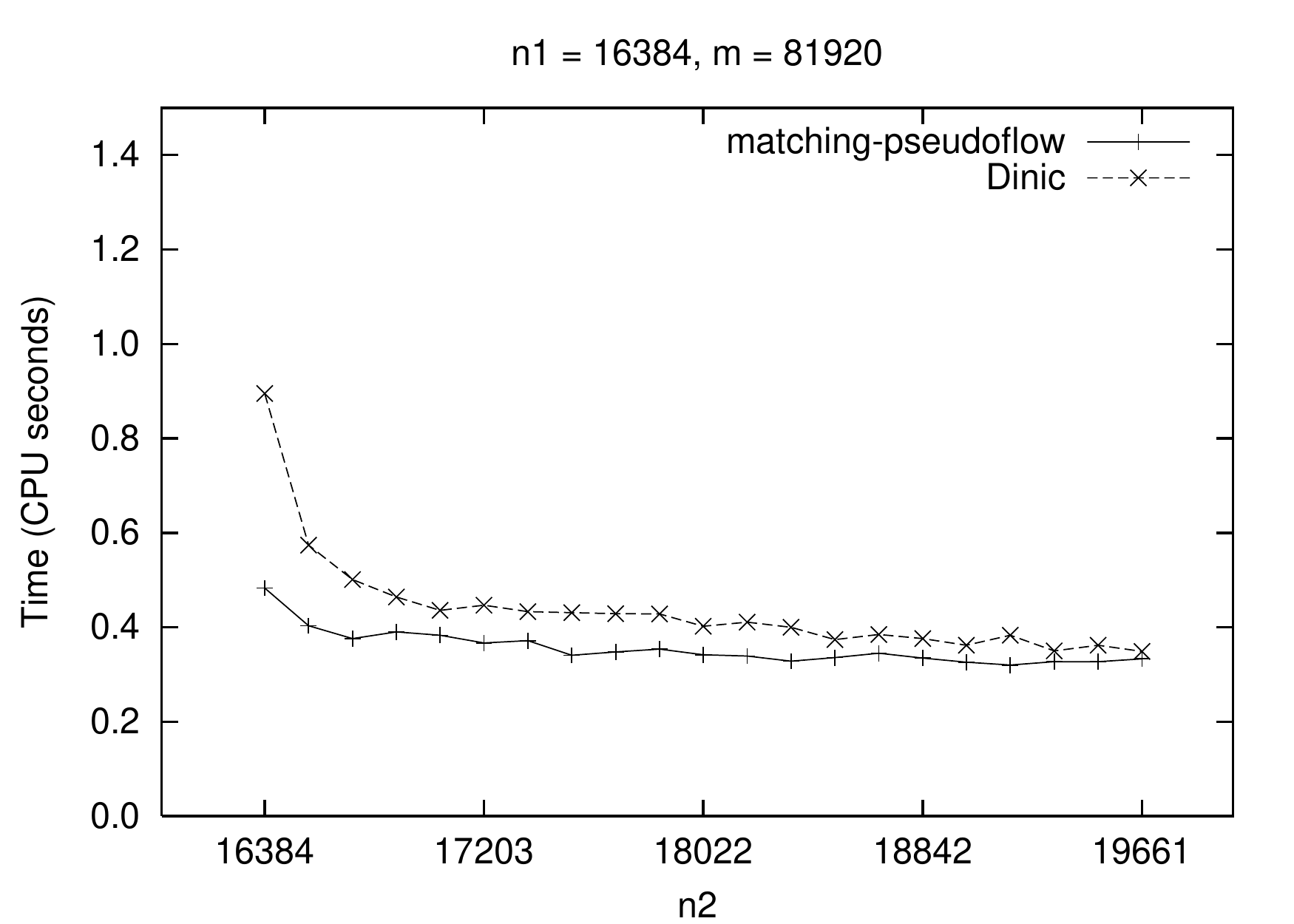}}
\centerline{\includegraphics[width=\linewidth]{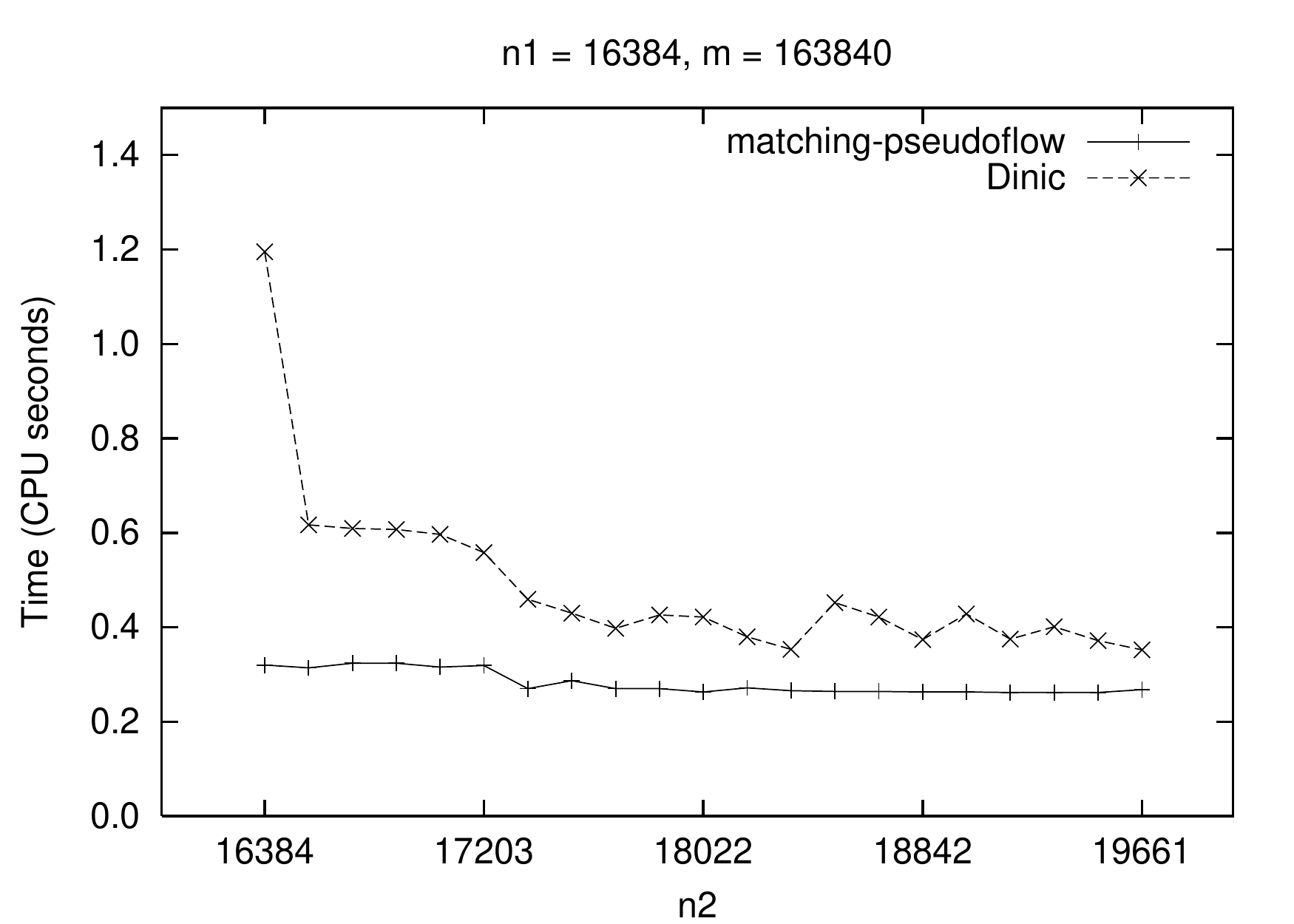}}
\end{center}
\end{minipage}\hfill
\begin{minipage}{0.49\linewidth}
\begin{center}
\centerline{\includegraphics[width=\linewidth]{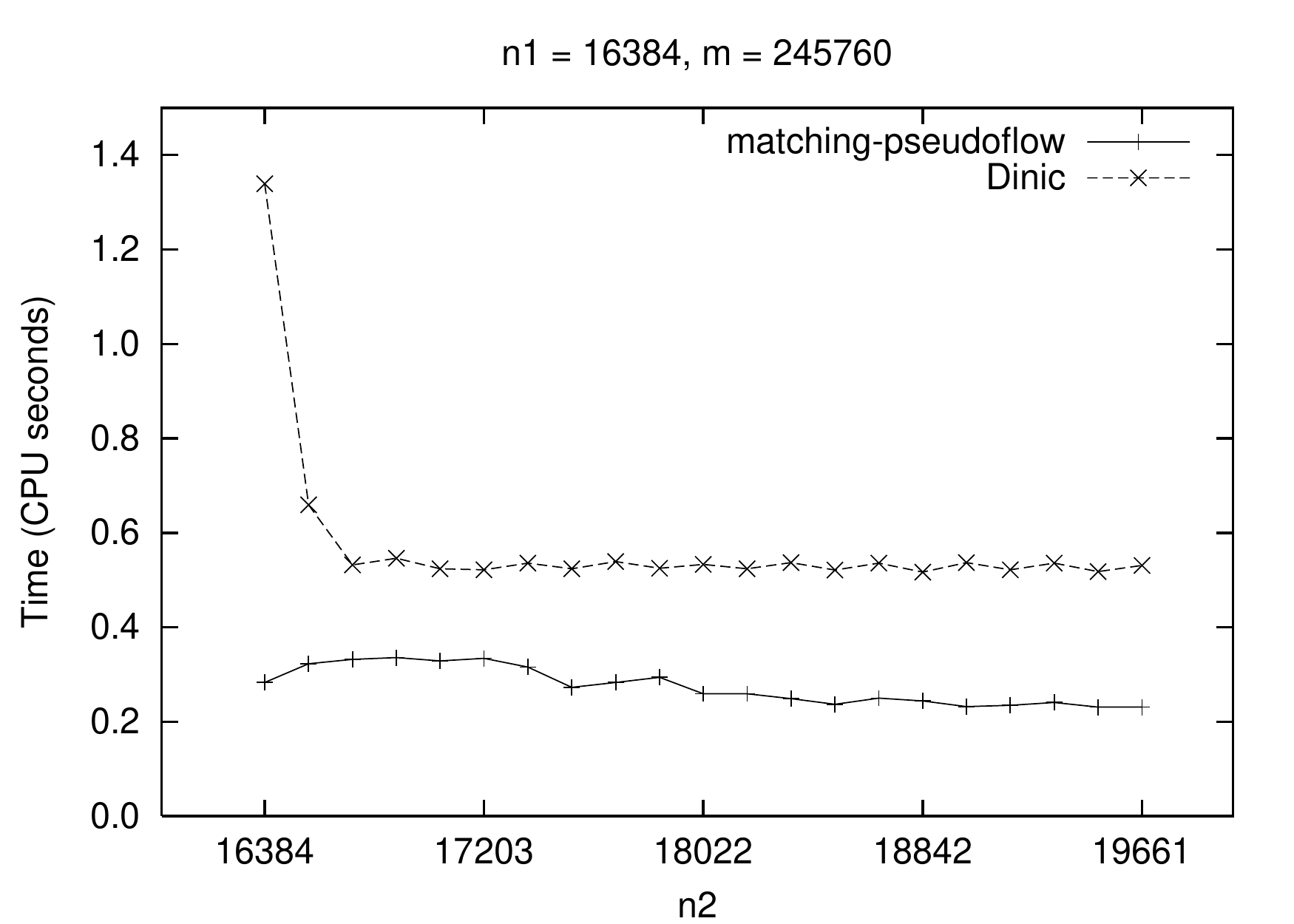}}
\centerline{\includegraphics[width=\linewidth]{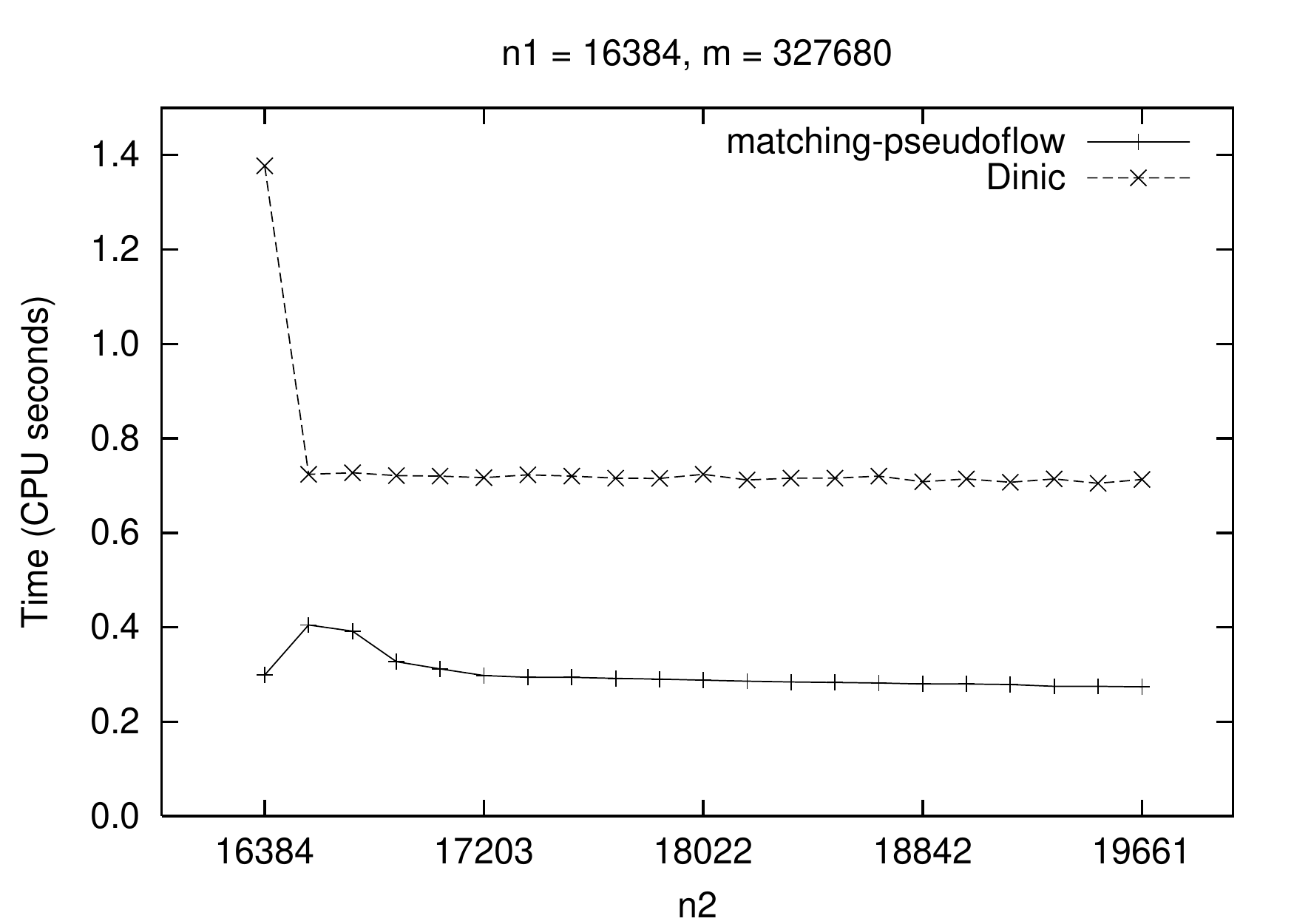}}
\end{center}
\end{minipage}
\caption{\label{fig:unbalanced} Run-times of the {\sf matching-pseudoflow} and Dinic's algorithms on random unbalanced instances.}
\end{figure}

There are two key observations to be made from the results.
\begin{enumerate}
\item The {\sf matching-pseudoflow} algorithm is more robust to imbalances in the graph.  For Dinic's algorithm, the balanced instances are the hardest to solve; even small imbalances in the graph ($n_2 = 1.01 n_1$) make drastic differences to the run-time.
\item The run-time of Dinic's algorithm goes up with the number of edges, but the run-time of the {\sf matching-pseudoflow} is virtually independent of the number of edges. In fact, the hardest instances for the {\sf matching-pseudoflow} appear to be the ones with fewest arcs.
\end{enumerate}

\subsection{Test instances}

We tested the algorithms on the seven problem families ({\sf hilo},
{\sf fewg}, {\sf manyg}, {\sf grid}, {\sf hexa}, {\sf rope}, and {\sf
zipf}) used by Cherkassky et al. \cite{CheGMSS98}. All the benchmark
instances were balanced, i.e., $n_1 = n_2$.  The instances are
described in greater detail in Section \ref{sec:instances} of the
appendix.

For each instance family, we report the results of our experiments for
\begin{itemize}
\item Five pseudoflow implementations: {\sf pseudo\_lo\_lifo} and {\sf pseudo\_hi\_wave}, which were found in initial testing to be the fastest variants for the lowest and highest label algorithms respectively, {\sf pseudo\_lo\_free}, {\sf pseudo\_hi\_free}, and {\sf matching-pseudoflow}.  All the pseudoflow variants were initialized with a greedy matching.
\item Three implementations of Cherkassky et al. \cite{CheGMSS98}: {\sf pr\_bim\_hi}, {\sf pr\_bim\_lo}, and the best implementation among {\sf pr\_bim\_fifo}, {\sf bim\_dfs}, {\sf bim\_bfs}, {\sf bim\_ar}, and {\sf bim\_lds}. The {\sf pr\_bim\_hi} and {\sf pr\_bim\_lo} implementation were tested on all families to compare them to the free-arcs pseudoflow variants.
\item Implementations {\sf abmp} and {\sf dinic}.
\end{itemize}

\subsection{Results}

\begin{itemize}
\item {\bf Hi-lo:~}  The run-times and operation counts for {\sf hilo} instances are presented in Figure \ref{Figure:hiloscaling} and Table \ref{Table:hiloopcount} respectively.

The {\sf hilo} family was designed to be much harder for the highest label push-relabel algorithm than the lowest label variant.  As expected, {\sf pseudo\_hi\_free} and {\sf bim\_hi\_free} are the slowest, though the former is faster than the latter.  The {\sf pseudo\_lo\_free} variant is the fastest of all algorithms, and is more than twice as fast as {\sf bim\_lo\_free}.

Interestingly, the {\sf pseudo\_hi\_wave} is faster than {\sf pseudo\_lo\_lifo}, showing once again that pseudoflow and push-relabel have very different behavior.

The {\sf pseudo\_hi\_wave} and  {\sf bim\_bfs} algorithms show the best scaling behavior and are likely to be faster than {\sf pseudo\_lo\_free} on larger instances.  The {\sf matching-pseudoflow} algorithm shows poor scaling behavior; it is faster than {\sf dinic} on smaller instances but becomes slower on large instances.

The {\sf pseudo\_hi\_wave} algorithm performs fewer arc scans and pushes (the dominant operations) than {\sf pseudo\_lo\_free}, yet is slower.  This suggests that the simplicity of the free-arcs implementations result in performance gains due to simplicity of code (which often leads to better compiler optimization).

\item {\bf Fewg:~}  The run-times and operation counts for {\sf fewg} instances are presented in Figure \ref{Figure:fewgscaling} and Table \ref{Table:fewgopcount} respectively.

The {\sf pseudo\_hi\_free} and {\sf pseudo\_lo\_free} algorithms are the fastest, and are more than twice as fast as the next-best algorithms ({\sf pr\_bim\_hi} and {\sf pr\_bim\_lo}).  The difference seems to be in the number of arc scans performed.

The {\sf matching-pseudoflow} and {\sf dinic} algorithms are the slowest, though {\sf matching-pseudoflow} is faster on all instance sizes.

\item {\bf Manyg:~}  The run-times and operation counts for {\sf manyg} instances are presented in Figure \ref{Figure:manygscaling} and Table \ref{Table:manygopcount} respectively.

The results are similar to the {\sf fewg} instances.  The {\sf pseudo\_hi\_free} and {\sf pseudo\_lo\_free} algorithms are the fastest, and are more than twice as fast as the next-best implementations ({\sf pr\_bim\_hi} and {\sf pr\_bim\_lo}), which is reflected in the number of arc scans performed.

While the {\sf matching-pseudoflow} implementation is faster than {\sf dinic} on all instance sizes, {\sf dinic} appears to scale better and is likely to be faster on larger instances.

\item {\bf Grid:~}  The run-times and operation counts for {\sf grid} instances are presented in Figure \ref{Figure:gridscaling} and Table \ref{Table:gridopcount} respectively.

The scaling behavior of all the pseudoflow variants is extremely non-robust, making a comparison of the algorithms difficult.  However, the {\sf pseudo\_hi\_free} variant is the fastest on all instance sizes with the {\sf pseudo\_lo\_free} variant close behind.  These variants are more than twice as fast as the next-best algorithms ({\sf pr\_bim\_hi} and {\sf pr\_bim\_lo}).

The {\sf matching-pseudoflow} algorithm is faster than {\sf dinic}, although its scaling behavior is not robust.

The {\sf pseudo\_hi\_free}, {\sf pseudo\_lo\_free}, {\sf pr\_bim\_hi} and {\sf pr\_bim\_lo} algorithms did not perform any global relabeling.  Hence, this would be a good family to understand the fundamental differences between the four implementations.  We see that the push-relabel variants perform a greater number of each of the operations; however, it is difficult to draw strong conclusions due to the non-robust scaling behavior of the pseudoflow variants.

\item {\bf Hexa:~}  The run-times and operation counts for {\sf hexa} instances are presented in Figure \ref{Figure:hexascaling} and Table \ref{Table:hexaopcount} respectively.

The {\sf pseudo\_hi\_free} and {\sf pseudo\_lo\_free} algorithms are the fastest, followed by {\sf pr\_bim\_lo} which is 1.5--1.8 times slower, which is reflected in the number of arc scans performed.

The {\sf matching-pseudoflow} algorithm is faster than {\sf dinic} by a similar factor.

\item {\bf Rope:~}  The run-times and operation counts for {\sf rope} instances are presented in Figure \ref{Figure:ropescaling} and Table \ref{Table:ropeopcount} respectively.

This was the only family where {\sf abmp} showed good performance, and is the fastest of all algorithms.  The {\sf matching-pseudoflow} is only marginally slower.  Both {\sf matching-pseudoflow} and {\sf pseudo\_lo\_free} scale better than {\sf abmp} and are likely to be faster on larger instances.   The {\sf matching-pseudoflow} algorithm is much faster than {\sf dinic}, while the {\sf pseudo\_hi\_free} algorithm is an order of magnitude faster than {\sf pr\_bim\_hi}.

The operation counts do not provide much insight.

\item {\bf Zipf:~}  The run-times and operation counts for {\sf zipf} instances are presented in Figure \ref{Figure:zipfscaling} and Table \ref{Table:zipfopcount} respectively.

The {\sf pseudo\_hi\_free} algorithm is the fastest, with {\sf matching-pseudoflow} close behind.  The next best algorithm is {\sf pseudo\_lo\_free} (note that this is the only family in which {\sl pseudo\_lo\_free}) is not the best or nearly best algorithm.

The difference between the highest and lowest label variants seems to be due to the fact that no global relabels are triggered in the highest label variant, while the lowest label variants perform one relabel.
\end{itemize}
\section{Discussion}

We developed several variants of the pseudoflow algorithm for bipartite matching. One variant, the {\sf matching-pseudoflow} algorithm was shown to have the best-known theoretical complexity for the problem.  While the {\sf  matching-pseudoflow} could be viewed as a specialized implementation of Dinic's algorithm, we believe that the {\sf matching-pseudoflow} is a natural extension of the generic pseudoflow algorithm, whereas Dinic's algorithm requires a greater degree of adaptation from its widely-accepted form.  We also compared the {\sf matching-pseudoflow} to Dinic's algorithm to point out the key differences between the two algorithms. 

We also developed several implementations of our algorithms and compared them to the fastest available codes based on the push-relabel algorithm. We draw the following conclusions from our experiments.
\begin{itemize}
\item Our best implementation was faster than that of Cherkassky et al. \cite{CheGMSS98} on each problem family tested.  The {\sf psuedo\_lo\_free} algorithm was the fastest or nearly fastest algorithm is six of the seven instance classes tested.  On the remaining family ({\sf zipf}), it was the third-fastest implementation and was within a factor of 2 of the fastest implementation.  We hence declare this to be the best pseudoflow variant overall and recommend that it be the algorithm of choice when solving bipartite matching problems.
\item The {\sf pseudo\_lo\_free} variant was generally faster than the {\sf pseudo\_hi\_free} variant.  This is consistent with the behavior of push-relabel where the lowest label variant was found to be faster than the highest label variant.  However, in the regular pseudoflow variant (without free arcs), the highest label variant was generally faster than the lowest label variant.
\item While the {\sf psuedo\_lo\_free} and {\sf pseudo\_hi\_free} could be viewed as special implementations of the push-relabel algorithm with a two-edge push, they are uniformly faster than the push-relabel implementations of Cherkassky et al. \cite{CheGMSS98}.

This difference is not due only to the different global relabeling frequency.  In the {\sf grid} instances where no global relabeling was performed, push-relabel variants performed more operations such as arc scans and pushes than the pseudoflow variants.
\item Although implementations based on the regular pseudoflow algorithm (i.e., without free arcs) were faster than push-relabel for unit capacity networks \cite{Cha07}, their performance is unimpressive for bipartite matching.  This is surprising given that bipartite matching is a special case of unit capacity networks.  In general, {\sf pseudo\_hi\_wave} and {\sf pseudo\_lo\_lifo} were at least a factor of 2 slower than the fastest algorithm.  However, their performance was comparable to that of {\sf pr\_bim\_hi} and {\sf pr\_bim\_lo} on four of the families.
\item The {\sf matching-pseudoflow} algorithm is generally faster than {\sf dinic}, and is nearly the fastest algorithm on two instance families.  This is particularly interesting because the {\sf matching-pseudoflow} algorithm could be viewed as an efficient implementation of Dinic's algorithm.  Past experimental studies \cite{Set93, Set96, CheGMSS98} have dismissed Dinic's algorithm as not being competitive in practice.  However, the results here show that a careful implementation of Dinic's algorithm (i.e., the {\sf matching-pseudoflow}) can be very efficient in practice.
\item The experiments comparing {\sf matching-pseudoflow} and {\sf dinic} on random graphs clearly shows that the theoretically efficient global relabeling procedure is efficient in practice as well.

On benchmark instances, {\sf matching-pseudofow} often performed a much greater number of global relabels.  This is because {\sf matching-pseudoflow} generates the layered network only until the lowest labeled layer of excess nodes and finds a blocking flow in this network, while {\sf dinic} creates a layered network consisting of {\em all} excess nodes in the network and finds a blocking flow in this network.
\end{itemize}

\newpage
\clearpage

\begin{figure}[ht]
\begin{center}
\includegraphics[width = \linewidth]{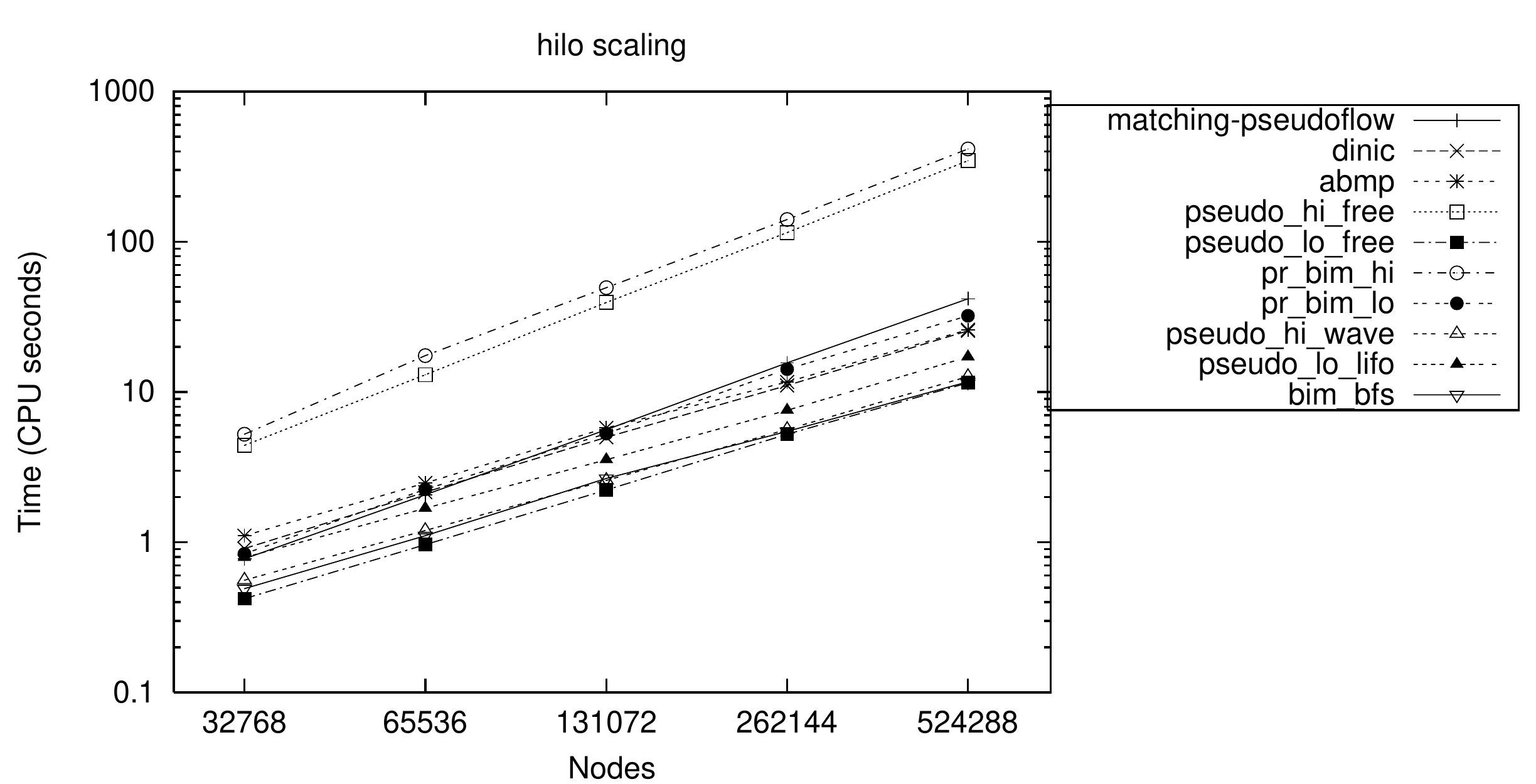}\bigskip
\begin{scriptsize}

\end{scriptsize}
\caption{\label{Table:zipfopcount} Operation counts for {\sf zipf} instances.}
\end{center}
\end{table}

\clearpage
\newpage
\appendix

\section{Proof of Lemma \ref{lem:stages}}
\label{proof:stages}

\begin{proof}
Our analysis of the number of stages is essentially the same as that of Dinic's algorithm as per Even and Tarjan \cite{EveT75} and Hopcroft and Karp \cite{HopK73}.

By construction, each $\ell$ layered network guarantees at least one successful path, as some $WT_1$ node is reachable through a sequence of mergers of length $\ell$. We divide the stages into two parts: the first part includes stages of labels no larger than $\sqrt{\kappa}$, and the second part consists of the stages with labels greater than $\sqrt{\kappa}$.  Since the label of the lowest labeled strong node strictly increases in each stage, the number of stages in the first part is at most $\sqrt{\kappa}$. We show that the second part can also have at most $\sqrt{\kappa}$ stages.

In the second part of the algorithm, the successful paths of length $L > \sqrt{\kappa}$ are equivalent to flow augmentations along a path of length $2L+1$.  We now observe that each $WT_2$ branch contains a residual arc of capacity $1$ from root to child, and the set of residual arcs in the $WT_2$ branches in any layer $p> 0$ of the network forms a valid cut in the residual graph.  This is since it separates the roots in this layer and nodes with label greater than $p$ from the children in the layer and nodes with label less than $p$ as in Figure \ref{fig:cut}.

\begin{figure}[ht]
\centerline{\includegraphics[width=0.85\linewidth]{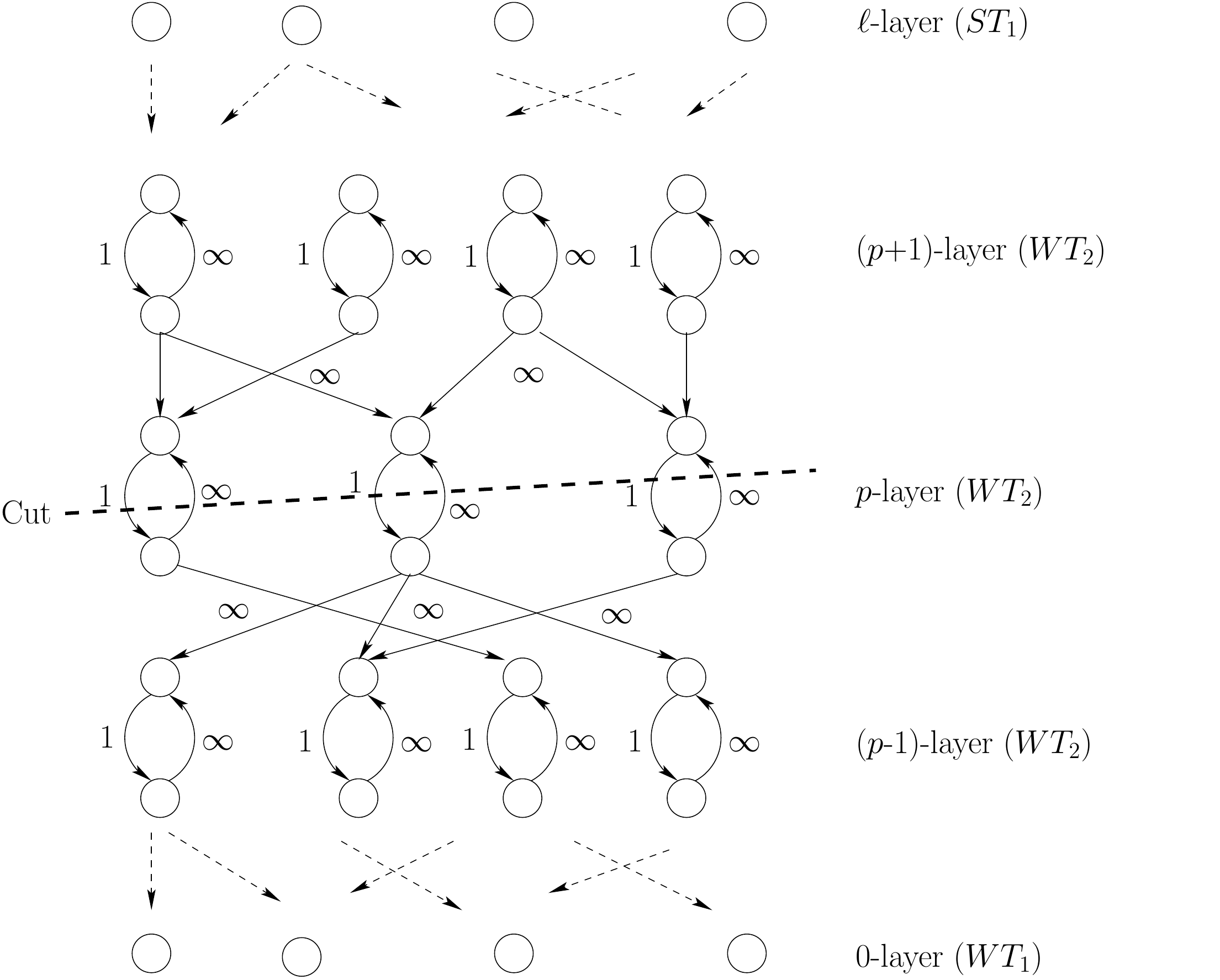}}
\caption{\label{fig:cut}Arcs in the branches of a layer form a cut in the residual graph.}
\end{figure}

Thus the maximum flow value in the residual graph at the beginning of part two is no larger than the smallest number of branches in one of the layers. Since the layered network consists of at most $\kappa$ branches and the number of layers is $L$, then the maximum flow in the residual graph can be no larger than $\frac{\kappa }{L}$, which in the second part is no larger than $\sqrt{\kappa}$. Thus the total number of augmentations in part two is at most $\sqrt{\kappa}$. Since each layered network guarantees at least once augmentation, there are at most $\sqrt{\kappa}$ stages in the second part of the algorithm. \qed
\end{proof}

\section{Complexity using word operations}
\label{sec:words}
We show here how to use boolean operations to improve the complexity of the {\sf matching-pseudoflow} algorithm. The characteristic vector of out-neighbors of each $V_1$-node $v$ is maintained as a binary word OUT($v$) of length $n_2$. OUT($v$) is a word where the $i^{th}$ bit is 1 if there is an arc from $v \in V_1$ to $i \in V_2$. We also maintain a characteristic vector of in-neighbors list as a word IN($v$) of length $n_1$ for each node in $v\in V_2$. IN($v$) is a word where the $i^{th}$ bit is 1 if the arc from $i \in V_1$ to $v \in V_2$ exists.

These words are maintained {\em in addition to the adjacency list} which is a linked list of in and out neighbors for each node in the graph. The words and the adjacency list are used in parallel to achieve the better time complexity of the approach using only words and that using only the adjacency list. When we say that the two are used in parallel we imply that the adjacency list and word operations are accessed and used alternately.

Using $\lambda$-bit word operations ($\lambda < n_1$), we break OUT() and IN() into a concatenation of $\lambda$-bit words, and perform operations on these words. Each of these $\lambda$-bit words is called a $\lambda$-{\emph word} and the $j^{th}$ $\lambda$-word is denoted by OUT$^{j}$($v$).

Three boolean operations are used:
\begin{enumerate}
\item LEAD:  Given a word $W$, {\tt lead}($W$) returns the index of the leading non-zero bit in $W$, and 0 if all bits are 0.
\item AND: Given two words $A$ and $B$ of the same length, $A \wedge B$ is a word whose $i^{th}$ bit is 1 iff the $i^{th}$ bits of $A$ and $B$ are 1, and 0 otherwise.
\item OR: Given two words $A$ and $B$ of the same length, $A \vee B$ is a word whose $i^{th}$ bit is 1 if the $i^{th}$ bit of $A$ or $B$ (or both) is 1, and 0 otherwise.
\end{enumerate}

If we wish to perform any of the above operations on a word of $k$ bits using word operations on words of $\lambda$ bits where $\lambda < k$, each $k$-bit word operation can be done in $O(\frac{k}{\lambda})$ steps.

Any boolean operation ($\wedge$, $\vee$, {\tt lead}) on a $\lambda$-word counts as a single operation. Given two nodes $i \in V_1$ and $j \in V_2$, the bits corresponding to the arc $(i,j)$ in IN($j$) and OUT($i$) can be accessed and modified in $O(1)$.

{\bf Initialization~} For each node $v \in V_1$, we look at the next
arc in its out neighbors in the adjacency list. If this arc does not
lead to an unmatched $WT_1$ node, we perform a {\tt
lead}(OUT$^{1}$($v$)) operation. If {\tt lead}(OUT$^{1}$($v$)) equals
0, we return to the adjacency list and look at the next arc. Again, if
this arc does not lead to an unmatched $V_2$-node, a {\tt lead}
operation is performed on the next unscanned $\lambda$-word
(OUT$^{2}$($v$)). This procedure of looking at the next $\lambda$-word
and the next arc in the adjacency list until an unmatched $V_2$
neighbor is found, or the end of the list is reached. In the adjacency
list, either a neighboring $WT_1$ branch is found or all the neighbors
are exhausted in at most $\kappa$ arc scans for each $V_1$-node. Thus,
there are at most $n_1 \kappa$ arc scans. Further, each arc is looked
at most once, so the complexity is $O(\min\{n_1\kappa, m\})$.

In OUT($v$), either a neighboring $WT_1$ branch is found or all the neighbors are exhausted in $O(n_2/\lambda)$ operations. A neighboring $WT_1$ branch, if it exists, is thus found in $O(\min\{n_1\kappa, m, \frac{n_1n_2}{\lambda}\})$.

Once a merger is executed, the bits corresponding to the merger arc in the IN() and OUT() words must be changed. Since there are at most $\kappa$ mergers during initialization, and each requires $O(1)$ work, the work done to maintain these words is $O(\kappa)$.

\begin{claim}
The work done in initialization using $\lambda$-words is $O(\min\{n_1\kappa, m, \frac{n_1n_2}{\lambda}\})$.
\end{claim}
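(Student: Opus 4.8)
The plan is to bound the cost of the greedy initialization by running two search procedures for each $V_1$-node \emph{in lockstep}---one over the adjacency list and one over the boolean words---and to halt a node's search the instant either procedure reports a free $WT_1$ neighbor or exhausts the node's neighborhood. The key observation is that interleaving the two searches and stopping at the first to succeed costs, per node, at most twice the minimum of the two individual costs; summing these per-node minima and using that a sum of minima is at most the minimum of the sums will deliver the three-way bound $O(\min\{n_1\kappa, m, \frac{n_1 n_2}{\lambda}\})$.

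First I would record the two per-node costs. On the adjacency side, let $a_v$ be the number of out-arcs of $v$ scanned before a free neighbor is found or the list is exhausted. Since at most $\kappa$ of the $V_2$-nodes are ever matched, among the distinct out-neighbors of $v$ at least one of the first $\kappa+1$ must be free (or $v$ has at most $\kappa$ neighbors), so $a_v = O(\kappa)$; moreover, maintaining a per-node pointer guarantees each arc is scanned at most once across the whole pass, so $\sum_v a_v \le m$. Hence $\sum_v a_v \le \min\{n_1\kappa, m\}$, which is exactly the adjacency-list bound already established in Section \ref{section:matching-pseudoflow}. On the word side, let $w_v$ be the number of $\lambda$-word operations used to scan through $\mathrm{OUT}(v)$ (against the current free-$V_2$ information) until a free neighbor is located or its bits are exhausted; since $\mathrm{OUT}(v)$ splits into $\lceil n_2/\lambda\rceil$ $\lambda$-words, $w_v = O(n_2/\lambda)$ and $\sum_v w_v = O(\frac{n_1 n_2}{\lambda})$.

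Next I would combine the two. Running the searches alternately and stopping at the first success, node $v$ incurs $O(\min\{a_v, w_v\})$ work, so the total search cost is
\[
\sum_{v \in V_1} O\!\left(\min\{a_v, w_v\}\right) \;\le\; O\!\left(\min\Big\{\textstyle\sum_v a_v,\ \sum_v w_v\Big\}\right) \;=\; O\!\left(\min\Big\{n_1\kappa,\, m,\, \tfrac{n_1 n_2}{\lambda}\Big\}\right),
\]
where the inequality is the elementary fact that $\sum_v \min\{a_v,w_v\}$ is bounded by each of $\sum_v a_v$ and $\sum_v w_v$ separately. Finally I would account for word maintenance: each of the at most $\kappa$ initialization mergers flips $O(1)$ bits in the $\mathrm{IN}()$ and $\mathrm{OUT}()$ words, contributing $O(\kappa)$ total; since $\lambda < n_1 \le n_2$ gives $\kappa \le n_1 \le \frac{n_1 n_2}{\lambda}$, and also $\kappa \le m$, this term is dominated by each of the three terms and may be absorbed.

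The main obstacle is not any single estimate but the combination step: one must argue carefully that the alternating execution genuinely realizes the \emph{per-node} minimum $\min\{a_v, w_v\}$ (rather than an additive $a_v + w_v$), and then that summing these minima preserves all three terms simultaneously via $\sum\min \le \min\sum$. Verifying the uniform per-node adjacency bound $a_v = O(\kappa)$ (which rests on at most $\kappa$ $V_2$-nodes being matched) and confirming that the $O(\kappa)$ maintenance term is dominated are the remaining routine checks.
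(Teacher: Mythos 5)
Your proposal is correct and follows essentially the same route as the paper: the paper also runs the adjacency-list scan and the $\lambda$-word scan in parallel (alternately) for each $V_1$-node, invokes the per-node bounds of $O(\kappa)$ arc scans and $O(n_2/\lambda)$ word operations together with the global bound of $m$ total arc scans, and absorbs the $O(\kappa)$ word-maintenance cost. Your explicit $\sum_v \min\{a_v,w_v\} \le \min\{\sum_v a_v, \sum_v w_v\}$ step just makes precise the combination the paper leaves implicit.
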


{\bf Building the 1-layer~} Similar to the initialization, for each child $v$ of a $WT_2$ branch, we search for a merger arc by looking in parallel at the next neighbor in the adjacency list and performing a {\tt lead}() operation on the next $\lambda$-word OUT($v$). The search terminates either when a $WT_1$ neighbor is found or the end of the list is reached, which occurs in $\min\{\kappa, n_2/\lambda\}$ operations. Since there are at most $kappa$ nodes that are children of a $WT_2$ branch and each arc is scanned at most once in the entire algorithm, the total work to generate the $1$-layer of the layered network throughout the algorithm is $O(\min\{\kappa^2, n_2\kappa/\lambda, m\})$.

{\bf Building the layered network~} We now describe the use of word operations in generating a layered network upwards from the 1-layer. With the exception of the $\ell$-layer, all the branches in the layered network are $WT_2$ branches. We first discuss labeling the $WT_2$ branches, and later discuss how to find the $\ell$-layer.

We construct words for the sub-graph induced only by the nodes in the $WT_2$ branches. That is, each node $v \in V_1$ which is the child of a $WT_2$ branch has an associated word SUB-OUT($v$) containing the subset of its out-neighbor nodes that are in $WT_2$ branches, the length of which is at most $\kappa$. Note that this is different from OUT($v$) which is a word of length $n_2$ and contains all out neighbors of $v$, not just those that are in $WT_2$ branches.

The $i^{th}$ bit of SUB-OUT($v$) is 1 if an arc exists from $v$ to the $i^{th}$ node which is a root of a $WT_2$ branch. Similarly, the roots of the $WT_2$ branches have a word SUB-IN($v$) of size at most $\kappa$ representing the in-neighbors of $v$ that are children in a $WT_2$ branch.  We will use SUB-IN() to build the layered network and SUB-OUT() while pushing flow through this network.

Initially, SUB-IN() and SUB-OUT() are empty since there are no $WT_2$ branches.  As $WT_2$ branches are created during the algorithm, SUB-IN() SUB-OUT() words are created for each of the nodes in these branches. At any point in the algorithm, SUB-OUT($v$) is a subset of OUT($v$) and SUB-IN($v$) is a subset of IN($v$) that contains only those bits that correspond to nodes that are in $WT_2$ branches. The relation between IN(), OUT(), SUB-IN() and SUB-OUT() are shown in Figure \ref{fig:words1}. The matrix formed by the SUB-IN() and SUB-OUT() words is referred to as the SUB-matrix.

\begin{figure}[ht]
\centerline{\includegraphics[width=0.8\textwidth]{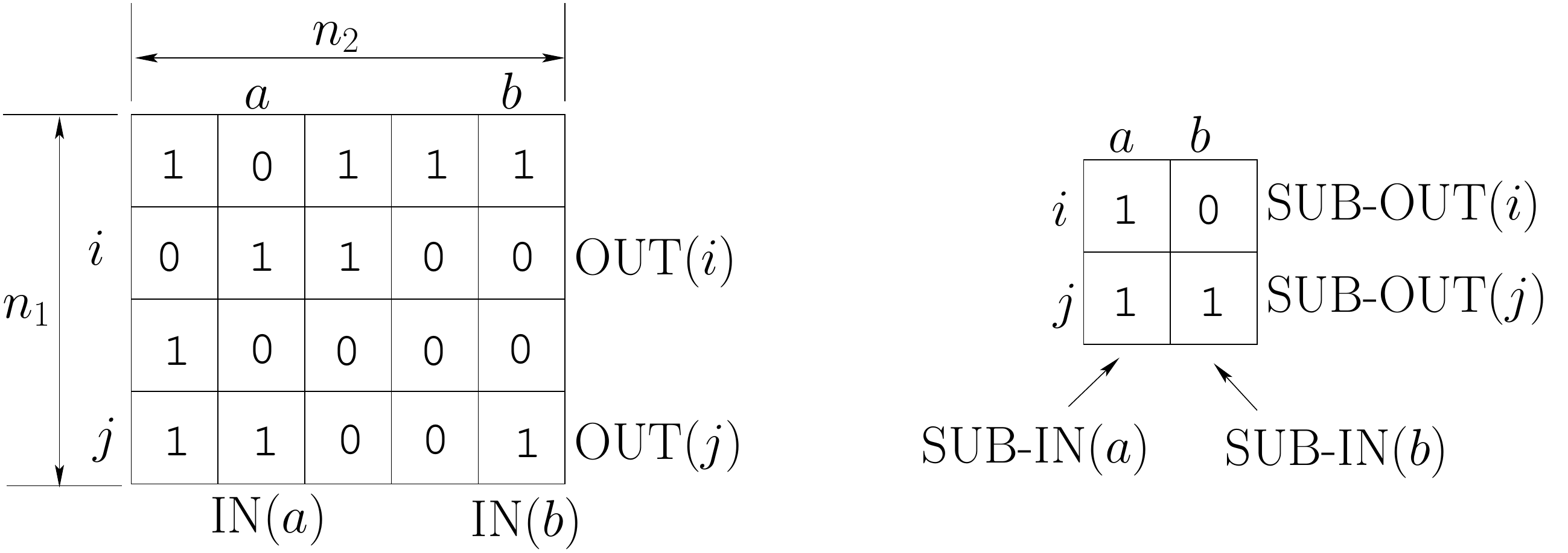}}
\caption{\label{fig:words1}IN(), OUT(), SUB-IN() and SUB-OUT() at some point during the algorithm when nodes $i, j \in V_1$, and $a, b \in V_2$ are in $WT_2$ branches.}
\end{figure}

Each time a singleton node (either $WT_1$ or $ST_1$ branch) becomes part of a $WT_2$ branch either during initialization or later in the algorithm, we add a bit corresponding to that node to the existing SUB-IN() and SUB-OUT() words, and create a new word for that node. This is equivalent to adding a row and column to the SUB-matrix. The total work  throughout the algorithm is $O(\kappa^2)$ since there are $O(\kappa)$ words each of size $O(\kappa)$, and adding a bit to the words is an $O(1)$ operation.

Two more types of words are needed to construct the layered network.
\begin{enumerate}
\item A $V_2$LAYER($k$) word (the characteristic vector of each layer) indicating the $V_2$-nodes contained in each layer $1 \leq k \leq \kappa$.  The length of the word is at most $\kappa$ and a bit of $V_2$LAYER($k$) is 1 if a $V_2$-node corresponding to that bit is in layer $k$. All the $V_2$LAYER() words are set to 0 at the beginning of each stage.
\item A $V_1$LAYER($k$) word (the characteristic vector of each layer) indicating the $V_1$-nodes contained in each layer $1 \leq k \leq \kappa$. The length of the word is at most $\kappa$ and a bit of $V_1$LAYER($k$) is 1 if a $V_1$-node corresponding to that bit is in layer $k$. All the $V_1$LAYER() words are set to 0 at the beginning of each stage.
\item A REACHED word of size $\kappa$ that keeps track of $V_1$-nodes that have {\em not} been reached by the upward breadth-first-search in each stage to create the layered network. The $i^{th}$ bit is 0 if that node has been assigned to a layer, and 1 otherwise. All the bits of this word are set to 1 at the beginning of each stage.
\end{enumerate}

Once we have computed the 1-layer of the network, $V_1$LAYER(1) is populated with 1 in the locations of the nodes in the 1-layer. REACHED is then populated with 0 in the locations of the $V_1$-nodes in the 1-layer.

The $V_2$LAYER(1) is now constructed by successively looking at each $\lambda$-word in $V_1$LAYER(1) and performing a {\tt lead}() operation on that word. If the result of the {\tt lead}() operation is non-zero, then we know the index of a $V_1$-node in the 1-layer, and its unique parent's bit is changed in the $V_2$LAYER(1). The 1-bit corresponding to the output of the {\tt lead}() operation is now set to 0, and another {\tt lead}() operation is performed on the same word.  Identifying a 1-bit and changing it continues until the result of the {\tt lead}() operation is zero, in which case we move to the next $\lambda$-word and perform a {\tt lead}() operation on that word to find a 1-bit.

Now, given $V_2$-nodes $\{v_1, \ldots, v_j\}$ in the 1-layer, $V_1$LAYER(2) is
\[
V_1\mathrm{LAYER(2)} = \mathrm{(SUB-IN(}v_1\mathrm{)} \vee \mathrm{SUB-IN(}v_2\mathrm{)} \vee \ldots \vee
\mathrm{SUB-IN(}v_j\mathrm{))}\wedge \mathrm{REACHED}.
\]
Figure \ref{fig:words2} illustrates this procedure on an example.

\begin{figure}[t]
\centerline{\includegraphics[width=\textwidth]{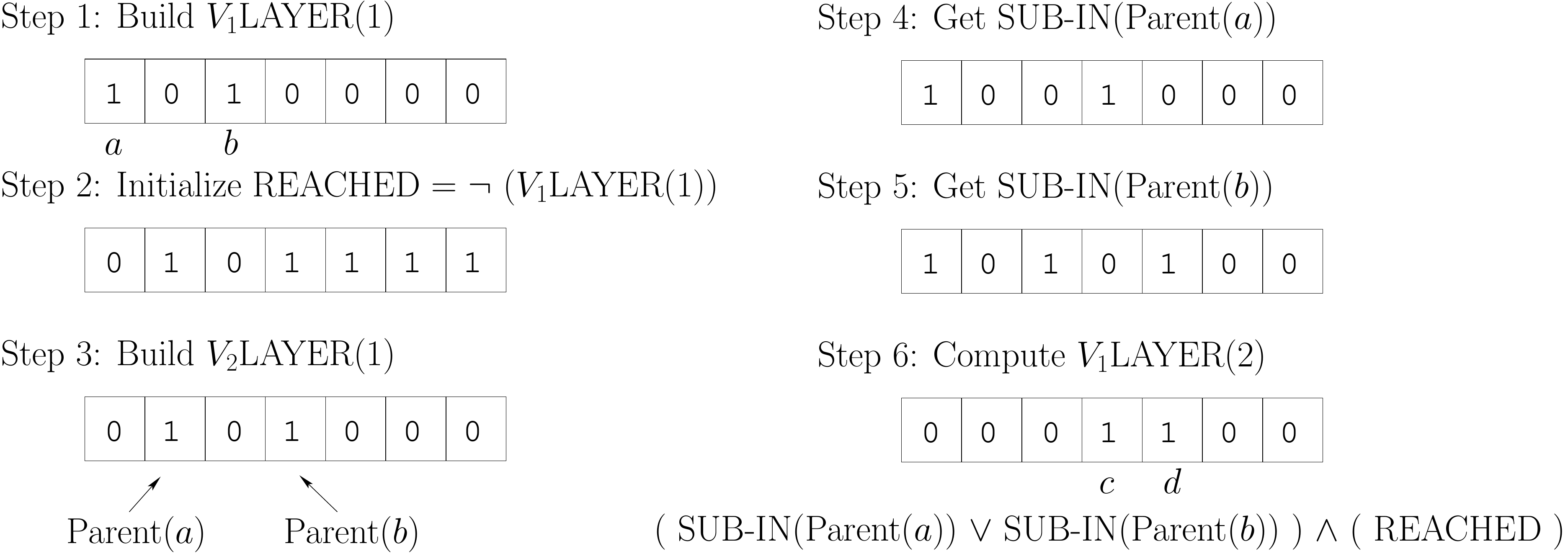}}
\caption{\label{fig:words2}Finding $V_1$-nodes in the 2-layer from $V_2$LAYER(1) using word operations.}
\end{figure}

The $V_2$-nodes in the 2-layer are obtained from $V_1$LAYER(2) (the parent of a $V_1$-node in the 2-layer is a $V_2$-node in the 2-layer). The REACHED word is updated, the $V_2$LAYER(2) word is constructed a bit at a time using the $V_2$-nodes in the 2-layer. As above, $V_1$LAYER(3) is now constructed from the SUB-IN() words of $V_2$-nodes in the 2-layer and REACHED. This continues until there are no more changes in REACHED.

The complexity of constructing the $V_2$LAYER() from the $V_1$LAYER() takes $O(\kappa^2/\lambda)$ throughout the stage. The number of word operations that result in finding a 1-bit in the $V_1$LAYER, and changing the corresponding bit in the $V_2$LAYER is at most $\kappa$, since there are at most $\kappa$ $WT_2$ branches. The number of word operations that result in not finding a 1-bit is $O(\kappa/\lambda)$ for each layer since each $V_1$LAYER() is of length $\kappa$ and we look at the next $\lambda$-word when we do not find a 1-bit. There are at most $\kappa$ layers, so the work done in finding the $V_2$LAYER() words given the $V_1$LAYER() words is $O(\kappa^2/\lambda)$ per stage.

An operation is performed on each $\lambda$-word of SUB-IN() at most once for each node in a stage, and SUB-IN() is of length at most $\kappa$; so the work to generate the layered network is $O(\kappa^2/\lambda$). The work to update the REACHED word is $O(\kappa)$ per stage.

At this point, we have the two-edge distances of all $WT_1$ branches.
To find the set of $ST_1$ immediately reachable from this set, we use
IN() (not SUB-IN() since we want to reach nodes outside the set of
$WT_2$ branches) and the incoming arcs in the adjacency list, in
parallel, for each node to check if a $ST_1$ branch is reachable from
this node. Finding the $\ell$-layer is done analogously to finding the
1-layer. For each node $v$ that is the root of a $WT_2$ branch, an
incoming arc in the adjacency list is scanned for a $ST_1$ neighbor. If
no merger is found, a {\tt lead}() operation is performed on $\lambda$
bits of the IN$^1$($v$) to check for an $ST_1$ neighbor. If no merger
is found, the next arc in the adjacency list is looked at. This
procedure of looking at the next arc in the incoming arcs in the
adjacency list and performing a {\tt lead}() on the next $\lambda$-word
of IN($v$) in parallel continues until a $ST_1$ neighbor is found, or
all the neighbors are exhausted. Since IN($v$) is a word of length
$n_1$, the end of this word is reached in $O(n_1/\lambda)$ operations.
The end of the adjacency list is reached in at most $\kappa$ arc scans
of the adjacency list. Further, each arc is looked at most once so the
total work done throughout the algorithm in checking for $ST_1$
neighbors is $O(\min\{\kappa n_1/\lambda, \kappa^2, m\})$.

We also maintain a word VISITED (of length at most $\kappa$) that keeps track of the branches that have been visited at each stage, i.e., the $i^{th}$ bit of this word is 1 if the root of the branch has {not} been visited in that stage. All bits in this word are initially set to 1.

To push flow through the network, we use $V_2$LAYER(), VISITED, and SUB-OUT() to identify a merger arc. For a node $v \in V_1$ of label $p$, the set of arcs from node $v$ to an unvisited node of layer $p-1$ is found by $V_2$LAYER($p-1$) $\wedge$ SUB-OUT($v$) $\wedge$ VISITED. A {\tt lead}() operation on this resultant word gives a merger arc if it exists.

Each time a merger is found, one more branch becomes visited. Therefore, there can be at most $\kappa$ mergers in each stage. Hence, there are at most $\kappa$ word operations that lead to mergers, which takes $O(\kappa)$ work. Each time a node is revisited in a stage, the search for mergers starts from the last $\lambda$-word checked for a merger; so the work done in word operations that do not find a mergers is $O(\kappa/\lambda)$ per node per stage, which in $O(\kappa^2/\lambda)$ total work per stage.  Updating VISITED requires $O(\kappa)$ work throughout the stage. Hence, work to push flow by executing mergers is $O(\kappa^2/\lambda)$ per stage.

Since each stage can have at most $\kappa$ successful mergers and there are $O(\sqrt{\kappa})$ stages, the number of successful mergers is $O(\kappa^{3/2})$. The IN(), OUT(), SUB-IN() and SUB-OUT() words need to be updated each time a successful merger occurs. Each update takes $O(1)$, so the total work updating these words is $O(\kappa^{3/2})$.

Table \ref{Table:complexity3} summarizes our complexity results for our algorithms with word operations.

\begin{table}[ht]
\begin{center}
\begin{tabular}{||l|c|c||} \hline
\hline Operation & Per stage & Total\\
\hline
Initialization & - & $O(\min\{n_1\kappa, \frac{n_1n_2}{\lambda},m\})$\\
Constructing $1$-layer & - & $O(\min\{\kappa^2, \frac{n_2\kappa}{\lambda},m\})$\\
Constructing $\ell$-layer & -& $O(\min\{\kappa^2,\frac{n_1\kappa}{\lambda}, m\})$ \\
Layered network - layers $2,\ldots ,\ell -1$ & $O(\kappa^{2}/\lambda)$ & $O(\kappa^{2.5}/\lambda)$\\
Executing mergers & $O(\kappa^{2}/\lambda)$ & $O(\kappa^{2.5}/\lambda)$\\
Creating SUB-IN and SUB-OUT & - & $O(\kappa^2)$ \\
Updating SUB-IN, SUB-OUT, IN, and OUT & $O(\kappa)$ & $O(\kappa^{3/2})$ \\
\hline
TOTAL &
\multicolumn{2}{|c||}{$O(\min\{n_1\kappa,\frac{n_1n_2}{\lambda},m\} + \kappa^2 + \kappa^{2.5}/\lambda)$}\\ \hline
\hline
\end{tabular}
\caption{\label{Table:complexity3}Complexity summary of algorithm for bipartite matching with word operations.}
\end{center}
\end{table}

\section{An alternative approach}
\label{sec:alternative}

We now show that it is possible to achieve the theoretical complexity of the {\sf matching-pseudoflow} algorithm by a clever analysis of Hopcroft and Karp's matching algorithm \cite{HopK73}\footnote{We thank an anonymous referee for this analysis}. Given  graph $G = (V_1 \cup V_2, E)$, let the cardinality of the greedy matching be $\kappa_g$. Denote the nodes in the maximal matching by $V_g \subseteq V$, then $|V_g| = 2 \kappa_g$.

\begin{lemma}
\label{lem:altGraphSize}
$\kappa \leq 2 \kappa_g$.
\end{lemma}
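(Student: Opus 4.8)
The plan is to exploit the maximality of the greedy matching: a maximal matching may not be maximum, but its matched vertex set forms a vertex cover of the graph, and a vertex cover bounds the size of any matching. Since the greedy matching is maximal by construction, no edge of $E$ can have both endpoints outside $V_g$; otherwise that edge could be added to the greedy matching, contradicting maximality. Thus every edge in $E$ has at least one endpoint in $V_g$, i.e.\ $V_g$ is a vertex cover.

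The key steps, in order, are as follows. First I would fix a maximum matching $M^*$ with $|M^*| = \kappa$ and recall that by definition of $V_g$ we have $|V_g| = 2\kappa_g$. Second, I would invoke the maximality observation above to conclude that each of the $\kappa$ edges of $M^*$ has at least one endpoint lying in $V_g$. Third, because $M^*$ is a matching its edges are pairwise vertex-disjoint, so a single node of $V_g$ can serve as the endpoint of at most one edge of $M^*$. Assigning to each edge of $M^*$ one of its endpoints in $V_g$ therefore gives an injection from the edges of $M^*$ into $V_g$, yielding
\[
\kappa = |M^*| \leq |V_g| = 2\kappa_g.
\]

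There is no genuine obstacle here; the argument is the classical ``maximal matching is a $2$-approximation of maximum matching'' fact, and the only point requiring care is to state the maximality step correctly (that $V_g$ covers all edges, not merely all greedy edges) and to note that the injection relies on the disjointness of the edges of $M^*$ rather than on any property of the greedy matching beyond maximality. Everything else is a one-line counting argument.
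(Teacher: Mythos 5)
Your proof is correct and follows essentially the same route as the paper's: both argue that maximality of the greedy matching forces $V_g$ to cover every edge, hence every edge of a maximum matching, and then bound $\kappa$ by $|V_g| = 2\kappa_g$. You merely make explicit the injection step (using vertex-disjointness of the maximum matching's edges) that the paper leaves implicit.
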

\begin{proof}
Every edge in the graph has at least one end point in $V_g$ (otherwise, an edge with neither end point in $V_g$ can be added to the matching, which contradicts maximality). Therefore, every edge in an optimal matching must also have at least one end point in $V_g$.  Thus, the cardinality of the maximum matching is bounded by the cardinality of the set $V_g$, which is $2 \kappa_g$. \qed
\end{proof}

For each $v \in V_g$, let $E_g(v)$ denote the set of edges that have one end point in $v$ and the other end point in another node in $V_g$.  We now construct a graph $G^* = (V_1 \cup V_2, E^*)$, where $E^* \subseteq E$ contains the following edges:
\begin{enumerate}
\item [(i)] For every node $v \in V_g$ with degree $\leq 2 \kappa_g$ in $G$, $E^*$ contains all edges adjacent to $v$.
\item [(ii)]  For every node $v \in V_g$ with degree $ > 2 \kappa_g$ in $G$, $E^*$ contains all edges in $E_g(v)$ and an arbitrary subset of $2 \kappa_g - |E_g(v)|$ edges adjacent to $v$ that are not in $E_g(v)$. That is, a subset of $2 \kappa_g$ edges adjacent to $v$ that contain all the edges in $E_g(v)$.
\end{enumerate}

\noindent Each node $v \in V_g$ in $G^*$ has at most $2\kappa_g$ neighbors by construction. Since every edge is adjacent to some node in $V_g$ and $|V_g| = 2 \kappa_g$, the total number of edges in $E^*$ is at most $4 \kappa_g^2$. Since $\kappa_g \leq \kappa$, $E^*$ has $O(\min\{m, \kappa^2\})$ edges.

\begin{theorem}
\label{theorem:altGraphMatching}
A maximum matching in $G^*$ has cardinality $\kappa$.
\end{theorem}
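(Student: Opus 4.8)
The plan is to prove the two inequalities $\nu(G^*) \le \kappa$ and $\nu(G^*) \ge \kappa$ separately, where $\nu(\cdot)$ denotes the maximum matching cardinality. The first is immediate: since $E^* \subseteq E$, every matching of $G^*$ is a matching of $G$, so $\nu(G^*) \le \nu(G) = \kappa$. For the reverse inequality I would start from an arbitrary maximum matching $M$ of $G$ (so $|M| = \kappa$) and transform it, one edge at a time, into a matching of the same cardinality that uses only edges of $E^*$; this directly exhibits a matching of size $\kappa$ in $G^*$.

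The key structural step is to control which edges of a matching can fail to lie in $E^*$. Because the greedy matching is maximal, $V_g$ is a vertex cover of $G$, so every edge of $M$ has at least one endpoint in $V_g$. I would first observe that no edge with both endpoints in $V_g$ is ever dropped: such an edge lies in $E_g(v)$ for its endpoint $v$, and $E_g(v) \subseteq E^*$ by both clauses (i) and (ii) of the construction. Hence any edge $e = (a,b) \in M \setminus E^*$ has exactly one endpoint, say $a$, in $V_g$, with $b \notin V_g$; moreover $a$ must have degree $> 2\kappa_g$ in $G$ (otherwise clause (i) would retain all of its edges), so $a$ has exactly $2\kappa_g$ neighbors in $G^*$, and $b$ is not among them.

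The crux is then a counting argument to reroute $e$. Taking $a \in V_1$ without loss of generality, all $2\kappa_g$ of its $G^*$-neighbors lie in $V_2$, and by Lemma \ref{lem:altGraphSize} we have $2\kappa_g \ge \kappa$. The matching $M$ saturates exactly $\kappa$ nodes of $V_2$, one of which is $b$; since $b$ is not a $G^*$-neighbor of $a$, at most $\kappa - 1$ of $a$'s $G^*$-neighbors are saturated by $M$. Therefore at least $2\kappa_g - (\kappa - 1) \ge 1$ of them, call it $b'$, is free, and $(a,b') \in E^*$. Replacing $e$ by $(a,b')$ yields a matching $M'$ of the same size $\kappa$ with one fewer edge outside $E^*$. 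Iterating this exchange at most $\kappa$ times produces a matching of size $\kappa$ contained in $E^*$, which finishes the proof.

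I expect the main obstacle to be the rerouting step: one must verify that a free replacement neighbor always exists at every iteration, which hinges on the two facts that $2\kappa_g \ge \kappa$ and that the discarded endpoint $b$ of the dropped edge is provably not a $G^*$-neighbor of $a$, so that it does not consume one of $a$'s surviving neighbors. The remaining care is purely bookkeeping: checking that each single-edge exchange keeps $M'$ a valid matching of unchanged cardinality and strictly decreases the number of edges lying outside $E^*$, so that the process terminates.
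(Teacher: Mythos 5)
Your proof is correct, but it takes a genuinely different route from the paper's. The paper argues via max-flow/min-cut duality: it forms the flow network $G^*_{st}$, supposes the minimum cut there has capacity $\kappa^* < \kappa$, observes that this cut cannot be finite in $G_{st}$ and hence some deleted arc $(i,j)$ must cross from $S_1$ to $T_2$, and then derives $\kappa^* > |S_2| \geq 2\kappa_g \geq \kappa$, a contradiction. You instead give a direct exchange argument on matchings: since $V_g$ is a vertex cover, every edge of a maximum matching $M$ of $G$ that was pruned has its $V_g$-endpoint $a$ of degree exactly $2\kappa_g$ in $G^*$, and since $M$ saturates only $\kappa \leq 2\kappa_g$ nodes on $a$'s opposite side, one of which (namely $b$) is provably not a $G^*$-neighbor of $a$, a free replacement neighbor $b'$ always exists; iterating removes all pruned edges without changing $|M|$. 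Both arguments ultimately rest on the same two facts --- Lemma \ref{lem:altGraphSize} and the exact neighbor count $2\kappa_g$ for pruned nodes --- but yours is more elementary (no flow machinery) and constructive, explicitly converting a maximum matching of $G$ into one of $G^*$ in at most $\kappa$ single-edge exchanges, whereas the paper's fits the flow-centric framing used throughout and establishes the result nonconstructively. Your iteration is sound because the invariants it needs (the matching has $\kappa$ edges, and each remaining bad edge's non-cover endpoint is not a $G^*$-neighbor of its cover endpoint) are preserved by each exchange, and the count of bad edges strictly decreases, so the process terminates.
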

\begin{proof}

\begin{figure}[ht]
\centerline{\includegraphics[width=0.5\linewidth]{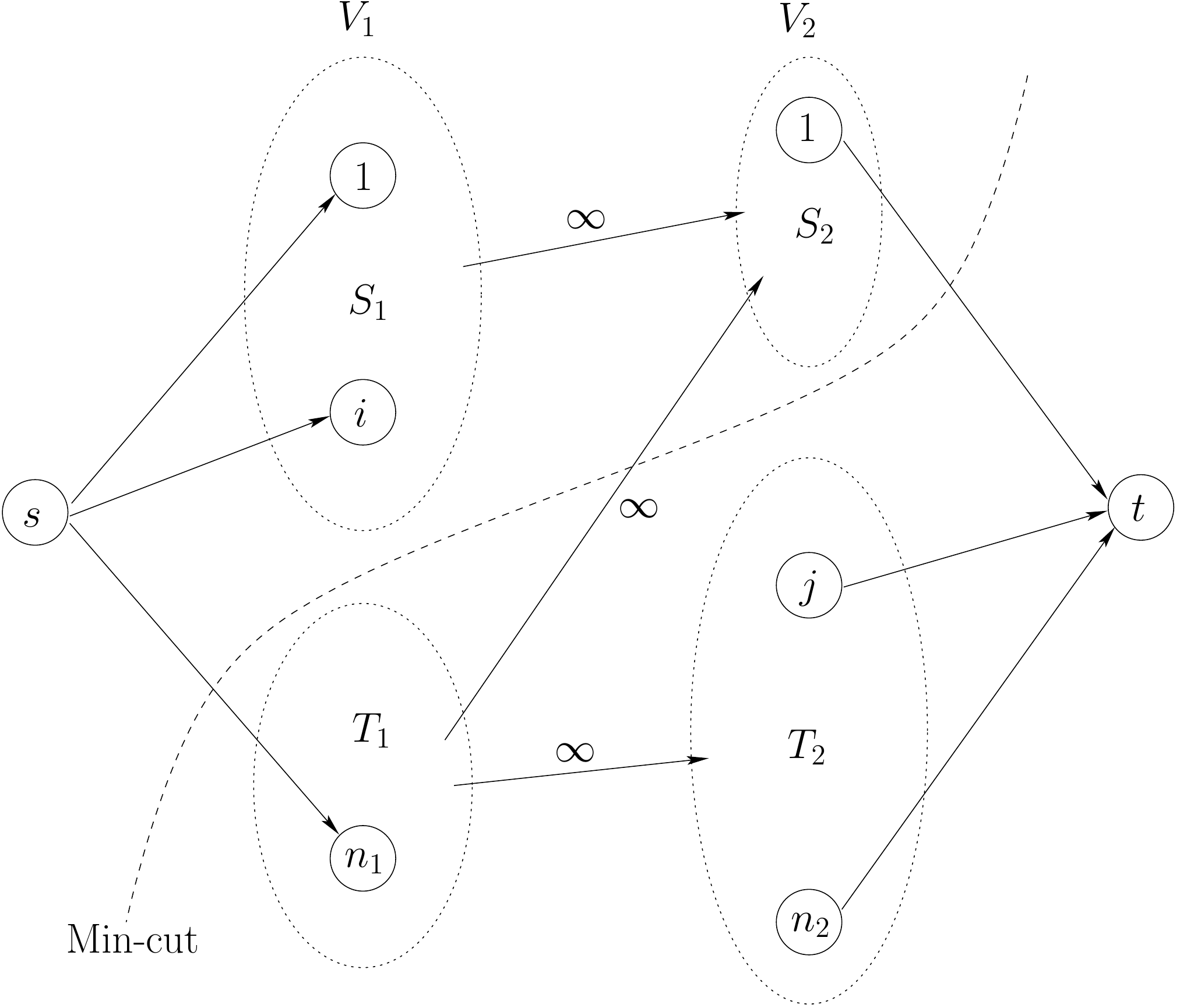}}
\caption{\label{fig:mincut}Minimum cut in $G^*_{st}$.}
\end{figure}

Let the cardinality of a maximum matching in $G^*$ be denoted by $\kappa^*$. We obtain the maximum matching by solving for a minimum cut in a graph $G^*_{st}$ obtained by adding a source node $s$, a sink node $t$, and unit capacity arcs from $s$ to all nodes in $V_1$ and from all nodes in $V_2$ to $t$. Let the source set of the minimum cut in be $S^* = \{s\} \cup S_1 \cup S_2$ and the sink set be $T^* = \{t\} \cup T_1 \cup T_2$ where $S_1 \cup T_1 = V_1$ and $S_2 \cup T_2 = V_2$ as shown in Figure \ref{fig:mincut}. Then, the capacity of this minimum cut is $\kappa^*$, i.e., $|T_1| + |S_2| = \kappa^*$.

The maximum matching in $G$ is similarly obtained by solving for a minimum cut in a graph $G_{st}$ (obtained by adding a source and a sink node, and arcs adjacent to the source and sink); this minimum cut has capacity $\kappa \leq n_1 \leq n_2$.

Suppose (for contradiction) that $\kappa^* < \kappa$. Then, $S_1,\ S_2,\ T_1, \mbox{ and } T_2$ are non-empty (if any of these sets were empty, then  $\kappa^* = n_1$ which contradicts the assumption that $\kappa^* < \kappa \leq n_1$). The minimum $s,t$-cut $(S^*, T^*)$ in $G^*_{st}$ cannot be a finite cut in $G_{st}$ since it has a capacity strictly less than the minimum cut in $G_{st}$. Then, there exists some arc $(i,j)$ in $G_{st}$ but not in $G^*_{st}$ such that $i \in S_1$ and $j \in T_2$. Since the arc $(i,j)$ was removed from $G$ to generate $G^*$, it means that node $i$ has exactly $2 \kappa_g$ neighbors in $G^*$.  Further, since $i$ is in the source set of a finite cut in $G^*_{st}$, all the neighbors of $i$ must belong to $S_2$. That is, $|S_2| \geq 2 \kappa_g$. We have shown that $\kappa^* > |S_2|$ since $T_1$ is non-empty. Therefore, $\kappa^* > |S_2| \geq 2 \kappa_g \geq \kappa$, contradicting the assumption that $\kappa^* < \kappa$.  \qed
\end{proof}

The above observations and theorem suggest the following algorithm:
\begin{enumerate}
\item Generate a maximal matching (takes $O(\min\{m, n_1 \kappa\})$ work).
\item Construct graph $G^*$ as described above (takes $O(\min\{m, \kappa^2\})$ work).
\item Solve for a maximum matching using the Hopcroft-Karp algorithm. Since the number of edges in $G^*$ is $O(\min\{m, \kappa^2\})$, and the number of nodes is $O(\kappa)$, the complexity is $O(\sqrt{\kappa} \min\{m, \kappa^2\})$.
\end{enumerate}

\section{Test instances}
\label{sec:instances}

The descriptions of these instances is reproduced from Cherkassky et al. \cite{CheGMSS98}.

\begin{enumerate}
\item {\bf Fewg and manyg}: These are random bipartite graphs where the vertices of each partition, $V_1$ and $V_2$ , are divided into $k$ groups of equal size. For each vertex of the $j$-th group of $V_1$ the generator chooses $y$ random neighbors from the $(i-1)$-th through $(i+1)$-th groups of $V_2$ (with wrap-around), where $y$ is binomially distributed with mean $d$ (thus $d$ = mean vertex degree). The indices $i$ and $j$ are not related because vertices in $V_1$ are randomly shuffled before neighbors in $V_2$ are assigned. The two families we consider are {\sf fewg}, where there are 32 groups, and {\sf manyg}, where there are 256 groups; both have $d$ = 5.

These classes were designed having in mind problems that can be reduced to bipartite matching, such as the maximum vertex-disjoint paths problem. In these problems the resulting graph in the reduction is bipartite, but if the original graph is planar or nearly planar each vertex will only have as neighbors vertices in the surrounding area.

\item {\bf Hilo}: The {\sf hi-lo} family of bipartite matching problems was designed to separate high and low vertex selection strategies for the push-relabel method. This generator creates a graph with a unique perfect matching and has been motivated by a generator of Kennedy \cite{Ken95}.

Let $G = (V_1;V_2, E)$ be a graph produced by this generator. This graph is defined by three parameters, $\ell$, $k$, and $d$. Vertices of $V_1$ are partitioned into $\ell$ groups, each containing $k$ vertices. For $1 \leq i \leq k$, $1 \leq j \leq \ell$, we refer to the $i$-th vertex in group $j$ by $x^j_i$. Vertices of $V_2$ are partitioned similarly, and $y^j_i$ is defined similarly to $x^j_i$. Each vertex $x^j_i$ is connected to vertices $y^j_p$ for $\max(1, i-d) \leq p \leq i$ and, if $j < \ell$, to vertices $y^{j+1}_p$ for $\max(1, i-d) \leq p \leq i$.

\item {\bf Grid}: In class {\sf grid}, each vertex $u \in V_1$ is connected to vertices $\{u+1,\ u-1,\ u+a,\ u-a,\ u+b,\ u-b, \ldots\}$ where $\{1,\ a,\ b, \ldots\}$ is a geometric progression. In our tests, we set the average degree of each node to 6.

\item {\bf Hexa}: In class {\sf hexa}, the vertices on each side are divided into $n/b$ blocks of size $b$. One random bipartite hexagon is added between each block $i$ on one side and each of the blocks $i + k$ on the other side, with $|k| \leq K$ for some $K$. The parameters $b$ and $K$ are chosen by the program in such a way that the average degree is correct (i.e., $3K/b = d$) but few pairs of hexagons have more than one vertex in common. In our tests, we set $d$ = 6.

\item {\bf Rope}: For the class {\sf rope}, the vertices on each side are grouped into $t = n/d$ blocks of size $d$, numbered $V_1^0 \ldots V_1^{t-1}$ and $V_2^0 \ldots V_2^{t-1}$. Block $i$ on one side is connected to block $i+1$ on the other side, for $i = 0, 1, \ldots, t-2$; block $V_1^{t-1}$ is connected to block $V_2^{t-1}$. Thus, the graph is a "rope" that is folded and twisted over itself, so that it zig-zags between the two sides, first up and then down. Consecutive
pairs of blocks along the ``rope'' are connected alternately by perfect matchings (``$m$-type arcs'') and random bipartite graphs of average degree $d-1$ (``$r$-type arcs''), beginning and ending with perfect matchings. The only maximum matching is a perfect one, consisting of all $m$-type arcs. In our tests, set $d=6$.

\item {\bf Zipf}: Each member of class {\sf zipf} is a random bipartite graph where the arc between the $i$-th $V_1$-node and the $j$-th $V_2$-node has nominal probability roughly proportional to $1/(ij)$. Thus the graph is denser near the ``core'' vertices (those with small index), and thins out slowly towards the ``periphery'' (vertices with high index). In our experiments we set $d = 6$.
\end{enumerate}


\begin{thebibliography}{99}

\bibitem{BIPM}
BIPM -- Bipartite Matching Codes, accessed January 2007. \\
\newblock {\tt http://www.cs.sunysb.edu/~algorith/implement/bipm/implement.shtml}.

\bibitem{Dimacs90}
The first {DIMACS} algorithm implementation challenge: {T}he core experiments, accessed January 2007.
\newblock {\tt http://dimacs.rutgers.edu/pub/netflow/general-info/}.

\bibitem{AhuMO93}
R.~K. Ahuja, T.~L. Magnanti, and J.~B. Orlin.
\newblock {\em Network Flows: Theory, Algorithms, and Applications}.
\newblock Prentice-Hall, 1993.

\bibitem{AhuOST94}
R.~K. Ahuja, J.~B. Orlin, C.~Stein, and R.~E. Tarjan.
\newblock Improved algorithms for bipartite network flow.
\newblock {\em SIAM Journal on Computing}, 23(5):906--933, 1994.

\bibitem{AltBMP91}
H. Alt, N. Blum, K. Mehlhorn, and M. Paul.
\newblock Computing a maximum cardinality matching in a bipartite graph in time \mbox{$O(n^{1.5}\sqrt{m/\log~n})$.}
\newblock {\em Information Processing Letters}, 37:237--240, 1991.

\bibitem{AndH02}
C. Anderson and D.~S. Hochbaum.
\newblock Implementations of the pseudoflow algorithm for maximum flow.
\newblock U. C. Berkeley manuscript, 2002.

\bibitem{Bro93}
A. Brodnik.
\newblock Computation of the least significant set bit.
\newblock In {\em Proceedings of the 2nd Electrotechnical and Computer Science Conference}, 31:7--10, Portoroz, Slovenia, 1993.
 
\bibitem{WebPS}
B.~G. Chandran and D.~S. Hochbaum.
\newblock Pseudoflow solver, accessed January 2007.
\newblock {\tt http://riot.ieor.berkeley.edu/riot/Applications/Pseudoflow/}.
 
\bibitem{ChaH07}
B.~G. Chandran and D.~S. Hochbaum.
\newblock A computational study of the pseudoflow and push-relabel algorithms for the maximum flow problem.
\newblock {\em {O}perations {R}esearch}, 57(2): 358--376, 2009.

\bibitem{Cha07}
B.~G. Chandran
\newblock {\it Implementations of the pseudoflow algorithm for maximum flow, bipartite matching, flows in unit capacity networks and parametric maximum flow}.
\newblock Ph.D. thesis, University of California, Berkeley, 2007.

\bibitem{CheM96}
J. Cheriyan and K. Mehlhorn.
\newblock Algorithms for dense graphs and networks on the random access computer.
\newblock {\em Algorithmica}, 15:521--549, 1996.

\bibitem{CheGMSS98}
\newblock B. V. Cherkassky, A. V. Goldberg, P. Martin, J. C. Setubal,
and J. Stolfi.
\newblock Augment or push? A computational study of bipartite matching and unit capacity maximum flow algorithms.
\newblock {\it ACM Journal of Experimental Algorithmics}, 3(8), 1998.

\bibitem{DerM89}
M.~Derigs and W.~Meier.
\newblock Implementing {G}oldberg's max-flow algorithm -- a computational investigation.
\newblock {\em {ZOR} -- Methods and models of Operations research}, 33:383--403, 1989.

\bibitem{Din70}
E.~A. Dinic.
\newblock Algorithm for the solution of a problem of maximal flow in networks with power estimation.
\newblock {\em Soviet Math. Doklady}, 11:1277--1280, 1970.

\bibitem{EveT75}
S.~Even and R.~E. Tarjan.
\newblock Network flow and testing graph connectivity.
\newblock {\em SIAM Journal of Computing}, 4:507--518, 1975.

\bibitem{FedM91}
T.~Feder and R.~Motwani.
\newblock Clique partitions, graph compression and speeding-up algorithms.
\newblock In {\em Proceedings of the $23^{rd}$ Annual ACM Symposium on Theory of Computing}, pages 123--133, 1991.

\bibitem{GolC97}
A.~V. Goldberg and B.~V. Cherkassky.
\newblock On implementing the push-relabel method for the maximum flow problem.
\newblock {\em Algorithmica}, 19:390--410, 1997.

\bibitem{GolT88}
A.~V. Goldberg and R.~E. Tarjan.
\newblock A new approach to the maximum flow problem.
\newblock {\em Journal of the ACM}, 35(4):921--940, October 1988.

\bibitem{Hoc97}
D.~S. Hochbaum.
\newblock The pseudoflow algorithm for the maximum flow problem.
\newblock Manuscript, U C Berkeley, 1997 (Revised 2002). Extended abstract in The pseudoflow algorithm and the pseudoflow-based simplex for the maximum flow problem. Procdeedings of IPCO 98, June 1998.  {\em Lecture Notes in Computer Science}, Bixby, Boyd and Rios-Mercado (Eds.), 1412, Springer, 325--337.

\bibitem{Hoc07}
D.~S. Hochbaum.
\newblock The pseudoflow algorithm: A new algorithm for the maximum flow problem.
\newblock {\em {O}perations {R}esearch}, 56(4):992--1009, 2008.

\bibitem{HopK73}
J.~Hopcroft and R.~Karp.
\newblock An $n^{5/2}$ algorithm for maximum matching in bipartite graphs.
\newblock {\em SIAM Journal of Computing}, 2(4):225--231, 1973.

\bibitem{Ken95}
\newblock R. Kennedy.
\newblock {\it Solving Unweighted and Weighted Bipartite Matching Problems.}
\newblock Ph.D. thesis, Stanford University, 1995.

\bibitem{Law76}
E.~Lawler.
\newblock {\em Combinatorial Optimization: Networks and matroids}.
\newblock Holt, Rinehart and Winston, 1976.

\bibitem{MucS04}
M.~Mucha and P.~Sankowski.
\newblock Maximum matchings via Gaussian elimination.
\newblock In {\em IEEE Symposium on Foundations of Computer Science (FOCS)}, pages 248--255, 2004.

\bibitem{Set93}
J.~C.~Setubal.
\newblock New experimental results for bipartite matching.
\newblock In {\em Proceedings of NETFLOW93},
\newblock pages 211-216.
\newblock Technical report 21/93, Dipartimento di Informatica, Universit\'a di Pisa, 1993.

\bibitem{Set96}
J.~C. Setubal
\newblock Sequential and parallel experimental results with bipartite matching algorithms.
\newblock Technical report IC-96-09, Institute of Computing, University of Campinas, Brazil, 1996.

\end{thebibliography}
\end{document}